\newtheorem{theorem}{Theorem}
\newtheorem{proposition}{Proposition}
\newtheorem{lemma}{Lemma}
\newtheorem{corollary}{Corollary}
\newtheorem{definition}{Definition}
\newcommand{\mc}{\mathcal}
\newcommand{\mbb}{\mathbb}
\newcommand{\mr}{\mathrm}
\newcommand{\ad}{\mathrm{ad}}
\newcommand{\tr}{\mathrm{Tr}}
\newcommand{\chicago}{Pritzker School of Molecular Engineering, The University of Chicago, Illinois 60637, USA}
\begin{document}

\title{Simple and high-precision Hamiltonian simulation by compensating Trotter error with linear combination of unitary operations}

\begin{abstract}
Trotter and linear-combination-of-unitary (LCU) are two popular Hamiltonian simulation methods. 
The Trotter method is easy to implement and enjoys good system-size dependence endowed by commutator scaling, while the LCU method admits high accuracy simulation with a smaller gate cost.
We propose Hamiltonian simulation algorithms using LCU to compensate Trotter error, which enjoy both of their advantages. 
By adding few gates after the $K$th-order Trotter formula, we realize a better time scaling than 2$K$th-order Trotter.  
Our first algorithm exponentially improves the accuracy scaling of the $K$th-order Trotter formula. 
For a generic Hamiltonian, the estimated gate counts of the first algorithm can be 2 orders of magnitude smaller than the best analytical bound of fourth-order Trotter formula.
In the second algorithm, we consider the detailed structure of Hamiltonians and construct LCU for Trotter errors with commutator scaling. 
Consequently, for lattice Hamiltonians, the algorithm enjoys almost linear system-size dependence and quadratically improves the accuracy of the $K$th-order Trotter.
For the lattice system, the second algorithm can achieve 3 to 4 orders of magnitude higher accuracy with the same gate costs as the optimal Trotter algorithm.
These algorithms provide an easy-to-implement approach to achieve a low-cost and high-precision Hamiltonian simulation.
\end{abstract}
\date{\today}
            
\author{Pei Zeng}  
\email{peizeng@uchicago.edu}
\affiliation{\chicago}

\author{Jinzhao Sun}
\email{jinzhao.sun.phys@gmail.com}
\affiliation{Clarendon Laboratory, University of Oxford, Parks Road, Oxford OX1 3PU, United Kingdom}
\affiliation{Blackett Laboratory, Imperial College London, London SW7 2AZ, United Kingdom}

 \author{Liang Jiang}
\email{liang.jiang@uchicago.edu}
\affiliation{\chicago}

\author{Qi Zhao}
\email{zhaoqi@cs.hku.hk}
\affiliation{QICI Quantum Information and Computation Initiative, Department of Computer Science, Department of Computer Science, The University of Hong Kong, Pokfulam Road, Hong Kong}

\maketitle

\section{INTRODUCTION} \label{sec:intro}

Hamiltonian simulation, i.e., to simulate the real-time evolution $U(t)=e^{-iHt}$ of a physical Hamiltonian $H=\sum_l H_l$, is considered to be a natural and powerful application of quantum computing~\cite{feynman1982simulating}. 
It can also be used as an important subroutine in many other quantum algorithms like ground-state preparation~\cite{Lloyd1999ground,Alan05Science}, optimization problems~\cite{farhi2014quantum,zhou2020QAOA}, and quantum linear solvers~\cite{HHL}.
To pursue real-world applications of Hamiltonian simulation with near-term quantum devices, we need to design feasible algorithms with small space complexity (i.e., qubit number) and time complexity (i.e., circuit depth and gate number).

One of the most natural Hamiltonian simulation methods is based on Trotter formulas~\cite{lloyd1996universal,suzuki1990fractal,suzuki1991general,berry2007efficient,campbell2019random,childs2019fasterquantum,childs2019nearly,PhysRevA.99.012334,heyl2019quantum,chen2020quantum,sahinoglu2020hamiltonian,su2020nearly,Tran_2020,childs2021theory,layden2021first,PhysRevLett.129.270502}, which approximate the real-time evolution of $H=\sum_{l=1}^L H_l$ by 
the product of the simple evolution of its summands $e^{-iH_lt}$. 
Besides its prominent advantage of simple realization without ancillas, Trotter methods are recently rigorously shown to enjoy commutator scaling~\cite{childs2019nearly,childs2021theory}, i.e., the Trotter error is only related to the nested commutators of the Hamiltonian summands $\{H_l\}$. 
This is very helpful for the Hamiltonians with strong locality constraints. 
For example, when we consider $n$-qubit lattice Hamiltonians, the gate cost of high-order Trotter methods is almost linear to the system size $n$, which is nearly optimal~\cite{childs2019nearly}. 
The major drawback of the Trotter methods is its polynomial gate cost to the inversed accuracy $1/\varepsilon$, $\rm Poly(1/\varepsilon)$. 
This is unfavorable in many applications where high-precision simulation is demanded to obtain practical advantages over the existing classical algorithms~\cite{markus2017elucidating}.

In recent years, we have seen the developments of ``post-Trotter'' algorithms with exponentially improved accuracy dependence~\cite{berry2014exponential,berry2015simulating,Berry15optimal,low2019Hamiltonian,low2017optimal,low2019STOC}. 
Due to the smart choice of the expansion formulas (i.e., Taylor series~\cite{berry2015simulating} or Jacobi-Anger expansion~\cite{low2017optimal}), these post-Trotter methods are able to catch the dominant terms in the time evolution operator $U(t)$ with polynomially increasing gate resources, leading to a logarithmic gate-number dependence on the accuracy requirement $1/\varepsilon$. 
Unlike Trotter methods, these advanced algorithms are not able to utilize the specific structure of Hamiltonians due to the lack of commutator-based error form.
Consequently, for instance, for $n$-qubit lattice Hamiltonians, their gate complexities are $\mc{O}(n^2)$, which is worse than those in Trotter algorithms $\mc{O}(n^{1+o(1)})$.
Furthermore, these post-Trotter algorithms require the implementation of linear combination of unitary (LCU) formulas~\cite{childs2012Hamiltonian,long2011} or block encoding of Hamiltonians~\cite{low2019Hamiltonian} which often costs many ancillary qubits and multicontrolled Toffoli gates. 
This is still experimentally challenging in a near-term or early fault-tolerant quantum computer~\cite{lin2021heisenberglimited}.
To reduce the hardware requirement of compiling the LCU formulas, recent studies focus on a random-sampling implementation of a LCU formula~\cite{childs2012Hamiltonian,yang2021accelerated,wan2021randomized,faehrmann2021randomizing}, where the elementary unitaries are sampled to realize the LCU formula statistically. 
In this case, the Hamiltonian simulation is not performed by coherently implementing the unitary $U=e^{-iHt}$, but is instead realized through random sampling. This method remains effective for common applications, such as estimating the properties of the final state.
Similar ideas have also been studied in the ground-state preparation algorithms~\cite{lin2020nearoptimalground,zeng2021universal,zhang2022computingground}.

Here, we propose composite algorithms that combine the inherent advantages of Trotter and LCU methods---easy implementation, high precision, and commutator scaling---by performing the Trotter method and then compensating the Trotter error with the LCU formulas we construct. 
We primarily focus on the random-sampling implementation of the LCU formula~\cite{childs2012Hamiltonian,yang2021accelerated,wan2021randomized,faehrmann2021randomizing}, with the goal of estimating the properties of the target state after real-time evolution.
We demonstrate that optimal performance can be achieved by allowing the Trotter circuit to handle the majority of the simulation, with the LCU method completing the remainder.

In \autoref{sec:summary}, we provide a summary of our construction and results, aimed at a general audience. 
We explain the key ideas behind the constructions with intuitive examples. 
For readers interested in the technical aspects, we introduce the necessary preliminary knowledge of Hamiltonian simulation in \autoref{sec:preliminaries} to facilitate understanding of the technical results. Next, in \autoref{sec:PTSC} and \autoref{sec:NCC}, we present a detailed construction and gate complexity analysis of the two Trotter-LCU algorithms. Finally, in \autoref{sec:conclusion}, we conclude our discussion and outline possible future directions.

\section{SUMMARY of RESULTS} \label{sec:summary}

\subsection{General idea}

The major idea of the proposed Trotter-LCU algorithm is illustrated in \autoref{fig:illustration}. In a normal $K$th-order Trotter circuit, we decompose the time evolution $U(t)=e^{-iHt}$ to $\nu$ segments, each with a small evolution time $x=t/\nu$. For consistency, we denote the $0$th-order Trotter formula as $S_0(x)=I$. After we perform the $K$th-order Trotter formula $S_K(x) (K=0,1,2k, k\in\mbb{N}_+)$, there will be a remaining Trotter error $V_K(x):=U(x)S_K(x)^\dag$, which affects the simulation accuracy. To address this problem, we introduce a random LCU formula to compensate the Trotter error using one ancilla and simple gates, which achieves a high-precision Hamiltonian simulation with low cost.

\begin{figure}[htbp]
\centering
\includegraphics[width=\linewidth]{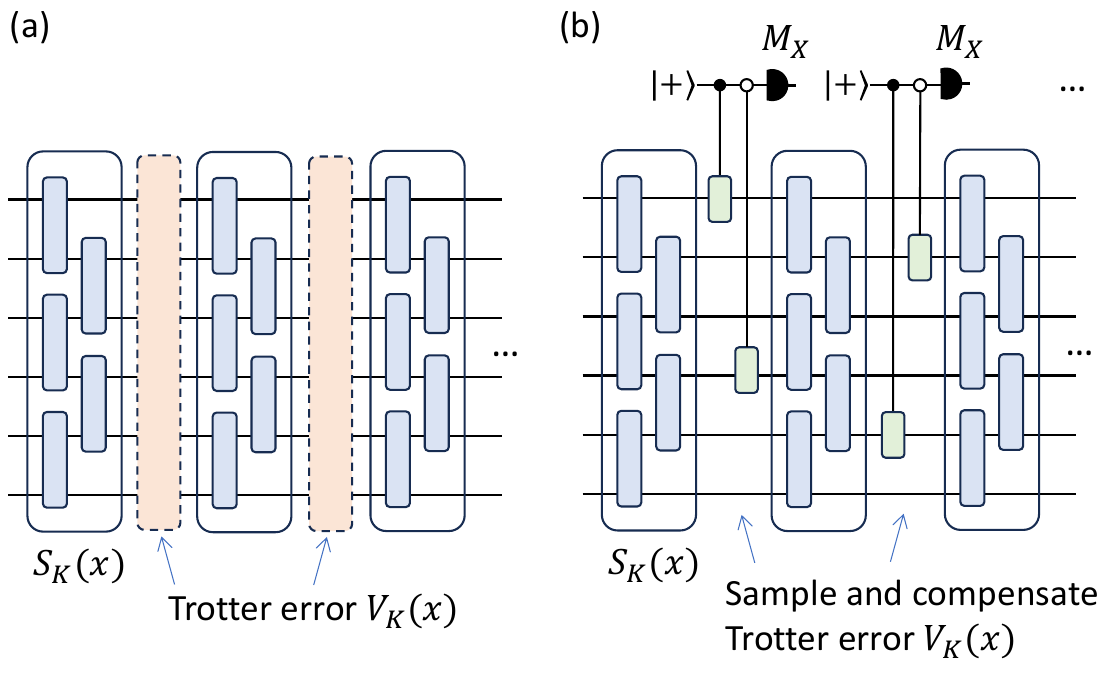}
\caption{(a) In the normal $K$th-order Trotter circuit, there will be a remaining Trotter error $V_K(x)$ after each segment. (b) We introduce random LCU formulas to compensate $V_K(x)$ with a single ancilla qubit and simple gates.}
\label{fig:illustration}
\end{figure}

Consider the following LCU formula of an operator $V$,
\begin{equation} \label{eq:LCUmain}
\tilde{V} = \mu \sum_{i=0}^{\Gamma-1} \Pr(i) V_i,
\end{equation}
such that the spectral norm distance $\|V-\tilde{V}\|\leq \varepsilon$. Here, $\mu>0$ is the $1$-norm (i.e., $l_1$-norm) of the coefficient vector, $\Pr(i)$ is a probability distribution over different unitaries $V_i$, and $\{V_i\}_{i=0}^{\Gamma-1}$ is a set of unitaries.
There are usually two ways to implement the LCU formula: the coherent implementation~\cite{berry2014exponential,berry2015simulating} and the random-sampling implementation~\cite{yang2021accelerated,wan2021randomized,faehrmann2021randomizing}.
Our major focus is on the random-sampling implementation, where we can estimate the properties of the target state $\rho=U(t) \rho_0 U(t)^\dag$ with only one ancillary qubit. In Appendix~\ref{sec:AppCoherent}, we discuss the potential use of the coherent implementation of our algorithm.
In the random-sampling implementation, we can use \autoref{eq:LCUmain} to estimate an arbitrary observable value $O$ on $\rho$,
\begin{equation}
\tr(O\rho) \approx \mu^2 \sum_{i,j} p_i p_j \tr(O V_i \rho_0 V_j^\dag). 
\end{equation}
As is shown in \autoref{fig:TrotterLCUidea}(b), since the estimation of $\tr(O V_i \rho_0 V_j^\dag)$ can be implemented using Hadamard-test-type circuits~\cite{kitaev1995quantum}, we only need to sample $V_i$ and $V_j$ based on the LCU formula in \autoref{eq:LCUmain} to estimate $\tr(O\rho)$ with $\varepsilon$ accuracy using $\mc \mc{O}(\mu^4/\varepsilon^2)$ sampling resource, which owns an extra $\mu^4$ overhead compared to the normal Hamiltonian simulation algorithms~\cite{faehrmann2021randomizing}. To make the algorithm efficient, we need to set $\mu$ to be a constant.
We also provide a variant in \autoref{fig:TrotterLCUidea}(c) where the ancillary qubit is measured and reset in each segment, which is equivalent to \autoref{fig:TrotterLCUidea}(b) for the observable estimation. In this case, the expectation value of $\mu^2 X^{(tot)}_A \otimes O_S$ provides an unbiased estimation of $\tr(O \rho)$ where $X^{(tot)}:=\bigotimes_{k=1}^\nu X_k$ is the multiplication of all the ancillary measurement values. This variant reduces the need to store the ancilla qubit, simplifying the implementation on a fault-tolerant quantum computer.

\begin{figure}[htbp]
\centering
\includegraphics[width=\linewidth]{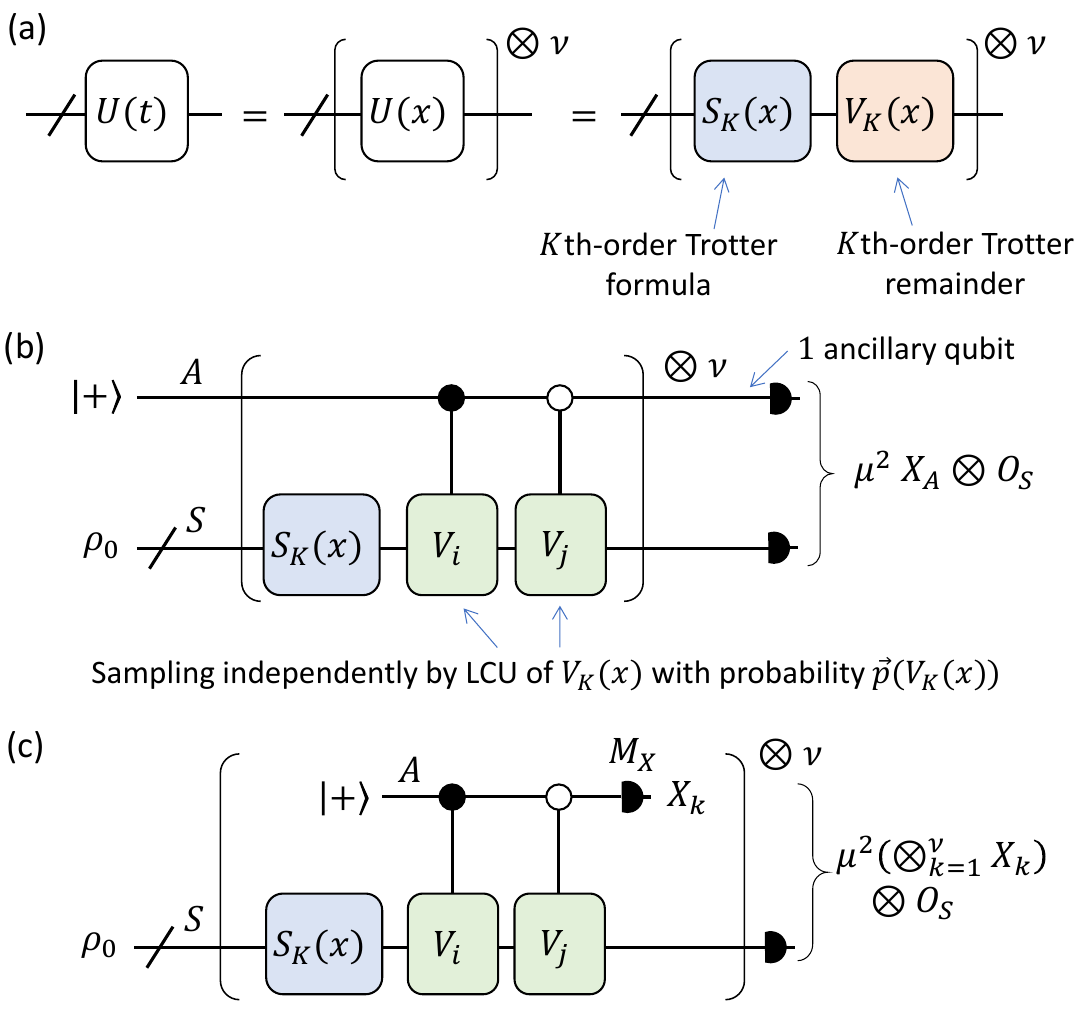}
\caption{(a) In the $K$th-order Trotter-LCU algorithm, we first implement $K$th-order Trotter formula, then compensate the remainder using the LCU formulas we construct. (b) Random-sampling implementation of the LCU formula. We sample the elementary unitaries $V_i$ and $V_j$ independently based on the LCU formula of $V_K(x)$ and implement the controlled $V_i$ and $V_j$ gate. Then the Trotter remainder $V_K(x)$ can be compensated in a Hadamard-test type circuit. (c) A variant of the implementation where the ancillary qubit is measured and reset in each segment. The detailed sampling procedure of $V_i$ and $V_j$ is shown in \autoref{fig:SamplingPTS} and \autoref{fig:SamplingNC}.
}
\label{fig:TrotterLCUidea}
\end{figure}

The construction of an appropriate LCU formula for the $K$th-order Trotter remainder, $V_K(x)$, is crucial for developing an efficient Hamiltonian simulation algorithm. 
In the following subsections, we briefly introduce two approaches for constructing the LCU formula for $V_K(x)$. The resulting composite Trotter-LCU algorithms are referred to as paired Taylor-series compensation (PTSC) and nested-commutator compensation (NCC), respectively. 
Detailed analysis and performance proofs for these two algorithms can be found in \autoref{sec:PTSC} and \autoref{sec:NCC}.

\subsection{Paired Taylor-series compensation: overview}

Without loss of generality, we focus on the case of an $n$-qubit Hamiltonian $H$, which can be written as
\begin{equation} \label{eq:H}
H = \sum_{l=1}^L H_l = \sum_{l=1}^L \alpha_l P_l = \lambda \sum_{l=1}^L p_l P_l, 
\end{equation}
where $\{P_l\}_l$ are different $n$-qubit Pauli operators. We set all the coefficients $\{\alpha_l\}_l$ to be positive and absorb the signs into Pauli operators $\{P_l\}_l$. 
$\lambda := \sum_l \alpha_l$ is the $l_1$-norm of the Hamiltonian coefficient vector.
We consider the Hamiltonians where $\lambda$ increases polynomially with respect to $n$.

\begin{figure*}[ht]
\centering
\includegraphics[width=0.85\linewidth]{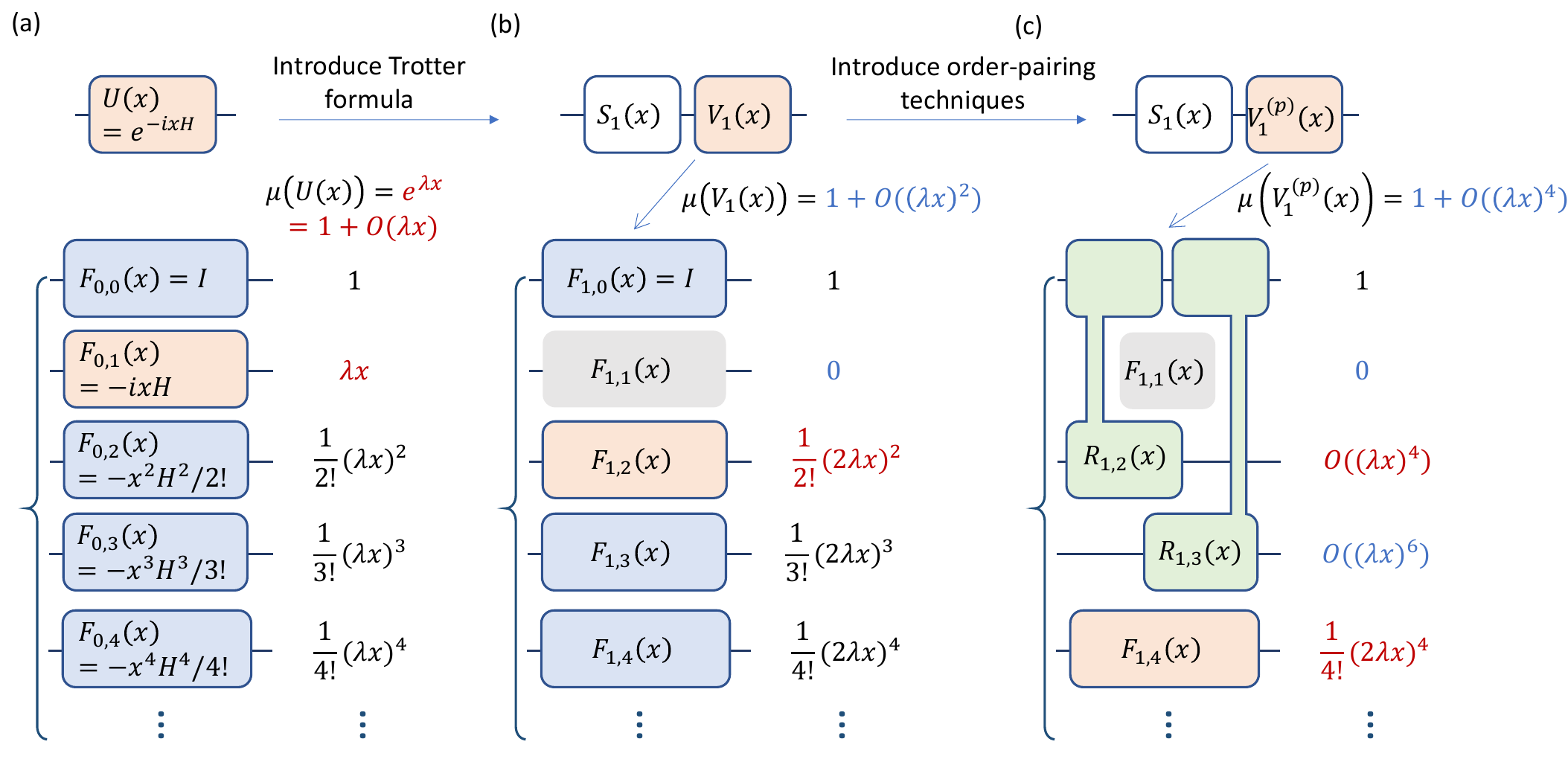}
\caption{Illustration of the idea of paired Taylor-series compensation (PTSC) algorithm. We take the first-order algorithm as an example. (a) Taylor-series expansion of the small-time evolution $U(x)$. The $1$-norm is $\lambda(U(x))=e^{\lambda x}=1+\mc{O}(\lambda x)$. The dominant term is contributed from the first-order expansion $F_{0,1}(x)$. (b) By introducing the first-order Trotter formula $S_1(x)$, the first-order expansion $F_{1,1}$ in the Trotter remainder $V_1(x)$ becomes $0$. As a result, the $1$-norm of $V_1(x)$ is suppressed to $1+\mc{O}((\lambda x)^2)$, limited by the second-order expansion term $F_{1,2}(x)$. (c) By further noticing that $F_{1,2}$ and $F_{1,3}$ are anti-Hermitian, we introduce Euler's formula in \autoref{eq:Euler} to pair the leading-order terms to $F_{1,0}:=I$. This will double the order of $\mu_x-1$ and further suppress the $1$-norm $\mu$ of the overall evolution $U(t)$.}
\label{fig:PairedTSidea}
\end{figure*}

We first consider to construct the LCU formulas for $V_K(x)$ from Taylor-series expansion~\cite{berry2015simulating}.
In the $0$th-order case, when no Trotter formula is introduced, the Trotter remainder $V_0(x)$ is the short-time evolution $U(x)$ itself, which can be expanded as
\begin{equation} \label{eq:DTS0th}
\begin{aligned}
V_0(x)&= e^{-ixH} = \sum_{s=0}^\infty F_{0,s}(x) =\sum_{s=0}^\infty \frac{(-ix)^s}{s!} H^s \\
&= e^{\lambda x} \sum_{s=0}^\infty \mr{Poi}(s;\lambda x) \sum_{l_1,...,l_s} p_{l_1}...p_{l_s} P_{l_1}...P_{l_s}.
\end{aligned}
\end{equation}
Here, $\mr{Poi}(s;a):=e^{-a} a^s/s!$ is the Poisson distribution. 
$F_{0,s}(x)$ denotes the $s$-order expansion term.
\autoref{eq:DTS0th} illustrated in \autoref{fig:PairedTSidea}(a) is a LCU formula of $V_0(x)$ with $1$-norm $\mu_x=e^{\lambda x}$. The $1$-norm of the overall evolution $U(t)=U(x)^\nu$ is then $\mu= e^{\lambda t}$, which, unfortunately, grows exponentially with respect to $t$ regardless of how much we increase the segment number $\nu$. This implies that the direct random-sampling implementation of the LCU formula $V_0(x)$ in \autoref{eq:DTS0th} is not feasible. In Ref.~\cite{berry2015simulating}, the authors discuss the coherent implementation of $V_0(x)$ instead and find that one can achieve good time and accuracy dependence in that scenario. 

When focusing on the random-sampling implementation, we need to suppress the $1$-norm $\mu_x$ of each segment.
To this end, we first consider the usage of the Trotter formula. For example, if we apply first-order Trotter formula $S_1(x)=\prod_{l=1}^L e^{-ixH_l}$ in each segment, the first-order remainder $V_1(x):=U(x)S_1(x)^\dag$ can be expanded as
\begin{equation} \label{eq:DTS1st}
\begin{aligned}
V_1(x)&= e^{-ixH}\prod_{l=L}^1 e^{ixH_l} = \sum_{s=0}^\infty F_{1,s}(x) \\
&=\sum_{r;r_1,r_2,...,r_L} \frac{(-ix)^r}{r!} H^r \prod_{l=L}^1 \frac{(ix)^{r_l}}{r_l!} H^{r_l}, \\
\end{aligned}
\end{equation}
where $\{r_1,...,r_L\}$ 
denotes $L$ expansion variables related to the Trotter formula and $s:=r+\sum_{l=1}^L r_l$. $F_{1,s}$ denotes the $s$-order expansion term of $V_1(x)$. 
The $1$-norm of $V_1(x)$ in \autoref{eq:DTS1st} is $e^{2\lambda x}$, which seems to be even larger than the $0$th-order case. However, since $V_1(x)$ denotes the (multiplicative) Trotter error, we have $F_{1,1}(x) = 0$. Using this condition, we can rewrite $V_1(x)$ as
\begin{equation} \label{eq:DTS1stNew}
V_1(x)= I + \sum_{s=2}^\infty F_{1,s}(x),
\end{equation}
as illustrated in \autoref{fig:PairedTSidea}.
From the Taylor-series expansion, we can bound the $1$-norm of the new LCU formula in \autoref{eq:DTS1stNew} by $\mu_x=e^{2\lambda x} - (2\lambda x)\leq e^{(2\lambda x)^2}$. In this way, we reduce $\mu_x$ from $1+\mc{O}(\lambda x)$ to $1+\mc{O}((\lambda x)^2)$. The $1$-norm of the overall time evolution becomes $\mu=\mu_x^\nu \leq \exp((2\lambda t)^2/\nu)$. 
As a result, by increasing the segment number $\nu$ — or equivalently reducing the unit evolution time $x$ — we can decrease the 1-norm $\mu$, leading to a lower sampling cost. If we choose the segment number as $\nu = \mathcal{O}((\lambda t)^2)$, $\mu$ will remain constant.

From the above discussion, it is clear that, to reduce the 1-norm $\mu$ of the overall LCU formula for $U = e^{-iHt}$, our main objective is to suppress the leading-order term of the $1$-norm remainder $\mu_x - 1$ for each segment, which determines the number of segments $\nu$ and hence circuit depth when $\mu$ is set to be a constant.

We can further reduce the $1$-norm $\mu_x$ of $V_1(x)$ by taking advantage of the structure of Trotter errors. 
For an anti-Hermian Pauli operator $\pm iP$ where $P\in\{I,X,Y,Z\}^{\otimes n}$, we have the following Euler's formula, 
\begin{equation} \label{eq:Euler}
I \pm iyP = \sqrt{1+y^2} e^{\pm i\theta P}.
\end{equation}
Here, $\theta=\tan^{-1}(y)$ and we suppose $0<y<1$. The 1-norm of the left-hand side of \autoref{eq:Euler} is $1+y$, while the 1-norm of the right-hand side is $\sqrt{1 + y^2} < 1 + \frac{y^2}{2} = 1 + \mc{O}(y^2)$. As a result, the exponent of $\mu_x-1$ is effectively doubled.
In \autoref{ssc:pairedTS1}, we prove that the expansion terms $F_{1,2}$ and $F_{1,3}$ in the LCU formula \autoref{eq:DTS1st} are anti-Hermitian. 
As a result, we can further suppress $\mu_x$ by pairing the terms in $F_{1,s}(x)$ with $F_{1,0}=I$ using Euler's formula in \autoref{eq:Euler}. 
When $y=x^s$, the paired formula $R_{1,s}(x)$ as a summation of Pauli rotation unitaries owns the $1$-norm of $\mu_x=1+\mc{O}(x^{2s})$, whose $x$ dependence is doubled, as illustrated in\autoref{fig:PairedTSidea}(c).

To generalize the discussion, for the $K$th-order Trotter error $V_K(x)$ ($K=2k$, $k\in\mbb{N}_+$), we have $F_{K,1}=...F_{K,K}=0$~\cite{suzuki1990fractal,childs2021theory}. Moreover, we can show that $F_{K,K+1}$, $F_{K,K+2}$,..., $F_{K,2K+1}$ are anti-Hermitian. We call the term with $s=K+1$ to $2K+1$ the leading-order terms. The algorithm utilizing $K$th-order Trotter formula and the paired idea in the LCU construction is called the $K$th-order PTSC algorithm. 
We provide the detailed algorithm description and gate-complexity analysis in \autoref{sec:PTSC}.

The PTSC algorithm is generic for the $L$-sparse Hamiltonian $H=\sum_{l=1}^L H_l = \lambda \sum_{l=1}^L p_l P_l$ with $\lambda=\sum_l\|H_l\|$ and $\{P_l\}$ are Pauli matrices. 
It can be implemented using a simple and universal classical random-sampling procedure: first, we sample the order $s$ from the Taylor-series expansion, and then we sample a Pauli string based on the Hamiltonian coefficients.
With the random-sampling implementation, we prove that by appending few gates after the $K$th-order Trotter formula with only one ancillary qubit, one can improve the time scaling from $1+1/K$ to $1+1/(2K+1)$ and exponentially improves the accuracy scaling of the $K$th-order Trotter formula compared to the $K$th-order Trotter. We have the following theorem.

\textbf{Theorem 1} (Informal, see \autoref{thm:PTS} in \autoref{sec:PTSC}) 
In a $K$th-order paired Taylor-series compensation algorithm ($K=0,1$ or $2k$, $k\in\mbb{N}_+$), the gate complexity in a single round is $\mc{O}\left((\lambda t)^{1+ \frac{1}{2K+1}}(\kappa_K L + \frac{\log(1/\varepsilon)}{\log\log(1/\varepsilon)} ) \right)$, where $\lambda=\sum_l\|H_l\|$, $\kappa_K=K$ when $K=0$ or $1$, $\kappa_K=2\times 5^{K/2-1}$ otherwise. 

From Theorem~1, we observe that setting $K = 0$, i.e., not using Trotter formulas, still yields a valid PTSC algorithm by pairing $F_{0,1}$ with $I$. In this case, the gate complexity is independent of the sparsity $L$ but quadratically dependent on $t$, similar to the algorithm in Ref.~\cite{wan2021randomized}. Conversely, when using a $K$th-order Trotter formula, the PTSC algorithm becomes $L$ dependent with an almost linear dependence on $t$. In both cases, the PTSC algorithms achieve high simulation accuracy $\varepsilon$. We expect the $0$th-order algorithm to be particularly useful for quantum chemistry Hamiltonians with large $L$, while higher-order algorithms are better suited for generic $L$-sparse Hamiltonians with long simulation times $t$.

\subsection{Nested-commutator compensation: overview}

The PTSC algorithms above are generic and applicable to any Hamiltonian. When we consider the detailed structure of Hamiltonians, we could make the compensation algorithms more efficient by taking advantage of the commutation relationship of the terms in the Hamiltonians, which was formerly also studied in the Trotter algorithms~\cite{childs2021theory}. 

We will take the first-order Trotter remainder $V_1(x)$ as an illustrative example. Following the Taylor-series expansion in \autoref{eq:DTS1st}, the second-order term $F_{1,2}(x)$ in $V_1(x)$ can be written as
\begin{equation} \label{eq:F12xExp}
\begin{aligned}
F_{1,2}(x) &= \sum_{\substack{r;\vec{r}\\ r+\sum \vec{r} = 2}} \frac{(-ix H)^r}{r!} \prod_{l=L}^1 \frac{(ix H_l)^{r_l}}{r_l!}.
\end{aligned}
\end{equation}
Since $F_{1,2}(x)$ is anti-Hermitian, all the Hermitian expansion terms in \autoref{eq:F12xExp} will cancel out. 
We can then simplify $F_{1,2}(x)$ as,
\begin{equation} \label{eq:F12xComm}
\begin{aligned}
F_{1,2}(x) 
&= (ix)^2 \left( \sum_{l=1}^L \frac{H_l^2}{2!} + \sum_{l,l':l>l'} H_l H_{l'} \right) \\
&\quad + (-ixH)\sum_{l=1}^L (ixH_l) + \frac{(-ixH)^2}{2!} \\
&= \frac{x^2}{2} \sum_{l,l':l>l'} [H_{l'}, H_l],
\end{aligned}
\end{equation}
which is a summation of $L(L-1)$ commutators. Since the commutators of Hermitian operators are always anti-Hermitian, this implies that the nested-commutator expansion of $F_{1,2}(x)$ in \autoref{eq:F12xComm} is compact enough since there is no Hermitian expansion terms in it.

For a common physical Hamiltonian with locality constraints, we can take advantage of the commutator-form expression like \autoref{eq:F12xComm}. For example, for an $n$-qubit lattice Hamiltonian with the form,
\begin{equation} \label{eq:HLattice}
H = \sum_{j=0}^{n-1} H_{j,j+1},
\end{equation}
where the summand $H_{j,j+1}$ acts on the $j$th and $(j+1)$th vertices. 
We can split the Hamiltonian to two components $H=A+B$ where 
$A := \sum_{j:\mr{even}} H_{j,j+1}, \quad B:= \sum_{j:\mr{odd}} H_{j,j+1}$ ,
so that the summands $H_{j,j+1}$ commute with each other in each component. We denote the norm of each Hamiltonian summand as 
\begin{equation} \label{eq:LatticeHnormbound}
\begin{aligned}
\Lambda = \max_j \| H_{j,j+1} \|, \quad \Lambda_1 &= \max_j \| H_{j,j+1} \|_1. 
\end{aligned}
\end{equation}

Now, suppose we estimate the $1$-norm of $F_{1,2}(x)$ of the lattice Hamiltonian based on \autoref{eq:F12xComm}, we can see that there are only $n$ nonzero terms: for any given Hamiltonian component $H_{j,j+1}$, only $H_{j-i,j}$ and $H_{j+1,j+2}$ do not commute with it. Then, the norm of $F_{1,2}(x)$ is bounded by
\begin{equation}
\|F_{1,2}(x)\|_1 \leq n\Lambda_1\frac{x^2}{2} = \mc{O}(n x^2).
\end{equation}
Comparing with the original bound in \autoref{eq:muF1sx}, $\|F_{1,2}(x)\|_1\leq \eta_2 = \mc{O}((\lambda x)^2) = \mc{O}(n^2 x^2)$, we improve the system-size-related factor $n$. 
The improved system-size $n$ dependence of the 1-norm $\|F_{1,2}(x)\|_1$ suggests a corresponding improvement in the system-size dependence of the gate complexity for the nested commutator algorithm.

\begin{figure*}[ht]
\centering
\includegraphics[width=0.85\linewidth]{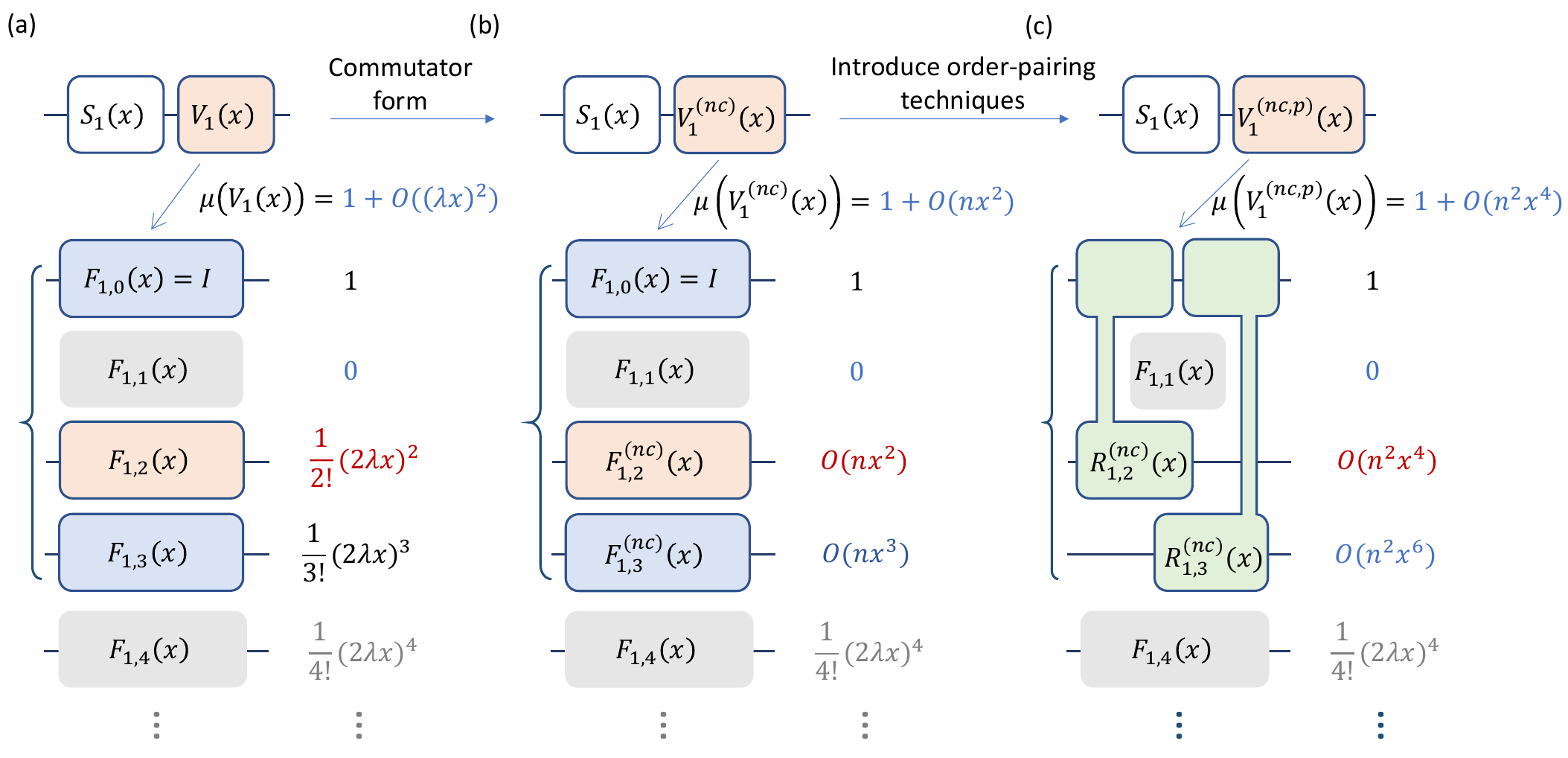}
\caption{Illustration of the idea of nested-commutator compensation (NCC) algorithm. We take the first-order algorithm $K=1$ as an example. (a) The Taylor-series expansion of the first-order Trotter remainder. We consider a truncation such that the higher-order terms with order $s\geq 2K+2$ will not be compensated. (b) We derive the nested-commutator form for the leading-order expansion terms $F_{1,2}^{(nc)}(x)$ and $F_{1,3}^{(nc)}(x)$, which own better system-size dependence. (c) Since $F^{(nc)}_{1,2}$ and $F^{(nc)}_{1,3}$ are anti-Hermitian, we introduce Euler's formula in \autoref{eq:Euler} to pair the leading-order terms to $F_{1,0}:=I$. This will double the order of $\mu_x-1$ and further suppress the $1$-norm $\mu$ of the overall evolution $U(t)$.}
\label{fig:NCidea}
\end{figure*}

Now, we are going to generalize the idea above. In \autoref{sec:NCC}, we show how to expand the second- and third-order terms of $V_1(x)$ as a summation of nested commutators,
\begin{equation} \label{eq:F1sxNCGen}
\begin{aligned}
F_{1,s}(x) &= \frac{x^s}{s!} C_{s-1} = \frac{(-ix)^s}{s!} \Big( \mc{A}_{\mr{com}}^{(s-1)}(H_L,...,H_1;H) \\
&\, - \sum_{l=1}^L \mc{A}_{\mr{com}}^{(s-1)}(H_L,...,H_{l+1};H_l) \Big), \quad s=2,3,
\end{aligned}
\end{equation}
where $\mc{A}_{\mr{com}}^{(s)}(A_L,...,A_1;B)$ is defined to be
\begin{equation} \label{eq:DefAcoms}
\begin{aligned}
\sum_{m_1+...+m_L=s} \binom{s}{m_1,...,m_L} \ad_{A_L}^{m_L} ... \ad_{A_1}^{m_1} B.
\end{aligned}
\end{equation}
We also use the adjoint notation $\ad_{A_L} ... \ad_{A_1} B:= [A_L, ...[ A_1, B]...]$. It is easy to check that the form of $F_{1,s}(x)$ ($s=2$ or $3$) in \autoref{eq:F1sxNCGen} is anti-Hermitian, which is consistent with the discussion in the paired Taylor-series algorithm in the previous section.
We can also generalize the method to the case of $K$th-order Trotter remainder, that is, to express the expansion terms $F_{K,s}$ of the $K$th-order Trotter remainder based on the nested commutators.

Based on the nested-commutator expansion, we propose the $K$th-order nested-commutator compensation (NCC) algorithm. 
As is illustrated in \autoref{fig:NCidea}, in the construction of NCC, we first utilize the nested-commutator forms of Trotter error terms from $K+1$ to $2K+1$ order, i.e., the leading-order terms; then we apply the order-pairing techniques similar to PTSC in \autoref{fig:PairedTSidea}(c) to further suppress the $1$-norm of $\mu_x$.
A key difference between NCC and PTSC algorithms is that in PTSC algorithms, we compensate the Trotter error $V_K(x)$ up to arbitrary order; while in NCC algorithms, we only compensate $V_K(x)$ for leading-order terms, which shrinks the error from $\mc \mc{O}(x^{K+1})$ to $\mc{O}(x^{2K+2})$ in one slice with the sampling cost $\mu=1+\mc{O}(  \frac{\|C_{K}\|_1}{(K+1)!} x^{2K+2})$.
The gate complexity estimation is then converted to the calculation of the $1$-norm of the commutator $\|C_K\|_1$. 
For instance, if we consider the $n$-qubit lattice Hamiltonian models in \autoref{eq:HLattice}, then we can prove that $\|C_{K}\|_1=\mc{O}(n)$.
We can then provide the following performance guarantee for the NCC algorithms.

\textbf{Theorem 2} (Informal, see \autoref{thm:NC} in \autoref{sec:NCC}) In a $K$th-order nested-commutator compensation (NCC) algorithm ($K=1$ or $2k$) with $n$-qubit lattice Hamiltonians, the gate complexity in a single round is $\mc{O}( n^{1+\frac{2}{2K+1}} t^{1+\frac{1}{2K+1}} \varepsilon^{-\frac{1}{2K+1}} )$. 

Compared to the performance of $K$th-order Trotter algorithm $\mc{O}((nt)^{1+1/K}\varepsilon^{-1/K})$~\cite{childs2019nearly}, we achieve $t$- and $\varepsilon$-dependence better than $2K$th-order Trotter using only $K$th-order Trotter formula with simple compensation gates of Pauli-rotation operators.
To generalize the result in Theorem~2, we also study the performance of Nested Commutator (NCC) algorithms when applied to a general Hamiltonian in \autoref{ssc:NestedCommSampling}.

\subsection{ Efficient random-sampling implementation }

A simple implementation of the Trotter-LCU algorithm in \autoref{fig:TrotterLCUidea}(b) or \autoref{fig:TrotterLCUidea}(c)  requires not only an easy-to-implement quantum circuit but also efficient classical random sampling of Pauli operators from the Trotter remainder $V_K(x)$. 
We now briefly discuss how to realize an efficient classical random sampling in PTSC and NCC algorithms, that is, with a space resource of $\mc{O}(\kappa_K K)$ and time resource of $\mc{O}(K(\log{L} + \log\kappa_K))$ where $L$ is the sparsity of the Hamiltonian.

A key idea for achieving efficient sampling is to use a multistage hierarchical sampling algorithm. Rather than fully expanding the Trotter remainder into a direct summation of Pauli operators, we structure the LCU formula into multiple layers. This allowss us to decompose the overall Pauli operator sampling process into a series of simpler, more manageable sampling tasks.

In the PTSC algorithm, the Trotter remainder in \autoref{eq:DTS1st} is derived by expanding the time-evolution $e^{-ix H}$ and each Hamiltonian summand term $e^{ix H_l}$ by Taylor series independently. As a result, the sampling can be done by first sampling the overall expansion order $s$, then sample the individual expansion order $r$ of Hamiltonian $H$ or the expansion order $\vec{r}:=[r_1,r_2,...,r_L]$ of the summands $\{H_l\}$. The sampling of $r$ and $\vec{r}$, following the analysis in \autoref{sec:PTSC}, can be done based on a multinomial distribution $\text{Mul}(r,\vec{r};\{\frac{1}{2},\frac{\vec{p}}{2}\};s)$ where $\vec{p}:=[p_1,p_2,...,p_L]$ denotes the normalized coefficient factor of $H$ defined in \autoref{eq:H}. For the sampled Hamiltonian $H$, we further sample the summands $H_l$ inside based on $\vec{p}$. We summarize the sampling algorithm in \autoref{fig:SamplingPTS}.

\begin{figure}[htbp]
\centering
\includegraphics[width=\linewidth]{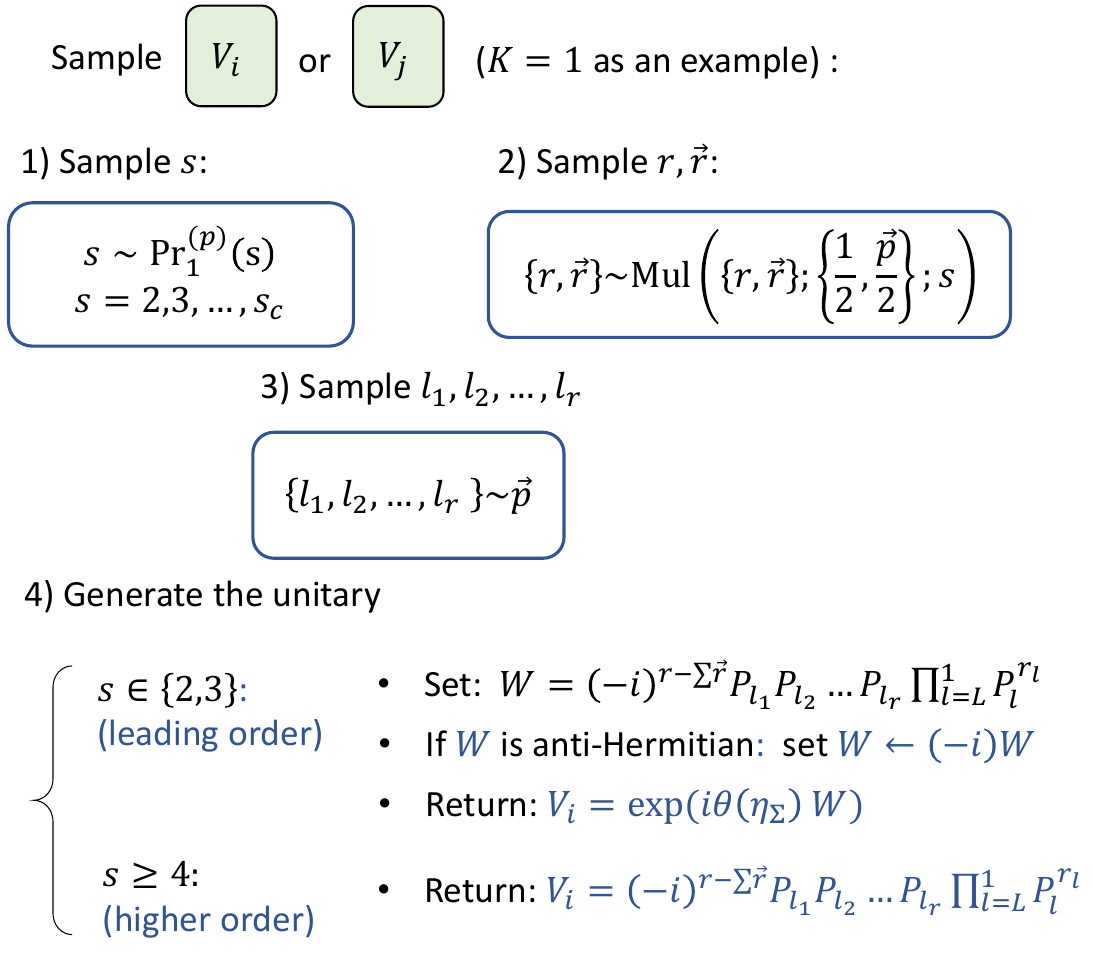}
\caption{The sampling procedure of $V_i$ or $V_j$ in \autoref{fig:TrotterLCUidea}(b,c) in the PTSC algorithm. We set $K=1$ as example. $\Pr_1^{(p)}(s)$ is a discrete probability distribution given in \autoref{eq:mu1_Pr1p}. $\eta_\Sigma:=\eta_2+\eta_3$ and $\eta_s:=\frac{(2\lambda x)^s}{s!}$. $\theta(y):=\tan^{-1}(y)$.}
 \label{fig:SamplingPTS}
\end{figure}

In the NCC algorithm, the Trotter remainder is expanded based on the adjoint operators. For example, for the lattice Hamiltonian $H=A+B$ in \autoref{eq:HLattice}, we can write the second- and third-order Trotter remainder of $V_1(x)$ as,
\begin{equation}
\begin{aligned}
F_{1,2}^{(nc)}(x) &= -\frac{x^2}{2!} \ad_A B, \\
F_{1,3}^{(nc)}(x) &= i\frac{x^3}{3!} (2 \ad_B\ad_A B + \ad_A^2 B).
\end{aligned}
\end{equation}
We can first sample the expansion order $s=2$ or $3$. If $s=3$, we then sample the specific commutator, i.e., $\ad_B\ad_A B$ or $\ad_A^2 B$. For the given commutator, for example, $\ad_B\ad_A B = [B, [A, B]]$, we first randomly sample a summand $H_{j,j+1}$ for the rightmost $B$ as the starting point of the adjoint operator. The action of the subsequent $\ad_A$
and $\ad_B$ will enlarge the support of $H_{j,j+1}$, but within a ``light-cone'' region shown in \autoref{fig:AdjExpandSamp}. We then sample the Hamiltonian summand $H_{j_1, j_1+1}$ and $H_{j_2, j_2+1}$ for the adjoint operators $\ad_B$ and $\ad_A$, but within the light-cone region. This will ensure our sampling to be efficient and with nested-commutator scaling.

\begin{figure}[htbp]
\centering
\includegraphics[width=\columnwidth]{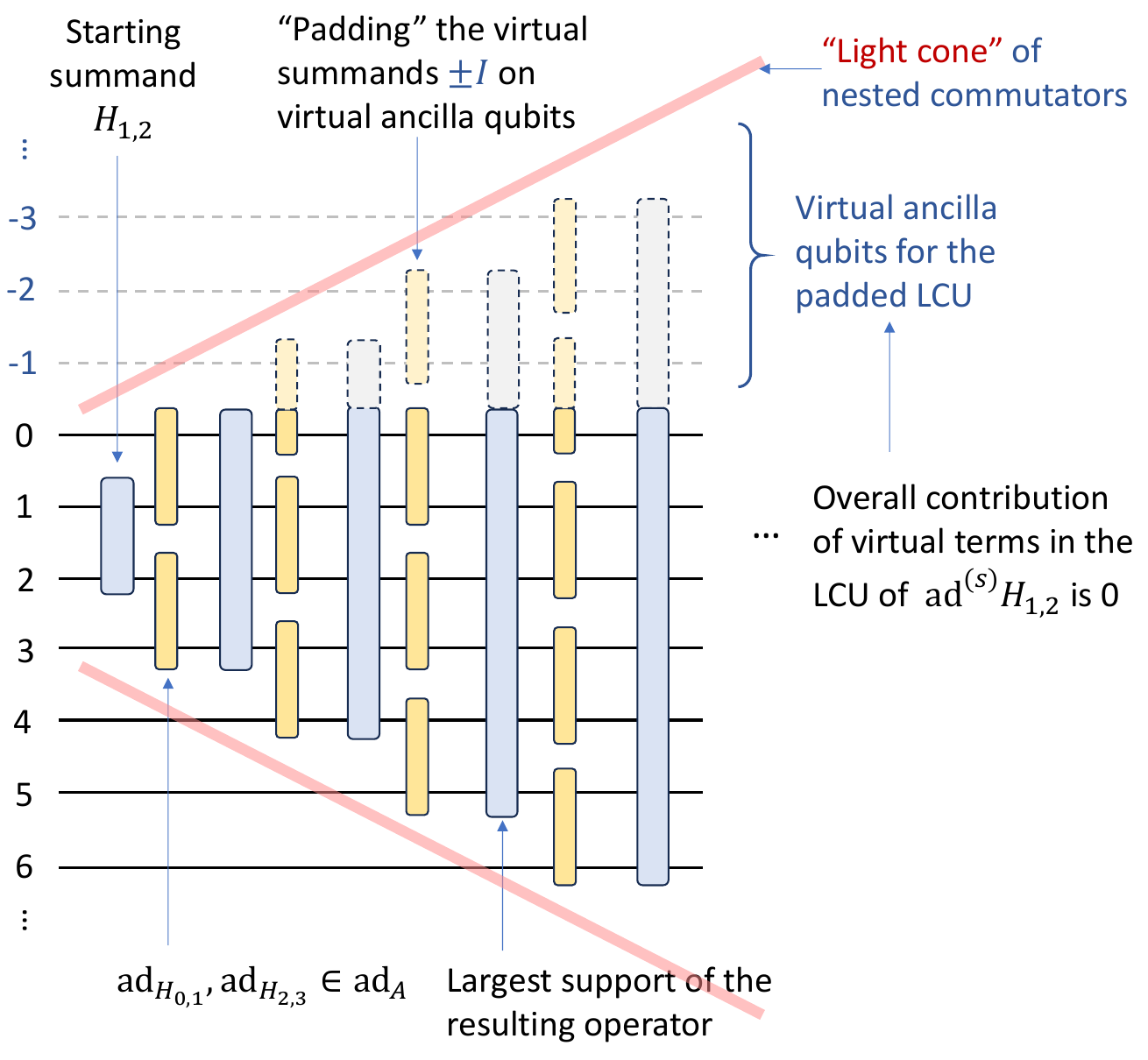} 
\caption{ 
Illustration of the generation of elementary commutators in a given commutator form $\ad_B\ad_A\ad_B\ad_A H_{j,j+1}$. Here we set $j=1$. Start from $H_{1,2}$, we sequentially apply the adjoint operation of elementary summands $H_{j_l,j_l+1}$ (the yellow blocks) on the operator. The blue blocks indicates the largest possible support of the resulting operators. When the support reach the boundary of the Hamiltonian, we introduce extra virtual ancillary qubits to construct padded LCU. The operators on the virtual qubits are always $\pm I$, so that we only need to implement a phase gate on the control qubit without the need to perform operations on the virtual qubits.}
\label{fig:AdjExpandSamp}
\end{figure}

A problem of sampling the Hamiltonian summands $H_{j,j+1}$ in the commutator is that, since the Hamiltonian $H$ may be nonhomogeneous, the $1$-norm of $H_{j,j+1}$ with different $j$ may be different. This will complicate the sampling algorithm, since we need to first calculate the $1$-norm of all the elementary commutators with the form of
\begin{equation} \label{eq:commutator_summand}
\ad_{H_{j_2,j_2+1}} \ad_{H_{j_1,j_1+1}} H_{j,j+1},
\end{equation}
for all $j, j_1$ and $j_2$. When the Trotter order $K$ and the expansion order $s=K+1,...,2K+1$ gets larger, the number of elementary adjoint operators will increase exponentially. Consequently, we cannot estimate the $1$-norm of all the elementary commutator.

To solve this problem but still keep the advantage of the NCC algorithm, we introduce the following Hamiltonian ``padding'' technique to ensure all the elementary nested commutator own the same $1$-norm. Consider a Hamiltonian summand $H_{j,j+1}$, which can be expanded to some Pauli operators,
\begin{equation}
H_{j,j+1} = \sum_{\omega} \alpha_j^{(\omega)} P_{j,j+1}^{(\omega)},
\end{equation}
where $\alpha_j^{(\omega)}$ is a positive number and $P_{j,j+1}^{(\omega)}$ is a normalized Pauli operator whose support is on qubit $j$ and $j+1$. 
Recall that $\Lambda_1:= \max_j \|H_{j,j+1}\|_1$ and $\|H_{j,j+1}\|_1:= \sum_{\omega} \alpha_j^{(\omega)}$.
When $\|H_{j,j+1}\|_1$ is smaller than $\Lambda_1$, we add extra trivial terms $\pm I$ in the Pauli decomposition of $H_{j,j+1}$,
\begin{equation} \label{eq:PadHj}
\begin{aligned}
\bar{H}_{j,j+1} &= \sum_{\omega} \alpha_j^{(\omega)} P_{j,j+1}^{(\omega)} + \frac{\delta\Lambda_1}{2} I + \frac{\delta\Lambda_1}{2} (-I) \\
&= \Lambda_1 \sum_{\omega} \bar{p}_j^{(\omega)} P_{j,j+1}^{(\omega)},
\end{aligned}
\end{equation}
where $\delta\Lambda_1:= \Lambda_1 - \|H_{j,j+1}\|_1$. \autoref{eq:PadHj} holds naturally, but now with a manually predetermined $1$-norm value $\Lambda_1$. Similarly, we can pad all the elementary commutators with the form 
\begin{equation}
\ad_{H_{j_1,j_1+1}} H_{j,j+1} = H_{j_1,j_1+1} H_{j,j+1} - H_{j,j+1} H_{j_1,j_1+1}, 
\end{equation}
so that their $1$-norms are all $2\Lambda_1^2$. In this way, we ignore the commutator relationship between $H_{j,j+1}$ and $H_{j_1,j_1+1}$ as long as they are in the light-cone region.

After the padding procedure described above, all elementary nested commutators of the same order will have the same $1$-norm. This property allows us to uniformly sample these commutators: the starting summand $H_{j,j+1}$ is sampled uniformly from those in $B$, and the subsequent $H_{j_1,j_1+1}$ and $H_{j_2,j_2+1}$ are sampled uniformly within the light-cone region.  
However, when the starting summand $H_{j,j+1}$ is near the boundary, applying a few adjoint operators may cause it to touch the boundary. This reduces the number of possible elementary nested commutators compared to those starting from the center, resulting in different $1$-norms for $\text{ad}_A H_{j,j+1}$ depending on $j$. This complicates the sampling of the starting summand $H_{j,j+1}$.  
To resolve this issue and ensure uniform sampling of $H_{j,j+1}$, we introduce virtual qubits at the boundary, as illustrated in \autoref{fig:AdjExpandSamp}, and pad the virtual qubits with $0$-summed $\pm I$ terms. 
Since we only perform $\pm I$ operation on the virtual qubits, we do not need to introduce it in the real experiments.

We remark that, our padding method preserve the locality structure. As a result, the performance guarantee in \autoref{thm:NC} still holds. We summarize the sampling algorithm in \autoref{fig:SamplingNC}. If we consider the Heisenberg Hamiltonian 
\begin{equation} \label{eq:HeisenbergH}
H = \sum_{i} \vec \sigma_i \vec \sigma_{i+1} +  \sum_{i} Z_i,
\end{equation}
as an example, where  $\vec \sigma_i:= (X_i, Y_i, Z_i)$ is the vector of Pauli operators on the $i$th qubit, we can define the summand $H_{j,j+1}$ to be
\begin{equation}
H_{j,j+1} = 4 \left( \frac{1}{4} X_j X_{j+1} + \frac{1}{4} Y_j Y_{j+1} + \frac{1}{4} Z_j Z_{j+1} + \frac{1}{4} Z_j \right).
\end{equation}
In this case, $\Lambda_1 = \|H_{j,j+1}\| = 4$. The probability distribution $\bar{p}_j^{(\omega)}$ in \autoref{fig:SamplingNC} is to uniformly sample the $XX$, $YY$, $ZZ$ and $ZI$ term. As a demonstration, we explicitly present the algorithm to sample the Pauli-rotation operator in the first-order NCC algortihm for the Heisenberg Hamiltonian in \autoref{eq:HeisenbergH} in Algorithm~\ref{Alg:Demonstration_1stNCC}.

\begin{figure}
\begin{algorithm}[H]
\caption{Demonstration: sampling of $V_i$ or $V_j$ of first-order NCC algorithm for the Heisenberg Hamiltonian in \autoref{eq:HeisenbergH}}
\label{Alg:Demonstration_1stNCC}
\begin{algorithmic}[1]
    \Require
    An $n$-qubit Heisenberg Hamiltonian $H$ in \autoref{eq:HeisenbergH}; unit evolution time $0<x<1$ for each Trotter segment; 
    \Ensure
    Sampling of a Pauli-rotation operator $V_i$ from the Trotter remainder $\tilde{V}_1^{(nc)}(x)$.
    \State Sample the expansion order $s\in\{2,3\}$ with the probability $\{\frac{1}{1+24x}, \frac{24x}{1+24x}\}$.
    \If{$s=3$}
        \State Sample the adjoint form $\ad^{(2)}B$ from $\{\ad_B\ad_A B, \ad_A^2 B\}$ with the probability $\{1/3, 2/3\}$.
    \EndIf
    \State Sample the starting index $j$ uniformly from all odd indices in $0,...,n-1$. Sample the Pauli operator $P_{j,j+1}$ uniformly from $\{XX,YY,ZZ,ZI\}$ on qubit $j$ and $j+1$.
    \State Set $W:= P_{j,j+1}$.
    \State Sample the adjoint index $j_1$ uniformly from $\{j-1, j+1\}$. Sample the Pauli operator $P_{j_1,j_1+1}$ uniformly from $\{XX,YY,ZZ,ZI\}$ on qubit $j_1$ and $j_1+1$. Sample the multiplication order $b_1$ uniformly from $\{0,1\}$. \footnote{If $j-1$ or $j+1$ exceeds the index range $0,...,n-1$, we pad extra virtual qubits similar to \autoref{fig:AdjExpandSamp}.}
    \If{$b_1=0$}
        \State Set $W:= P_{j_1,j_1+1} W$.
    \Else \Comment{$b_1=1$}
        \State Set $W:= -W P_{j_1,j_1+1}$.
    \EndIf
    \If{$s=3$}
        \If{$\ad^{(2)}B = \ad_B\ad_A B$}
            \State Sample the adjoint index $j_2$ uniformly from $\{j-2,j,j+2\}$. Sample the Pauli operator $P_{j_2,j_2+1}$ uniformly from $\{XX,YY,ZZ,ZI\}$ on qubit $j_2$ and $j_2+1$. Sample the multiplication order $b_2$ uniformly from $\{0,1\}$.
        \Else  \Comment{$\ad^{(2)}B = \ad_A^2 B$}
            \State Sample the adjoint index $j_2$ uniformly from $\{j-3,j-1,j+1\}$. Sample the Pauli operator $P_{j_2,j_2+1}$ uniformly from $\{XX,YY,ZZ,ZI\}$ on qubit $j_2$ and $j_2+1$. Sample the multiplication order $b_2$ uniformly from $\{0,1\}$.
        \EndIf
        \If{$b_2=0$}
            \State Set $W:= P_{j_2,j_2+1} W$.
        \Else \Comment{$b_2=1$}
            \State Set $W:= -W P_{j_2,j_2+1}$.
        \EndIf
    \EndIf
    \State Return $V_i:= \exp\left( i\theta W \right)$ as the sampled Pauli-rotation operator. Here $\theta:= \tan^{-1}(16n x^2(1+24x))$.
\end{algorithmic}
\end{algorithm}
\end{figure}

\begin{figure}[htbp]
\centering
\includegraphics[width=\linewidth]{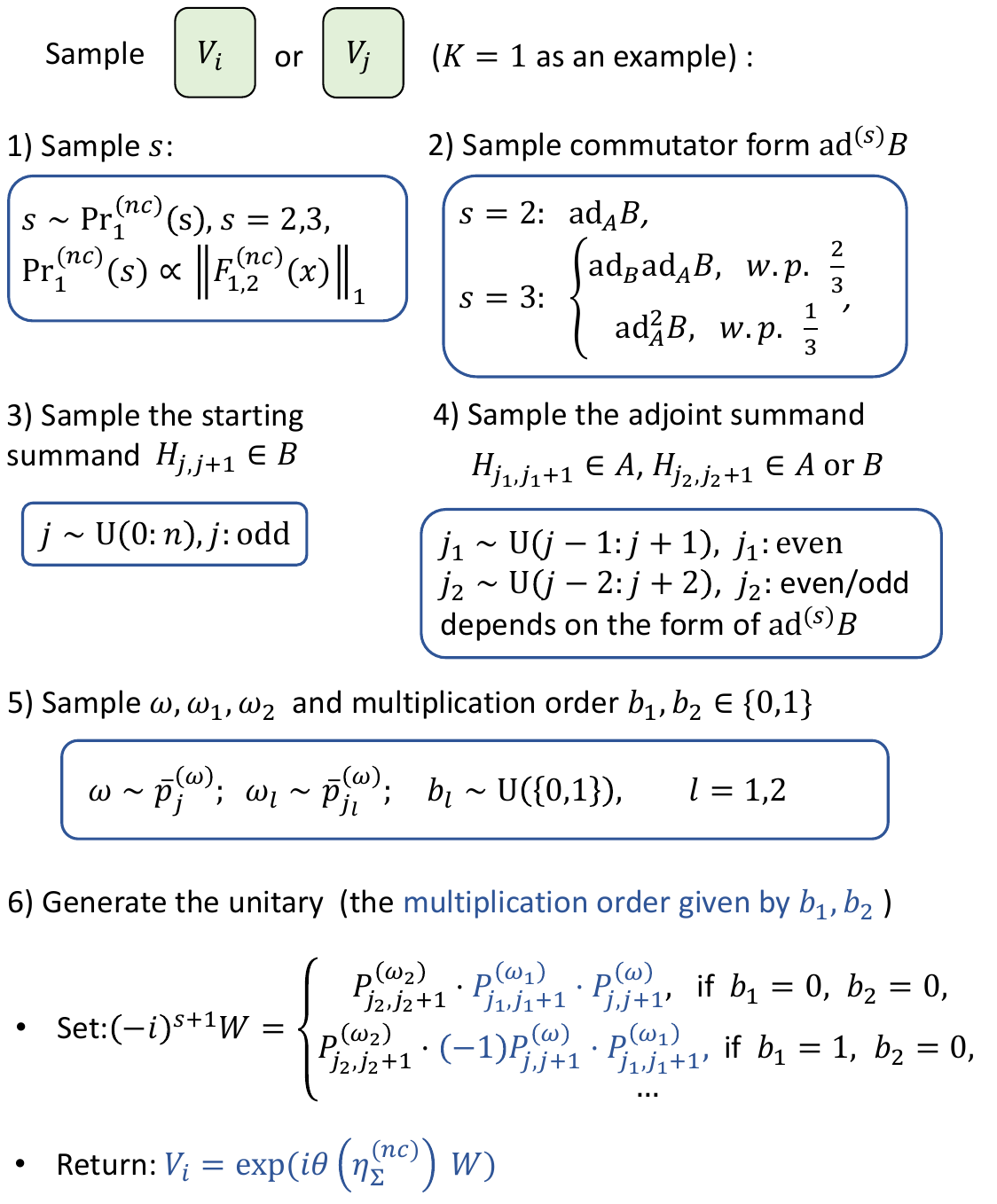}
\caption{The sampling procedure of $V_i$ or $V_j$ in \autoref{fig:TrotterLCUidea}(b) in the NCC algorithm. We set $K=1$ as example. The notation $l:l+k$ indicates the number array of $l,l+1,...,l+k$. $b_l$ determines whether to multiply $P_{j_l,j_l+1}^{(\omega_l)}$ to the left or the right side of the current Pauli operator. $\mr{U}(\bullet)$ refers to a uniform distribution in the set. $\eta_\Sigma^{(nc)}:= \|F_{1.2}^{(nc)}\|_1 + \|F_{1.3}^{(nc)}\|_1$. $\theta(y):=\tan^{-1}(y)$. The probability $\bar{p}_j^{(\omega)}$ is given by the Hamiltonian information in \autoref{eq:PadHj}.}
 \label{fig:SamplingNC}
\end{figure}

As a final remark, the sampling procedure in both PTSC and NCC algorithms are independent of the implementation of the quantum circuit and the measurement outcome. Thanks to this property, we can perform the classical sampling during the quantum circuit implementation or even generate the sampled Pauli matrices before the implementation of the quantum circuits.

\subsection{Performance comparison}

In \autoref{tab:SumComp}, we compare the implementation complexity and the gate complexity in a single round of experiment of the $0$th-order PTSC, $K$th-order PTSC, and $K$th-order NCC algorithms to previous Hamiltonian simulation algorithms. For a fair comparison, we set the 1-norm of all LCU formulas $\mu$ to be constant. 
In this case, the sample complexity of PTSC and NCC algorithms incurs a $\mu^4$ overhead compared to standard sampling from the $K$th-order Trotter or post-Trotter algorithms.

We show that by inserting a few randomly sampled Pauli-rotation gates after each Trotter segment, as illustrated in \autoref{fig:illustration}, both PTSC and NCC achieve improved accuracy and time dependence. The gate counts for PTSC exhibit logarithmic dependence on accuracy, $\log(1/\varepsilon)$, while the NCC gate counts show improved system-size dependence.

\begin{table*}[htbp]

\centering
\begin{tabular}{c|c|c|c|c}
\hline
\hline
Algorithm & Implementation hardness & Accuracy & Size scaling (lattice Hamiltonian) & Time dependence \\
 \hline
$K$th-order Trotter~\cite{suzuki1990fractal} & Easy & $\mc{O}(\varepsilon^{-1/K})$ &$\mc{O}(n^{1+ \frac{1}{K} })$ & $\mc{O}(t^{1+\frac{1}{K}})$ \\
\hline
\makecell[c]{Post-Trotter~\cite{berry2015simulating,low2019Hamiltonian} } &  Hard  & $\mc{O}(\tilde{\log}(1/\varepsilon))$  & $\mc{O}(n^2)$ & $\mc{O}(t)$ \\
\hline
\hline
\makecell[c]{$0$th-order PTSC} & Easy & $\mc{O}(\tilde{\log}(1/\varepsilon))$  & $\mc{O}(n^2)$ & $\mc{O}(t^2)$ \\
\hline
\makecell[c]{$K$th-order PTSC} & Easy & $\mc{O}(\tilde{\log}(1/\varepsilon))$  & $\mc{O}(n^{2+\frac{1}{2K+1} } )$ & $\mc{O}(t^{1+\frac{1}{2K+1}})$ \\
\hline
\makecell[c]{$K$th-order NCC} & Easy & $\mc{O}(\varepsilon^{-1/(2K+1)})$  & $\mc{O}(n^{1+\frac{2}{2K+1} })$ & $\mc{O}(t^{1+\frac{1}{2K+1}})$ \\
\hline
\hline
\end{tabular} 
\caption{Comparison of the implementation hardness and gate complexity in a single round of the circuit for Trotter-LCU methods versus previous algorithms. Here, $\tilde{\log}(x):=\log(x)/\log\log(x)$. The (implementation) hardness refers to whether one needs to implement multicontrolled gates with plenty of ancillary qubits. In the comparison for the system-size scaling of lattice Hamiltonians, we use the fact that $\lambda=\mc{O}(n)$ and $L=\mc{O}(n)$.
}  
\label{tab:SumComp}
\end{table*}

To demonstrate how the Trotter-LCU algorithms can help reduce gate costs in practical scenarios, we estimate the single-shot gate count of random-sampling Trotter-LCU algorithms and compare it with the state-of-the-art Trotter algorithm, i.e., fourth-order Trotter~\cite{suzuki1991general,childs2019fasterquantum,childs2021theory}. 
To ensure a fair comparison, we set $1$-norm of the LCU formula to be $\mu=2$. Based on \autoref{prop:randomLCU}, this implies that the random-sampling Trotter-LCU algorithm will require an additional factor of $16$ in the sample number to estimate the properties of the target state $U\rho U^\dag$ where $U:=e^{iHt}$ to a given precision compared to the normal Trotter or coherent implementation of LCU algorithms.

We compile their quantum circuits to $\mc{CNOT}$ gates, single-qubit Clifford gates, and single-qubit $Z$-axis rotation gates $R_z(\theta)=e^{i\theta Z}$.
Here, we mainly compare the number of $R_z(\theta)$ gates since they are the most resource-consuming part on a fault-tolerant quantum computer~\cite{Litinski2019gameofsurfacecodes}. 
The $\mc{CNOT}$ gate-number comparison results can be found in Appendix~\ref{sec:AppNumerics}, which is similar to the $R_z(\theta)$ gate comparison.
We also compare the gate counts of the Trotter-LCU algorithms to the coherent-implementation of LCU~\cite{berry2014exponential,berry2015simulating} and QSP~\cite{low2017optimal,low2019Hamiltonian} in Appendix~\ref{sec:AppNumerics}.

In the first comparison, we consider the simulation of generic Hamiltonians without the usage of commutator information, for which we choose the $2$-local Hamiltonian, $H = \sum_{i,j} X_iX_{j} + \sum_i Z_i$ where $X_i$ and $Z_i$ are the Pauli matrices on the $i$th qubit. 
\autoref{fig:L_NC_abs}(a,b) show the gate counts for the fourth-order Trotter formula and the PTSC Trotter-LCU algorithms with different orders with an increasing time $t$ and increasing system size $n$, respectively.
The gate counting method for fourth-order Trotter with random permutation is based on the analytical bounds in Ref.~\cite{childs2019fasterquantum}.

\begin{figure*}[htbp]
\centering
\includegraphics[width=1\textwidth]{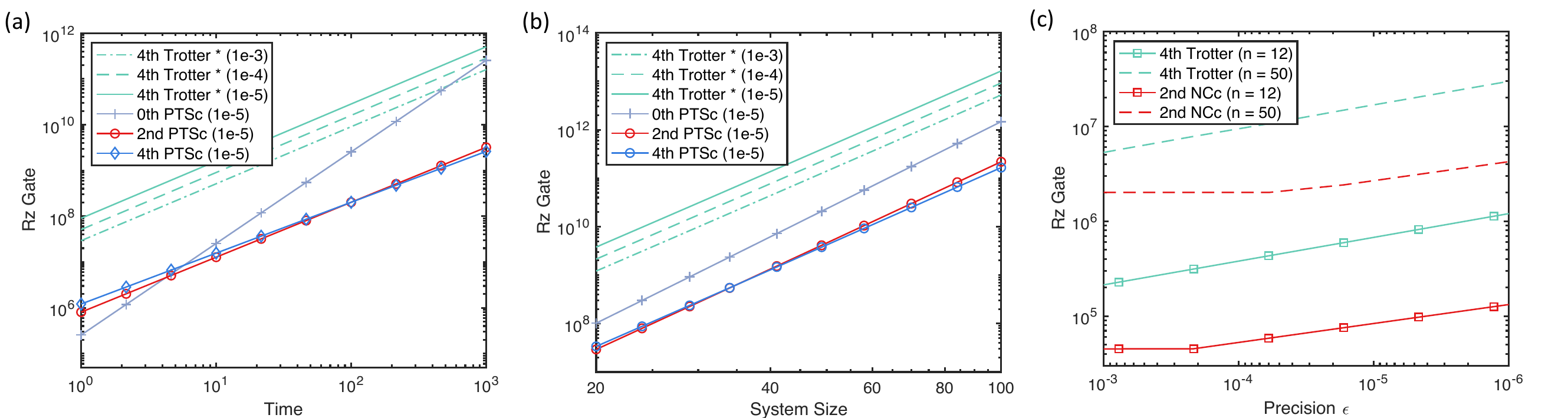}
\caption{ 
Non-Clifford $R_z(\theta)$ gate-number estimation for simulating real-time dynamics with Trotter and Trotter-LCU algorithms.
(a) and (b) show the $R_z(\theta)$ gate counts when simulating the Hamiltonian $H = \sum_{i,j} X_i X_j + \sum_i Z_i$.
(a) shows the $R_z(\theta)$ gate count with an increasing time and fixed system-size $n = 20$.
(b) shows the $R_z(\theta)$ gate count with an increasing system size with the time $t = n$.
We list the performance of zeroth, second, and fourth-order PTSC algorithms with $\mu=2$.
The fourth-order Trotter analytical bound is from Ref.~\cite{childs2019fasterquantum}.
(c) The $R_z(\theta)$ gate count for the nearest-interaction Heisenberg model using the nested-commutator bound with $n=12$ and $n=50$. The fourth-order Trotter commutator bound is from Ref.~\cite{childs2021theory}.
}
\label{fig:L_NC_abs}
\end{figure*}

From \autoref{fig:L_NC_abs}(a,b), we can clearly see the advantage of composition of Trotter and LCU formulas: if we do not use LCU and merely apply Trotter formulas, the gate resource of fourth-order Trotter suffers from a large overhead that is 2 orders of magnitudes larger than the PTSC algorithms. Moreover, if we increase $\varepsilon$ from $10^{-3}$ to $10^{-5}$, we can see a clear increase of the gate resources for the Trotter algorithm. For the PTSC algorithms, however, the gate number is almost not affected by $\varepsilon$ since they enjoy a logarithmic $\varepsilon$-dependence.

On the other hand, if we do not use Trotter formula, the $0$th-order PTSC algorithm owns a quadratically worse $t$-dependence ($\mc{O}(t^2)$) than the fourth-order Trotter ($\mc{O}(t^{1.25})$), second-order PTSC ($\mc{O}(t^{1.2})$) and fourth-order PTSC ($\mc{O}(t^{1.11})$) algorithms in \autoref{fig:L_NC_abs}(a).
For a short-time evolution $t=n$, the system-size dependence of second- or fourth-order PTSC in \autoref{fig:L_NC_abs}(b) outperforms $0$th-order PTSC algorithm. In the case when long-time Hamiltonian simulation is required, for example, $t$ should be set to $10^3$ for the phase estimation~\cite{campbell2019random}, the advantage of $2$nd or fourth-order PTSC to $0$th-order PTSC will be more obvious.
The composition of Trotter and LCU formulas enables $2$nd or fourth-order PTSC to enjoy good $t$ and $\varepsilon$ dependence and small gate-resource overhead simultaneously.

Next, we compare the gate count when simulating the lattice models,  where the commutator analysis will help remarkably reduce the gate count. We consider the Heisenberg Hamiltonian $H =  \sum_{i} \vec \sigma_i \vec \sigma_{i+1} +  \sum_{i} Z_i$ using the nested-commutator bounds.
In \autoref{fig:L_NC_abs}(c), we choose $n=t=12$ and $50$ and show the gate number with respect to the accuracy requirement $\varepsilon$. The fourth-order Trotter error analysis is based on the nested-commutator bound (Proposition~M.1 in Ref.~\cite{childs2021theory}), which is currently the tightest Trotter error analysis. 
The performance of our second-order NCC algorithm is based on the analytical bound in Sec.~H in the Appendix. 
We mainly present results for the second-order NCC algorithm due to its simplicity and leave precise higher-order NCC gate count analysis for future study. 

From \autoref{fig:L_NC_abs}(c), we can see that, while enjoying near-optimal system-size scaling similar to the fourth-order Trotter algorithm which is currently the best one for lattice Hamiltonians~\cite{childs2021theory,childs2019nearly}, the second-order NCC algorithm shows better accuracy dependence than fourth-order Trotter algorithm. Particularly, using the same gate number as the fourth-order Trotter, we are able to achieve a 3 to 4 orders of magnitudes higher accuracy $\varepsilon$.

\section{PRELIMINARIES} \label{sec:preliminaries}

In this section, we review the Hamiltonian simulation algorithms based on Trotter and LCU formulas. 

In all the Hamiltonian simulation algorithms discussed in this work, we divide the real-time evolution $U(t)$ into $\nu$ segments,
\begin{equation}
U(t) = e^{-iHt} = (U(x))^\nu = \left( e^{-iHx} \right)^\nu,
\end{equation}
with $x:=t/\nu$, and consider the construction of each small segment $U(x)$. Without loss of generality, we assume $x>0$.

\subsection{Trotter formulas} \label{ssc:Trotter}

The most natural way to approximate $U(x)$ is to apply the Lie-Trotter-Suzuki formulas~\cite{suzuki1990fractal,suzuki1991general}. Hereafter, we refer to them as Trotter formulas. 
The first-order Trotter formula is
\begin{equation} \label{eq:S1x}
\begin{aligned}
S_1(x) = \prod_{l=1}^L e^{-ix H_l} = e^{-ix H_L}... e^{-ix H_2}e^{-ix H_1} =: \prod^{\leftarrow} e^{-ix \vec{H}}.
\end{aligned}
\end{equation}
Here, $\vec{H}:=[H_1, H_2, ..., H_L]$. In \autoref{eq:S1x}, we simplify the notation of the sequential products from $l=1$ to $L$ with the same form, using the arrow to denote the ascending direction of the dummy index $l$. 
The Hermitian conjugation of $S_1(x)$ can similarly be written as $S_1(x)^\dag = \overset{\rightarrow}{\prod} e^{ix \vec{H}}$.

The second-order Trotter formula is
\begin{equation} \label{eq:S2x}
\begin{aligned}
S_2(x) &= D_2(-x)^\dag D_2(x) = S_1(-\frac{x}{2})^\dag S_1(\frac{x}{2}) \\
&= \prod^{\rightarrow} e^{-i\frac{1}{2} x \vec{H}} \prod^{\leftarrow} e^{-i\frac{1}{2} x \vec{H}},
\end{aligned}
\end{equation}
where $D_2(x):= S_1(\frac{x}{2})$. We have $S_2(-x)^\dag = S_2(x)$.

The general $2k$th-order Trotter formula is~\cite{suzuki1991general}
\begin{equation} \label{eq:S2kx}
\begin{aligned}
S_{2k}(x) &= D_{2k}(-x)^\dag D_{2k}(x) \\
&= [S_{2k-2}(u_k x)]^2 S_{2k-2}((1-4u_k)x)[S_{2k-2}(u_k x)]^2,
\end{aligned}
\end{equation}
with $u_k:=1/(4-4^{1/(2k-1)})$ for $k\geq 1$. The operator $D_{2k}(x)$ is defined recursively,
\begin{equation} \label{eq:D2kx}
D_{2k}(x) := D_{2k-2}( (1-4u_k)x  ) S_{2k-2}(u_k x)^2.
\end{equation}
By induction from \autoref{eq:S2kx} and \autoref{eq:D2kx} we can show that $S_{2k}(-x)^\dag = S_{2k}(x)$.

We also denote the zeroth-order Trotter formula to be $S_0(x) = I$ for consistency. We denote the multiplicative remainder of the Trotter formulas as
\begin{equation} \label{eq:Vx}
\begin{aligned}
V_K(x) &= U(x) S_K(x)^\dag,
\end{aligned}
\end{equation}
for $K=0,1,2k$. In what follows, we name this Trotter error term $V_K(x)$ as the $K$th-order Trotter remainder to avoid ambiguity to other error effects such as truncation error.

In Ref.~\cite{suzuki1990fractal}, Suzuki proves the following order condition for the Trotter formulas,
\begin{equation} \label{eq:suzuki}
S_K(x) = U(x) + \mc{O}(x^{K+1}) = e^{-iHx + \mc{O}(x^{K+1})},
\end{equation}
for $K=1$ or even positive $K$. As a result, the remainder $V_K(x)$ will only contain the terms of $x^{q}$ with $q\geq K+1$. We will use the following order condition of Trotter formulas in the later discussion.

\begin{lemma}[Order condition of Trotter formulas, Theorem~4 in \cite{childs2021theory}] \label{lem:TrotterOrder}
Let $H$ be an Hermitian operator and $U(x):=e^{-iHx}$ to be the time-evolution operator with $x\in\mbb{R}$. For the Trotter formula $S_K(x)$ defined in \autoref{eq:S1x} and \autoref{eq:S2kx}, where $K=1$ or positive even number, we have the following
\begin{enumerate}
\item Additive error: $A_K(x):= S_K(x) - U(x) = \mc{O}(x^{K+1})$,
\item Multiplicative error: $M_K(x):= U(x)S_K(x)^\dag - I = \mc{O}(x^{K+1})$,
\item For the exponentiated error $E_K(x)$ such that $S_K(x) = \mc{T}\exp\left( \int_0^x d\tau (-iH + E_K(\tau) )\right)$, we have $E_K(x)=\mc{O}(x^K)$.
\end{enumerate}
\end{lemma}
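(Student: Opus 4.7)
The plan is to prove Part~1 directly and then derive Parts~2 and 3 as relatively short corollaries, since $U(x)$ is unitary and the exponentiated generator is related to $S_K'S_K^{-1}$ by a routine calculation.

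For Part~1, I would proceed by induction on the order $K$. For $K=1$, a direct Taylor expansion of $S_1(x) = \prod_{l=1}^L e^{-ixH_l}$ around $x=0$ yields $S_1(x) = I - ix\sum_l H_l + \mc{O}(x^2) = I - ixH + \mc{O}(x^2)$, matching $U(x)$ to first order in $x$. For $K=2$, the key observation is the time-reversal symmetry $S_2(-x)^\dag = S_2(x)$, which follows directly from the palindromic definition $S_2(x) = S_1(-x/2)^\dag S_1(x/2)$. If one writes $S_2(x) = \mc{T}\exp(\int_0^x d\tau(-iH + E_2(\tau)))$ as in Part~3 (whose existence I would take from standard Magnus-type arguments), then the symmetry forces $E_2$ to have only odd powers of $x$, so the first-order matching automatically upgrades to $\mc{O}(x^3)$ error. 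For the recursive construction $S_{2k}(x) = [S_{2k-2}(u_k x)]^2 S_{2k-2}((1-4u_k)x)[S_{2k-2}(u_k x)]^2$ with $u_k = 1/(4-4^{1/(2k-1)})$, the symmetry is preserved and the constraint $4u_k^{2k-1} + (1-4u_k)^{2k-1} = 0$ guarantees cancellation of the leading $x^{2k-1}$ term in the exponentiated generator, pushing the error to $\mc{O}(x^{2k+1})$.

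For Part~2, I would simply use unitarity: $M_K(x) = U(x)S_K(x)^\dag - I = U(x)(S_K(x)^\dag - U(x)^\dag)$, so $\|M_K(x)\| = \|S_K(x) - U(x)\| = \mc{O}(x^{K+1})$ by Part~1. For Part~3, since $S_K(x)$ is invertible for small $x$ (it is a product of unitaries, hence unitary), I define $E_K(x) := S_K'(x)S_K(x)^\dag + iH$ and verify that $S_K$ satisfies the time-ordered exponential ODE with generator $-iH + E_K$. From Part~1, $S_K(x) = U(x) + \mc{O}(x^{K+1})$ and $S_K'(x) = -iHU(x) + \mc{O}(x^K)$, so $E_K(x) = -iHU(x)(U(x)^\dag + \mc{O}(x^{K+1})) + iH + \mc{O}(x^K) = \mc{O}(x^K)$.

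The main obstacle is the inductive step in Part~1 — specifically, verifying that Suzuki's recursion with the precise value $u_k = 1/(4-4^{1/(2k-1)})$ indeed cancels the next leading order in the exponentiated generator. This requires carefully tracking the Baker–Campbell–Hausdorff-type expansion of the fivefold product and exploiting the symmetry to rule out even-order contributions; the algebraic bookkeeping is the delicate part, though it is standard material going back to Suzuki~\cite{suzuki1990fractal,suzuki1991general}. The rest of the argument is essentially a transcription of elementary identities for unitary operators, so the main work is concentrated in establishing the order condition for $S_K$ itself.
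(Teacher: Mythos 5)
The paper does not actually prove this lemma: it quotes it from Childs et al.\ (Theorem~4 of \cite{childs2021theory}, ultimately Suzuki's construction), and its own Proposition~\ref{prop:order} only establishes equivalences between order conditions, taking the order condition of $S_K$ itself as given. So your sketch should be judged as a reconstruction of the standard literature proof, and in that role it is essentially the right argument: first-order Taylor matching for $K=1$, the palindromic symmetry $S_{2k}(-x)^\dag=S_{2k}(x)$ to exclude even-order error terms, Suzuki's recursion with $4u_k^{2k-1}+(1-4u_k)^{2k-1}=0$ to cancel the leading odd term, and Parts~2 and~3 as short corollaries of unitarity. Part~2 is fine as written, and Part~3 is fine provided you note that passing from $S_K(x)-U(x)=\mc{O}(x^{K+1})$ to $S_K'(x)-U'(x)=\mc{O}(x^K)$ uses smoothness (vanishing of the first $K$ Taylor coefficients, i.e.\ exactly the paper's Lemma~\ref{lem:order}), which is routine but should be said.

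The one genuine imprecision is where you run the parity argument: you claim the symmetry forces the \emph{time-ordered} generator $E_2(x)$ of Part~3 to contain only odd powers of $x$, and implicitly reuse this bookkeeping in the inductive step. That statement is false for $E_K$: writing $S_2(x)=U(x)+A(x)$ with $A(x)=A_3x^3+\mc{O}(x^4)$, one computes $E_2(x)=S_2'(x)S_2(x)^\dag+iH=3A_3x^2+\mc{O}(x^3)$, so $E_2$ generically has a nonzero even-order leading term; the symmetry only relates $E_K(x)$ to a conjugation of $E_K(-x)$, not to $\pm E_K(-x)$. The clean (and standard) fix is to run the parity argument on the plain exponent instead: for small $x$ write $S_K(x)=\exp(\Omega_K(x))$ with $\Omega_K$ anti-Hermitian and analytic, use $S_K(x)S_K(-x)=I$ and uniqueness of the logarithm near $I$ to conclude $\Omega_K(-x)=-\Omega_K(x)$, i.e.\ $\Omega_K$ is odd, and then do the BCH cancellation of the $x^{2k-1}$ coefficient at the level of $\Omega_{2k}$. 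With that substitution your outline matches the standard proof; the remaining "algebraic bookkeeping" you defer to Suzuki is indeed where the real work lies, as you say.
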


\subsection{LCU formulas} \label{ssc:LCU}

Instead of decomposing $U(x)$ as a product of elementary unitaries, another way is to decompose $U(x)$ to a summation of elementary unitaries. We now provide a formal definition of a LCU formula.

\begin{definition}[\cite{childs2012Hamiltonian}] \label{def:LCU}
A $(\mu,\varepsilon)$ (LCU) formula of an operator $V$ is defined to be 
\begin{equation} \label{eq:LCU}
\tilde{V} = \sum_{i=0}^{\Gamma-1} c_i V_i = \mu \sum_{i=0}^{\Gamma-1} \Pr(i) V_i,
\end{equation}
such that the spectral norm distance $\|V-\tilde{V}\|\leq \varepsilon$. Here, $\mu>0$ is the $l_1$-norm of the coefficient vector, $\Pr(i)$ is a probability distribution over different unitaries $V_i$, and $\{V_i\}_{i=0}^{\Gamma-1}$ is a set of unitaries. Here, we assume $c_i>0$ for all $i$ and absorb the phase into the unitaries $V_i$. We call $\mu$ the $1$-norm of this $(\mu,\varepsilon)$-LCU formula. 
\end{definition}
In what follows, we define the $1$-norm $\|V\|_1$ of an operator $V$ to be its smallest $1$-norm over all possible $(\mu,0)$-LCU formulas for $V$. Note that, $\|U\|_1=1$ for any unitary $U$. One can easily check the validity of this norm definition. 

We may consider two ways to implement the LCU formula of the operator $V$. 
In the first way, we coherently implement the LCU formula by introducing an ancillary system $A$ with the dimension $\Gamma$ which costs $\lceil \log_2\Gamma \rceil$ qubits. For the LCU lemma defined in \autoref{eq:LCU}, we define the amplitude-encoding unitary $U_{AE}$ and the select gate $\mr{Sel}(\tilde{V})$ to be,
\begin{equation} \label{eq:UAE_SelV}
\begin{aligned}
U_{AE}\ket{0}_A &:= \sum_{i=0}^{\Gamma-1} \sqrt{\Pr(i)} \ket{i}, \\
\mr{Sel}(\tilde{V}) &:= \sum_{i=0}^{\Gamma-1} \ket{i}\bra{i}\otimes V_i.
\end{aligned}
\end{equation}
Then, the following controlled-gate $W_{AB}$ acting on ancillary $A$ and system $B$ defines a way to realize LCU coherently when we prepare $\ket{0}$ on $A$ and measure it on computational basis to get $\ket{0}$,
\begin{equation} \label{eq:W}
W := (U_{AE}^\dag \otimes I) \mr{Sel}(\tilde{V}) (U_{AE} \otimes I).
\end{equation}
More precisely, if we set the initial state on $B$ to be $\ket{\psi}$, then we have
\begin{equation} \label{eq:WAB}
W\ket{0}_A\ket{\psi}_B = \frac{1}{\mu} \ket{0}_A \left(\tilde{V}\ket{\psi}\right)_B + \sqrt{1 - \frac{1}{\mu^2}} \ket{\Phi_\perp}_{AB},
\end{equation}
where $\ket{\Phi_\perp}_{AB}$ is a state whose ancillary state is supported in the subspace orthogonal to $\ket{0}$.

If we perform computation-basis measurement directly on $A$, the successful probability to obtain $\tilde{V}\ket{\psi}$ is $\frac{1}{\mu^2}$. To boost the successful probability to nearly deterministic, we can introduce the amplitude amplification techniques~\cite{berry2014exponential}. To this end, we consider the following isometry
\begin{equation} \label{eq:isoLCU}
\mc{V}:= -W R W^\dag R W \left( \ket{0}\otimes I \right)_{AB}.
\end{equation}
Here, $R$ is a reflection over $\ket{0}$ on the system $A$,
\begin{equation} \label{eq:R}
R = I - 2\ket{0}_A\bra{0}.
\end{equation}

Consider the case of $\mu=2$ and $\varepsilon$ is small for the LCU formula. If $\tilde{V}$ is a unitary, then we have $(\bra{0}\otimes I)\mc{V} = \tilde{V}$. For a general $\tilde{V}$, we can verify the resulting operator
\begin{equation}
(\bra{0}\otimes I)\mc{V} = \frac{3}{2}\tilde{V} - \frac{1}{2}\tilde{V}\tilde{V}^\dag \tilde{V},
\end{equation}
is close to $V$ with the spectral norm distance bound $\xi:=(\varepsilon^2+3\varepsilon+4)\varepsilon/2$.
Also, the successful probability to project the ancilla to $\ket{0}$ is larger than $(1-\xi)^2$.

In the second way to implement the LCU formula, we randomly sample the terms $\{V_i\}$ in \autoref{eq:LCU} based on the probability $\Pr(i)$~\cite{childs2012Hamiltonian}. This way saves the ancillary qubit number and the cost to implement the multiqubit Toffoli gates in $U_{AE}$ and $\mr{Sel}(\tilde{V})$ in \autoref{eq:UAE_SelV}, which is more suitable to implement in the near-term devices. 
Instead of implementing the operator $V$ directly, we now focus on the task to estimate the properties of the target state $\sigma = V \rho V^\dag$ where $\rho$ is the initial state of the Hamiltonian simulation task. 
Suppose we want to estimate expectation value of a given observable $O$ on $\sigma$, we can then embed the task to Hadamard test~\cite{kitaev1995quantum} shown in \autoref{fig:HadamardTest}.

\begin{figure}[htbp]
\centering
\includegraphics[width=0.7\linewidth]{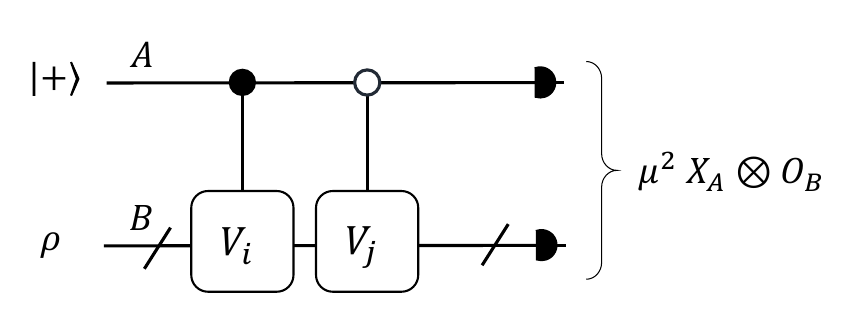}
\caption{Estimate the observable value $\tr(V\rho V^\dag O)$ based on the random sampling implementation of $(\mu,\varepsilon)$-LCU formula of $V$ defined in \autoref{eq:LCU}.}
\label{fig:HadamardTest}
\end{figure}

We first prepare a $\ket{+}$ state on a single ancillary qubit. After that, we implement two controlled operations $\ket{0}_A\bra{0}\otimes I_B + \ket{1}_B\bra{1}\otimes (V_i)_B$ and $\ket{0}_A\bra{0}\otimes (V_j)_B + \ket{1}_B\bra{1}\otimes I_B$ where $V_i$ and $V_j$ are sampled independently from the LCU formula in \autoref{eq:LCU}. The measured expectation value of $\langle \mu^2 X_A\otimes O_B \rangle$ is then a nearly unbiased estimation of $\tr(V\rho V^\dag O)$. We will use the following performance guarantee for the observable estimation.

\begin{proposition}[Performance of the random-sampling LCU implementation, Theorem~2 from \cite{faehrmann2021randomizing}] \label{prop:randomLCU}
For a target operator $V$ and its $(\mu,\varepsilon)$-LCU formula defined in Definition~\ref{def:LCU}, if we estimate the value $\braket{O}_V:= \tr(V\rho V^\dag O)$ with an initial state $\rho$ and observable $O$ using the circuit in \autoref{fig:HadamardTest} for $N$ times, then the distance between the mean estimator value $\hat{O}$ and the true value $\braket{O}_V$ is bounded by
\begin{equation} \label{eq:Obound}
|\hat{O} - \braket{O}_V| \leq \|O\|(3\varepsilon + \varepsilon_n),
\end{equation}
with successful probability $1-\delta$. Here, $N=2\mu^4\ln(2/\delta)/\varepsilon_n^2$, $\|O\|$ is the spectral norm of $O$.
\end{proposition}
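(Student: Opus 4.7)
The plan is to split the total error $|\hat O - \braket{O}_U|$ into a \emph{systematic} piece coming from the LCU approximation $\tilde U = \mu\sum_i p_i V_i$ with $\|U - \tilde U\|\leq \varepsilon$, and a \emph{statistical} piece coming from the Monte Carlo average over $N$ independent runs of the Hadamard-test circuit. A triangle inequality then combines them. I would tacitly assume $\varepsilon \leq 1$, which is the only regime of practical interest.

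First, I would check unbiasedness of the single-shot estimator. Conditioned on the sampled pair $(i,j)$, the two ancilla-controlled gates of Fig.~\ref{fig:randomHSCirc}(a) produce $\tfrac{1}{\sqrt2}(\ket{0}V_j\ket{\psi}+\ket{1}V_i\ket{\psi})$, so the conditional expectation of $\mu^2 X_A\otimes O_B$ equals $\mu^2\,\mathrm{Re}\,\tr(O V_i\rho V_j^\dag)$. Averaging over $(i,j)\sim \vec p\otimes \vec p$ gives
$$\mbb{E}[\text{estimator}] = \mu^2 \sum_{i,j} p_i p_j\,\mathrm{Re}\,\tr(OV_i\rho V_j^\dag) = \mathrm{Re}\,\tr(O\tilde U\rho \tilde U^\dag) = \tr(O\tilde U\rho \tilde U^\dag),$$
where the last equality uses that $\tilde U\rho \tilde U^\dag$ is Hermitian (hence its trace against Hermitian $O$ is already real). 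So the sample mean is an unbiased estimator of $\tr(O\tilde U\rho\tilde U^\dag)$.

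Second, I would bound the systematic gap $|\tr(O\tilde U\rho\tilde U^\dag) - \tr(OU\rho U^\dag)|$. A telescoping inside the trace norm and H\"older's inequality $|\tr(OM)|\leq \|O\|\|M\|_1$ give
$$\|\tilde U\rho\tilde U^\dag - U\rho U^\dag\|_1 \leq \|\tilde U\|\,\|\rho\|_1\,\|\tilde U^\dag - U^\dag\| + \|\tilde U-U\|\,\|\rho\|_1\,\|U\| \leq (1+\varepsilon)\varepsilon + \varepsilon \leq 3\varepsilon,$$
so the systematic contribution is at most $3\varepsilon\|O\|$. Third, I would apply Hoeffding's inequality to the $N$ i.i.d.\ shot values, each bounded in absolute value by $\mu^2\|O\|$ (since $|X_A|\leq 1$ and the spectrum of $O$ lies in $[-\|O\|,\|O\|]$), obtaining
$$\Pr\!\left(|\hat O - \tr(O\tilde U\rho\tilde U^\dag)| \geq \varepsilon_n \|O\|\right) \leq 2\exp\!\left(-\frac{N\varepsilon_n^2}{2\mu^4}\right),$$
which is exactly $\delta$ for $N = 2\mu^4\ln(2/\delta)/\varepsilon_n^2$. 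A final triangle inequality yields $|\hat O - \braket{O}_U| \leq \|O\|(3\varepsilon + \varepsilon_n)$ with probability at least $1-\delta$.

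The only delicate step is the unbiasedness verification: one must be sure that measuring $\mu^2 X_A\otimes O_B$ (the real-part channel of the Hadamard test) recovers the full $\tr(O\tilde U\rho\tilde U^\dag)$ rather than only its real part, and this follows automatically from Hermiticity of $\tilde U\rho\tilde U^\dag$. All remaining ingredients -- the trace-norm telescoping and Hoeffding's inequality on bounded i.i.d.\ variables -- are standard.
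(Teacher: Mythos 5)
Your proposal is correct and follows essentially the same route as the paper: identify the single-shot Hadamard-test expectation as $\tr(O\tilde U\rho\tilde U^\dag)$, bound the bias $|\tr(O\tilde U\rho\tilde U^\dag)-\tr(OU\rho U^\dag)|\leq(\varepsilon^2+2\varepsilon)\|O\|\leq 3\varepsilon\|O\|$ (the paper bounds the two telescoped trace terms directly rather than via the trace norm and H\"older, but this is the same estimate, with the same implicit $\varepsilon\leq 1$), and apply Hoeffding to the shot values bounded by $\mu^2\|O\|$ with $N=2\mu^4\ln(2/\delta)/\varepsilon_n^2$. Your explicit real-part/Hermiticity check of unbiasedness corresponds to the paper's symmetrized expression $\frac{\mu^2}{2}(\tr(U_i\rho U_j^\dag O)+\tr(U_j\rho U_i^\dag O))$, so no substantive difference remains.
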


From \autoref{prop:randomLCU} we can see that, the $1$-norm $\mu$ of the LCU formula affects the sample complexity while the accuracy factor $\varepsilon$ introduces extra bias in the observable estimation. 
To estimate $\langle O\rangle_V$ using Hadamard-test-type circuits with $\varepsilon$ accuracy, we need $\mc O(\mu^4/\varepsilon^2)$ sampling resource, which owns an extra $\mu^4$ overhead compared to the normal Hamiltonian simulation algorithms~\cite{faehrmann2021randomizing}. 
To make the algorithm efficient, we need to set $\mu$ to be a constant.

In the later discussion, we will construct new LCU formulas from the product of LCU formulas. We will use following proposition. 

\begin{proposition}[Product of LCU formulas] \label{prop:ProductRLCU}
Suppose we have a $(\mu,\varepsilon)$-LCU formula $\tilde{V}$ for an operator $V$ with the form of \autoref{eq:LCU}.
Then the product formula
\begin{equation}
\tilde{V}^\nu = \mu^\nu \sum_{i_1,i_2,...,i_\nu} \Pr(i_1)\Pr(i_2)...\Pr(i_\nu) V_1 V_2...V_\nu, \quad \nu\in\mbb{N},
\end{equation}
is a $(\mu',\varepsilon')$-LCU formula for $V^\nu$ with 
\begin{equation}
\mu' = \mu^\nu, \quad \varepsilon' \leq \nu\mu' \varepsilon.
\end{equation}

\end{proposition}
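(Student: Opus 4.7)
The plan is to handle the two claims separately: the $1$-norm of the product formula and the spectral-norm error bound. Both pieces are essentially routine, but the second one needs a telescoping argument carefully set up so that it uses only quantities the proposition makes available.

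For the $1$-norm claim, I would simply expand the product $\tilde{U}^\nu$ by distributing sums. Each multi-index $(i_1,\dots,i_\nu)$ contributes a term with coefficient $\mu^\nu \Pr(i_1)\cdots\Pr(i_\nu)\geq 0$ and unitary factor $U_{i_1}\cdots U_{i_\nu}$, which is itself a unitary since unitaries are closed under multiplication. Thus $\tilde{U}^\nu$ is already in LCU form with coefficients summing to $\mu^\nu \sum_{i_1,\dots,i_\nu}\Pr(i_1)\cdots\Pr(i_\nu)=\mu^\nu$, giving $\mu'=\mu^\nu$ directly from Definition~\ref{def:LCU}.

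For the error bound, I would use the standard telescoping identity
\begin{equation*}
U^\nu - \tilde{U}^\nu \;=\; \sum_{k=0}^{\nu-1} U^{\nu-1-k}\bigl(U-\tilde{U}\bigr)\tilde{U}^{k}.
\end{equation*}
Taking spectral norms and applying submultiplicativity gives $\|U^\nu-\tilde{U}^\nu\|\leq \sum_{k=0}^{\nu-1}\|U\|^{\nu-1-k}\,\|U-\tilde{U}\|\,\|\tilde{U}\|^{k}$. Since $U$ is unitary, $\|U\|=1$, and $\|U-\tilde{U}\|\leq\varepsilon$ by hypothesis. The remaining factor $\|\tilde{U}\|$ is bounded by noting that $\tilde{U}=\mu\sum_i\Pr(i)U_i$ is a convex combination (up to the scalar $\mu$) of unitaries, so $\|\tilde{U}\|\leq\mu\sum_i\Pr(i)\|U_i\|=\mu$. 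Substituting yields $\|U^\nu-\tilde{U}^\nu\|\leq\sum_{k=0}^{\nu-1}\varepsilon\,\mu^{k}\leq \nu\mu^{\nu-1}\varepsilon\leq\nu\mu^\nu\varepsilon=\nu\mu'\varepsilon$ (the last inequality uses $\mu\geq\|U\|=1$, which follows from $\mu\geq\|\tilde{U}\|\geq\|U\|-\|U-\tilde{U}\|\geq 1-\varepsilon$ for small $\varepsilon$; if $\mu<1$ the bound is trivial since one may also absorb into $\nu\mu^{\nu-1}\varepsilon$).

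No serious obstacle is expected: the first claim is a tautological rearrangement, and the second is a standard hybrid/telescoping estimate. The only subtle point is the justification of $\|\tilde{U}\|\leq\mu$, which I would state explicitly as it is what converts the multiplicative structure into the clean $\nu\mu'\varepsilon$ bound advertised by the proposition.
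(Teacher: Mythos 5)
Your proposal is correct and follows essentially the same route as the paper: the $1$-norm claim by direct expansion, and the error bound via the same telescoping (hybrid) decomposition of $U^\nu-\tilde{U}^\nu$ with submultiplicativity and $\|\tilde{U}\|\leq\mu$, concluding $\nu\mu^{\nu-1}\varepsilon\leq\nu\mu'\varepsilon$. Your explicit flagging of the $\mu\geq 1$ step is a small refinement the paper leaves implicit, but it does not change the argument.
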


\begin{proof}

The $1$-norm is obvious. We now bound the distance $\varepsilon'$ between $\tilde{U}^\nu$ and $U^\nu$.
\begin{align*}
\|U^\nu - \tilde{U}^\nu\| &= \left\| \sum_{k=1}^\nu U^{k-1} (U - \tilde{U}) \tilde{U}^{\nu-k} \right\| \\
&\leq \sum_{k=1}^\nu  \| U^{k-1}\|  \|U - \tilde{U}\| \| \tilde{U}^{\nu-k} \| \\
&\leq \nu \|U-\tilde{U}\| \sum_{k=1}^\nu \max\{ \|U\|, \|\tilde{U}\| \}^{\nu-1} \stepcounter{equation}\tag{\theequation}\label{eq:tildeUnuUnu} \\
&\leq \nu \varepsilon \mu^{\nu-1} \leq \nu \mu' \varepsilon.
\end{align*}
\end{proof}

We remark that, when there are common unitary components for all $\{V_i\}$ in the LCU formula \autoref{eq:LCU}, we can simplify the circuit in \autoref{fig:HadamardTest} by removing the ancillary control on the common unitary components. Suppose we have the following LCU formula for an operator $V$,
\begin{equation} \label{eq:LCUmixed}
\tilde{V} = \mu \sum_{i_1,i_2,...,i_\nu} \Pr(i_1,i_2,...,i_\nu)  V_{i_\nu} W_\nu ... V_{i_2} W_2 V_{i_1} W_1,
\end{equation}
such that $\|V-\tilde{V}\|\leq \varepsilon$, where $W_1,W_2,...,W_\nu$ are some fixed unitaries. Then according to Definition~\ref{def:LCU}, \autoref{eq:LCUmixed} is a $(\mu,\varepsilon)$-LCU formula of $V$ with the elementary unitaries to be $V_{\vec{i}}=V_{i_\nu} W_\nu ... V_{i_2} W_2 V_{i_1} W_1$ for $\vec{i}=\{i_1,i_2,...,i_\nu\}$. Instead of naively apply the Hadamard test circuit in \autoref{fig:HadamardTest}, we can introduce an equivalent circuit implementation shown in \autoref{fig:RLCUmixed}(b). The performance guarantee in Proposition~\ref{prop:randomLCU} also holds for the improved implementation.

\begin{figure}
\centering
\includegraphics[width=\linewidth]{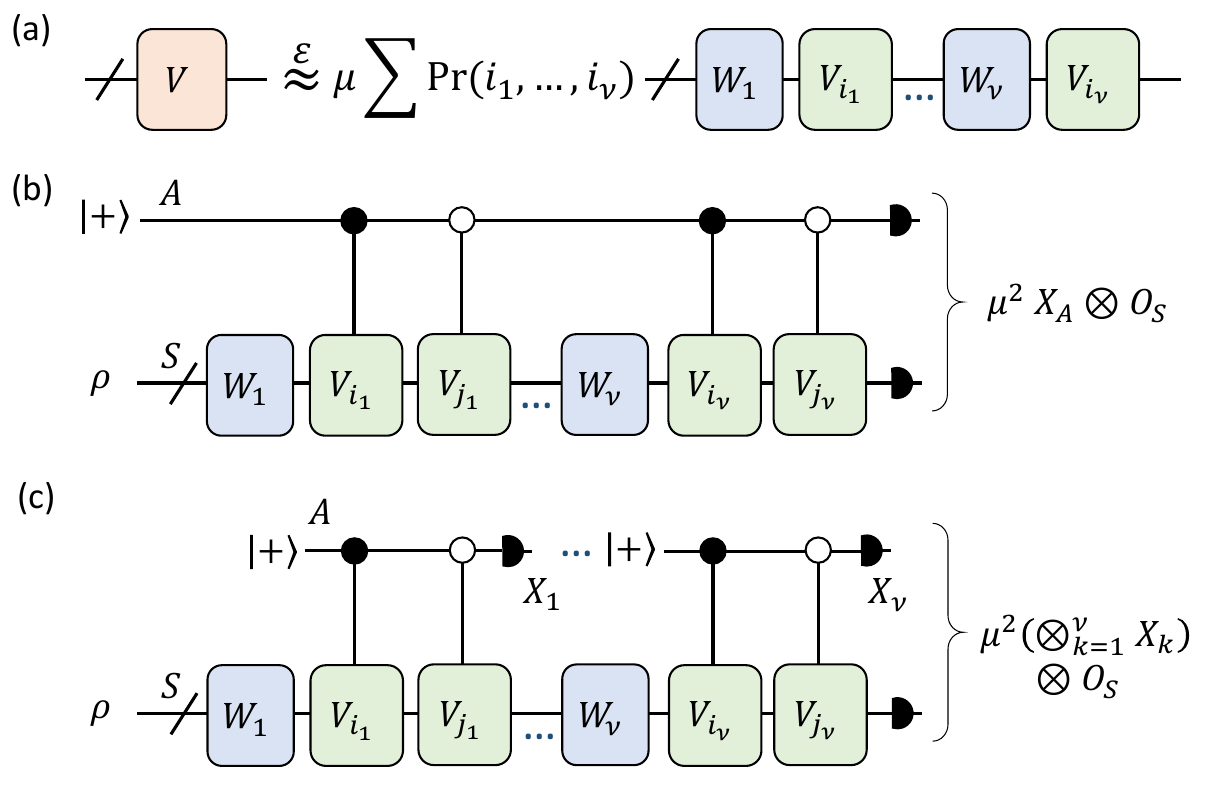}
\caption{ Improve the random-sampling implementation of Hamiltonian simulation based on $(\mu,\varepsilon)$-LCU formula in \autoref{eq:LCUmixed}. 
(a) The LCU formula of $V$ in \autoref{eq:LCUmixed} with deterministic components.
(b) The improved application of Hadamard test. We remove the control on the fixed unitaries and apply them only once. 
(c) A variant of Hadamard test, where the ancillary qubit is measured and reset for each randomly sampled unitary $V_{i_k}$ for $k=1,...,\nu$.
}
\label{fig:RLCUmixed}
\end{figure}

Furthermore, we notice that $\tr(O \tilde{V} \rho \tilde{V}^\dagger )$ can be written as,
\begin{equation}
\begin{aligned}
\mbb{E}_{i_1,j_1;...;i_\nu,j_\nu} \tr[O \, \mc{E}_{i_\nu, j_\nu} \circ \mc{W}_\nu \circ \,...\,\circ \mc{E}_{i_1, j_1} \circ \mc{W}_1(\rho)],
\end{aligned}
\end{equation}
where $\mc{W}_k (\bullet) := W_k \bullet W_k^\dagger$ and $\mc{E}_{i_k,j_k}(\bullet) := \frac{1}{2}( V_{i_k} \bullet V_{j_k}^\dagger + V_{j_k} \bullet V_{i_k}^\dagger)$ for $k=1,2,...,\nu$. 
As a result, we can implement each channel $\mc{W}_k$ by a unitary and each map $\mc{E}_{i_k,j_k}$ by a Hadamard-test-type circuit.  
This leads to a variant circuit shown in \autoref{fig:RLCUmixed}(c), where the ancillary qubit is measured and reset for every segment. While this circuit owns the same gate complexity as \autoref{fig:RLCUmixed}(b), it is beneficial in a fault-tolerant quantum computer since we do not need to store the ancillary qubit for a long time-the ancillary qubit is activated only in the compensation stage and is quickly measured.

\section{Trotter-LCU algorithms with paired Taylor-series compensation} \label{sec:PTSC}

In this section, we construct a $(\mu,\varepsilon)$-LCU formula for the remainder $V_K(x)$ of the $K$-th order Trotter formula based on the idea to perform Taylor-series (TS) expansion on all the exponential terms in $V_K(x)$. Although TS expansion is naturally a LCU formula of $V_K(x)$, it usually owns poor $\mu$ performance.
To further suppress the $1$-norm of the expansion, we will modify the introduce a ``pairing'' idea to combine the terms that correspond to different TS expansion orders.
We first consider a simple case without Trotter formula in \autoref{ssc:pairedTS0} to illustrate the major idea to construct TS-based LCU formula. Then we take the construction of LCU for the first-order Trotter formula as an example in \autoref{ssc:pairedTS1}. Finally, in \autoref{ssc:pairedTSsampling}, we discuss the random-sampling implementation of the algorithm and analyze its sample and gate complexity.

\subsection{Zeroth-order case} \label{ssc:pairedTS0}

We begin our discussion from the case where the Trotter formula is trivial, i.e., $S_0(x)=I$ for each segment. In this case, the Trotter remainder is $V_0(x)=U(x)=e^{-ixH}$. To construct LCU formula, we expand $V_0(x)$ by Taylor series~\cite{berry2015simulating},
\begin{equation} \label{eq:UxTSF}
\begin{aligned}
V_0(x) &= \sum_{s=0}^{\infty} \frac{(-i\lambda x)^s}{s!} \sum_{l_1, ..., l_s} p_{l_1} p_{l_2}... p_{l_s} P_{l_1} P_{l_2} ... P_{l_s} \\
&= \sum_{s=0}^{\infty} F_{0,s}(x), 
\end{aligned}
\end{equation}
where
\begin{equation} \label{eq:F0sx}
F_{0,s}(x) = \frac{(-i\lambda x)^s}{s!} \sum_{l_1, ..., l_s} p_{l_1} p_{l_2}... p_{l_s} P_{l_1} P_{l_2} ... P_{l_s}.
\end{equation}

The $1$-norm of $V_0(x)$ is
\begin{equation}
\|V_0(x)\|_1 = \sum_{s=0}^{\infty} \|F_{0,s}(x)\|_1 = 1 + (\lambda x) + \frac{1}{2!}(\lambda x)^2 + ...= e^{\lambda x}.
\end{equation}
That is, the $1$-norm of $V_0(x)$ is exponentially large with respect to $\lambda x$. 
Suppose we use $V_0(x)$ directly for the random-sampling implementation of LCU following \autoref{fig:TrotterLCUidea}(b), the composite LCU formula for $U(t)=V_0(x)^\nu$ is the product of $V_0(x)$. Based on \autoref{prop:ProductRLCU}, the $1$-norm of the product formula is $\mu=(e^{\lambda x})^\nu=e^{\lambda t}$, which increases exponentially with the simulation time $t$.
Based on \autoref{prop:randomLCU}, this implies an exponentially increasing sample cost $N=\mc{O}(\mu^4)$. 
To make the TS expansion practical for the random-sampling implementation, we need to reduce $1$-norm of $V_0(x)$.

When $\lambda x$ is a small value, the major contribution to $\|V_0(x)\|_1$ comes from the low-order terms of $F_{0,s}(x)$.
Note that the first-order term $F_{0,1}(x)$ is anti-Hermitian. This allows us to utilize the following Euler's formula on Pauli operators: for a Pauli matrix $P$ and $y\in\mbb{R}$,
\begin{equation} \label{eq:IiyP}
I + iyP = \sqrt{1+y^2} e^{i\theta(y) P},
\end{equation}
where $\theta(y):=\tan^{-1}(y)$. 

To suppress $\|V_0(x)\|_1$, we rewrite $V_0(x)$ as follows:
\begin{equation} \label{eq:Ux0thLOP}
\begin{aligned}
& V_0(x) = \sum_{s=0}^{\infty} F_{0,s}(x) \\
&= I - i\lambda x \sum_{l=1}^L p_l P_l + \sum_{s=2}^{\infty} F_{0,s}(x) \\
&= \sum_{l=1}^L p_l (I - i\lambda x P_l) + \sum_{s=2}^{\infty} F_{0,s}(x). 
\end{aligned}
\end{equation}
Then, we can apply \autoref{eq:IiyP} on \autoref{eq:Ux0thLOP} to convert the first-order term to Pauli rotation unitaries
\begin{equation} \label{eq:Ux0thLOP2}
\begin{aligned}
V_0^{(p)}(x) = \sqrt{1+(\lambda x)^2} \sum_{l=1}^L p_l e^{i\theta_0 P_l} + \sum_{s=2}^{\infty} F_{0,s}(x),
\end{aligned}
\end{equation}
where $\theta_0:=\tan^{-1}(-\lambda x)$. We call the formula in \autoref{eq:Ux0thLOP2} the $0$th-order PTS formula.

In practice, to avoid the sampling in the infinitely large space, we introduce a truncation $s_c\geq 2$ to the expansion order of $x$. After this truncation, the approximated LCU formula of $V_0^{(p)}(x)$ is
\begin{equation} \label{eq:V0TildePTSFormula}
\begin{aligned}
&\tilde{V}_0^{(p)}(x) := \tilde{U}_0^{(p)}(x) = \sqrt{1+(\lambda x)^2} \sum_{l=1}^L p_l e^{i\theta_0 P_l} + \sum_{s=2}^{s_c} F_{0,s}(x) \\
&= \mu_0^{(p)}(x) \left(\mr{Pr}_0^{(p)}(1)V_{0,1}^{(p)} + \sum_{s=2}^{s_c} \mr{Pr}_0^{(p)}(s) V_{0,s}^{(p)} \right),
\end{aligned}
\end{equation}
where 
\begin{align*}
\mu_0^{(p)}(x) &:= \sqrt{1+(\lambda x)^2} + \sum_{s=2}^{s_c} \|F_{0,s}(x)\|_1  \\
&\leq \sqrt{1+(\lambda x)^2} + \left( e^{\lambda x} - 1 - \lambda x \right), \\
\mr{Pr}_0^{(p)}(s) &:= \frac{1}{\mu_0^{(p)}(x)}
\begin{cases}
\sqrt{1+(\lambda x)^2}, & s=1, \\
\|F_{0,s}(x)\|_1 = \frac{(\lambda x)^s}{s!}, & s=2,3,...,s_c,
\end{cases}  \stepcounter{equation}\tag{\theequation}\label{eq:0thLOPPara} \\
V_{0,s}^{(p)} &:= 
\begin{cases}
\sum_{l} p_l \,e^{i\theta_0 P_l}, & s=1, \\
\sum_{l_{1:s}} p_{l_{1:s}}^{(s)} (-i)^s P_{l_{1:s}}^{(s)}, & s=2,3,...,s_c.
\end{cases} 
\end{align*}
After ``pairing'' the terms with $s=0$ and $s=1$, we obtain \autoref{eq:V0TildePTSFormula} which is a new LCU formula with the $1$-norm $\mu_0^{(p)}(x)$. 
We have the following proposition to characterize the LCU formula in \autoref{eq:V0TildePTSFormula}.
\begin{proposition}[0th-order Trotter-LCU formula by paired Taylor-series compensation] \label{prop:PTS0}
For $x\geq 0$ and $s_c\geq 2$, $\tilde{V}_0^{(p)}(x)$ in \autoref{eq:V0TildePTSFormula} is a $(\mu_0^{(p)}(x),\varepsilon_0^{(p)}(x))$-LCU formula of $V_0(x)=U(x)$ with
\begin{equation}
\mu_0^{(p)}(x) \leq e^{\frac{3}{2}(\lambda x)^2}, \quad 
\varepsilon_0^{(p)}(x) \leq \left(\frac{e\lambda x}{s_c +1}\right)^{s_c+1}.
\end{equation}
\end{proposition}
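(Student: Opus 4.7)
The plan is to establish the proposition in two parts: first the $1$-norm bound $\mu_0^{(p)}(x) \leq e^{\frac{3}{2}(\lambda x)^2}$, then the spectral distance bound $\varepsilon_0^{(p)}(x) \leq \left(\frac{e\lambda x}{s_c+1}\right)^{s_c+1}$. Both follow by unpacking \autoref{eq:V0TildePTSFormula} and combining it with elementary analytic inequalities, in the same spirit as the proof of Proposition~\ref{prop:DTS0}.

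For the $1$-norm, the formula in \autoref{eq:0thLOPPara} gives
$$\mu_0^{(p)}(x) \,=\, \sqrt{1+(\lambda x)^2} + \sum_{s=2}^{s_c} \frac{(\lambda x)^s}{s!},$$
because each paired Pauli rotation $e^{i\theta_0 P_l}$ has unit operator norm (so the $s=1$ contribution equals the Euler factor $\sqrt{1+(\lambda x)^2}$), and $\|F_{0,s}(x)\|_1 = (\lambda x)^s/s!$ for $s\geq 2$. Truncation only removes positive terms, so I upper bound this by $\sqrt{1+(\lambda x)^2} + e^{\lambda x} - 1 - \lambda x$. Writing $y=\lambda x$, I then chain three elementary inequalities: (i) $\sqrt{1+y^2}\leq 1+y^2/2$; (ii) $e^y - y \leq e^{y^2}$, which is already invoked in the proof of Proposition~\ref{prop:DTS1}; and (iii) $e^z + z/2 \leq e^{3z/2}$ for all $z\geq 0$. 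Inequality (iii) can be proved by setting $h(z):=e^{3z/2}-e^z-z/2$ and checking $h(0)=h'(0)=0$ together with $h''(z)=\tfrac{9}{4}e^{3z/2}-e^z \geq 0$ (the latter because $e^{-z/2}\leq 1 \leq 9/4$ on $z\geq 0$). Applying (iii) with $z=y^2$ after steps (i) and (ii) yields the desired bound $\mu_0^{(p)}(x)\leq e^{3(\lambda x)^2/2}$.

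For the error bound, I first observe that pairing is exact: by \autoref{eq:IiyP},
$$\sqrt{1+(\lambda x)^2}\sum_{l} p_l\, e^{i\theta_0 P_l} \,=\, \sum_{l} p_l\,(I - i\lambda x P_l) \,=\, I - ix H \,=\, F_{0,0}(x) + F_{0,1}(x).$$
Hence $\tilde{V}_0^{(p)}(x) = \sum_{s=0}^{s_c} F_{0,s}(x)$, which coincides with the truncated direct Taylor expansion of \autoref{eq:V0Tildeformula}. The spectral error is therefore $\|V_0(x)-\tilde{V}_0^{(p)}(x)\| \leq \sum_{s>s_c}\|F_{0,s}(x)\| \leq \sum_{s>s_c}(\lambda x)^s/s!$, using that each normalized block $V_{0,s}$ from \autoref{eq:F0V0formula} has unit operator norm. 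The Poisson-tail bound (Corollary~\ref{coro:ExpTailPower}) then yields $\varepsilon_0^{(p)}(x)\leq \left(\tfrac{e\lambda x}{s_c+1}\right)^{s_c+1}$, matching the claim.

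The main technical point is the chain of inequalities for the $1$-norm: a naive term-by-term Taylor comparison of $\sqrt{1+y^2}+e^y-1-y$ against $e^{3y^2/2}$ already fails at order $y^3$, so the proof must route through the auxiliary comparison $e^y-y\leq e^{y^2}$ inherited from the proof of Proposition~\ref{prop:DTS1}, together with the convexity-type inequality (iii). Once these are in place the remainder is routine bookkeeping analogous to Proposition~\ref{prop:DTS0}, and the error estimate reuses exactly the Poisson-tail argument of \autoref{eq:VboundDS0}.
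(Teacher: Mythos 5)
Your proof is correct and follows essentially the same route as the paper: the $1$-norm bound is the same chain $\sqrt{1+y^2}\leq 1+y^2/2$, then $e^y-y\leq e^{y^2}$, then $e^{y^2}+y^2/2\leq e^{3y^2/2}$ (you additionally verify this last step, which the paper asserts without proof), and the error bound is obtained, as in the paper, by noting that the paired formula coincides with the truncated direct Taylor expansion so the Poisson-tail argument of Proposition~\ref{prop:DTS0} applies verbatim.
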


\begin{proof}
For the normalization factor, we have
\begin{align*}
& \mu_0^{(p)}(x) \leq \sqrt{1+(\lambda x)^2} + \left( e^{\lambda x} - 1 - \lambda x \right) \\
&\leq 1 + \frac{1}{2} (\lambda x)^2 + \left( e^{\lambda x} - 1 - \lambda x \right) \stepcounter{equation}\tag{\theequation}\label{eq:mu0Ixbound} \\
&\leq e^{(\lambda x)^2} + \frac{1}{2} (\lambda x)^2 \leq e^{\frac{3}{2} (\lambda x)^2}.
\end{align*}
The fourth inequality is due to $e^x - x \leq e^{x^2}$ for $x\in\mbb{R}$.

For the distance bound, we have 
\begin{equation} \label{eq:VboundPTS0}
\begin{aligned}
& \|\tilde{V}_0(x)- V_0(x)\| \leq \sum_{s>s_c} \|F_{0,s}(x)\| \\
&\leq \sum_{s>s_c} \frac{(\lambda x)^s}{s!} \|V_{0,s}(x)\| \\
&\leq \sum_{s>s_c} \frac{(\lambda x)^s}{s!} \leq \left( \frac{e\lambda x}{s_c+1} \right)^{s_c+1}.
\end{aligned}
\end{equation} 
In the second inequality, we use the fact that
\begin{equation}
\|V_{0,s}\| \leq \sum_{l_1,l_2,...,l_s} p_{l_1}p_{l_2}...p_{l_s} \|(-i)^s P_{l_1}P_{l_2}...P_{l_s}\| \leq 1.
\end{equation}
In the fourth inequality of \autoref{eq:VboundPTS0}, we apply the following Poisson tail bound formula,
\begin{equation} \label{eq:ExpTailPower}
\sum_{s=k+1}^\infty \frac{x^s}{s!} \leq \left(\frac{e x}{k+1}\right)^{k+1},
\end{equation}
which can be proven from Theorem~1 in Ref.~\cite{canonne2016poisson}.
\end{proof}

From \autoref{prop:PTS0} we see that, the $1$-norm of the LCU formula $V_0^{(p)}(x)$ in \autoref{eq:Ux0thLOP2} is $\exp(\frac{3}{2}(\lambda x)^2)\approx 1 + \mc{O}((\lambda x)^2)$, whose leading term is quadratically smaller than the one of $\|V_0^{(p)}(x)\|_1 = \exp(\lambda x) \approx 1 + \mc{O}(\lambda x)$ when $\lambda x\ll 1$. We will later see that $V_0^{(p)}(x)$ provides us an efficient random-sampling implementation of the Trotter-LCU algorithm.

\subsection{First-order case} \label{ssc:pairedTS1}

Following similar ideas in \autoref{ssc:pairedTS0}, we now study the PTS compensation of the Trotter remainder $V_K(x)$. We will take first-order case as an example. The Trotter remainder $V_1(x)$ is
\begin{equation}
V_1(x) = U(x) S_1(x)^\dag.
\end{equation}

From \autoref{eq:S1x} and \autoref{eq:UxTSF} we have
\begin{equation} \label{eq:UxS1x}
\begin{aligned}
S_1(x)^\dag &= \prod^{\rightarrow} e^{ix \vec{H}} 
= \left( \prod e^{\vec{\alpha} x} \right)  \sum_{\vec{r}} \mr{Poi}(\vec{r};\vec{\alpha} x) \prod^{\rightarrow} (i \vec{P})^{\vec{r}} \\
U(x) &= \sum_{r=0}^{\infty} \frac{(\lambda x)^r}{r!} \sum_{l_1, ..., l_r} p_{l_1} p_{l_2}... p_{l_r} (-i)^r P_{l_1} P_{l_2} ... P_{l_r} \\
&= e^{\lambda x} \sum_{r=0}^{\infty} \mr{Poi}(r;\lambda x) \sum_{l_{1:r}} p^{(r)}_{l_{1:r}} (-i)^r P^{(r)}_{l_{1:r}}.
\end{aligned}
\end{equation}
Here, we adopt the vector notations introduced in \autoref{ssc:Trotter} to simplify the expressions. In \autoref{eq:UxS1x}, we also extend the notation of Poisson distribution,
\begin{equation}
\mr{Poi}(\vec{r};\vec{\alpha}x) := \prod_{l=1}^L \mr{Poi}(r_l;\alpha_l x).
\end{equation}

Based on \autoref{eq:UxS1x}, we then write the Taylor-series expansion of first-order remainder as follows:
\begin{equation} \label{eq:V1xFull}
\begin{aligned}
& V_1(x) = U(x) S_1(x)^\dag \\
&= e^{2\lambda x} \sum_{r;\vec{r}} \mr{Poi}(r,\vec{r};\lambda x, \vec{\alpha} x) \sum_{l_{1:r}} p^{(r)}_{l_{1:r}} (-i)^{r-\sum \vec{r}} P^{(r)}_{l_{1:r}} \prod^{\rightarrow} \vec{P}^{\vec{r}},
\end{aligned}
\end{equation}
which is a LCU formula with $1$-norm $e^{2\lambda x}$. Here, $r$ and $\vec{r}$ are two groups of independent variables. 

Now, we utilize the order condition of the Trotter formula in \autoref{lem:TrotterOrder} to reduce $1$-norm of \autoref{eq:V1xFull}. To this end, we first rewrite \autoref{eq:V1xFull} by classifying the terms based on the power of $x$, which is determined by the value $s=r+\sum \vec{r}$,
\begin{equation} \label{eq:V1xExpand}
\begin{aligned}
V_1(x) = \sum_{s=0}^{\infty} F_{1,s}(x) = \sum_{s=0}^\infty \|F_{1,s}(x)\|_1 V_{1,s}.
\end{aligned}
\end{equation}
Here, $\|F_{1,s}(x)\|_1$ is the $1$-norm of the $s$-order expansion formula $F_{1,s}(x)$. $V_{1,s}$ denotes the normalized LCU formula for the $s$-order terms. We have
\begin{equation} \label{eq:V1xExpand2}
\begin{aligned}
F_{1,s}(x) &:= \sum_{r+\sum \vec{r} = s} \frac{(\lambda x)^r}{r!} \left(\prod \frac{(\vec{\alpha} x)^{\vec{r}}}{\vec{r}!} \right) \cdot \\ 
& \quad\quad\quad \sum_{l_{1:r}} p^{(r)}_{l_{1:r}} (-i)^{2r-s} P^{(r)}_{l_{1:r}} \prod^{\rightarrow} \vec{P}^{\vec{r}}, \\
V_{1,s} &:= \sum_{r;\vec{r}} \Pr(r,\vec{r}|s) \sum_{l_{1:r}} p^{(r)}_{l_{1:r}} (-i)^{2r-s} P^{(r)}_{l_{1:r}} \prod^{\rightarrow} \vec{P}^{\vec{r}}.
\end{aligned}
\end{equation}
In the expression of $V_{1,s}$, we use $r-\sum\vec{r} = r - (s - r) = 2r-s$.
The conditional probability $\Pr(r,\vec{r}|s)$ indicates the probability to sample $r$ and $\vec{r}$ when their summation $s$ is given,
\begin{equation} \label{eq:PrConMul}
\begin{aligned}
\Pr(r,\vec{r}|s) &= \frac{\Pr\{(\hat{r}=r,\hat{\vec{r}}=\vec{r})\cap (\hat{s}=s) \}}{\Pr\{\hat{s}=s\}} \\
&= \frac{s!}{r!\prod\vec{r}!} \left(\frac{1}{2}\right)^r \prod \left(\frac{\vec{p}}{2} \right)^{\vec{r}} \\
&= \mr{Mul}(\{r,\vec{r}\};\{\frac{1}{2},\frac{\vec{p}}{2}\};s),
\end{aligned}
\end{equation}
which is a multinomial distribution $\mr{Mul}(\cdot;\cdot;s)$ with $s$ trials and $(L+1)$ outcomes. In each trial, the $(L+1)$ outcomes $\{r;r_1,r_2,...,r_L\}$ occur with the corresponding probability $\{\frac{1}{2}, \frac{1}{2}p_1, \frac{1}{2}p_2,...,\frac{1}{2}p_L\}$. Recall that $\vec{p}=\{p_1, p_2,...,p_L\}$ is the normalized Hamiltonian coefficients defined in \autoref{eq:H}.

We first estimate the normalization cost $\|F_{1,s}(x)\|_1$ for different $s$ orders,
\begin{equation} \label{eq:muF1sx}
\begin{aligned}
\|F_{1,s}(x)\|_1 &= \sum_{s;\vec{r}} \mathbbm{1}\left[ r+\sum \vec{r} = s\right] \frac{(\lambda x)^r}{r!} \left(\prod \frac{(\vec{\alpha} x)^{\vec{r}}}{\vec{r}!} \right) \\
&= \frac{(\lambda x + \sum \vec{\alpha} x)}{s!} = \frac{ (2\lambda x)^s }{s!} =: \eta_s.
\end{aligned}
\end{equation}
In the second inequality, we use the following equation
\begin{equation} \label{eq:sum_twoPoission}
\sum_{s_1,s_2=0}^r \mathbbm{1}[s_1+s_2=r] \frac{(x_1)^{s_1}}{s_1!} \frac{(x_2)^{s_2}}{s_2!} = \frac{(x_1+x_2)^r}{r!}.
\end{equation}
We denote $\eta_s := \frac{(2\lambda x)^s}{s!}$, which will be frequently used in the following discussion.

From \autoref{eq:muF1sx} we can see that, similar to the expansion of $V_0(x)$, the $1$-norm of the expansion terms with different orders of $x$ in $V_1(x)$ follow the Possion distribution. This motivates us to eliminate the low-order terms such as $F_{1,1}(x)$. Based on Lemma~\ref{lem:TrotterOrder}, we have
\begin{equation}
F_{1,1}(x) = 0.
\end{equation}
As a result, we can directly remove the term $F_{1,1}(x)$ in \autoref{eq:V1xExpand}. The resulting formula is,
\begin{equation} \label{eq:V1xDS}
\begin{aligned}
V_1(x) &= I + \sum_{s=2}^{\infty} \eta_s V_{1,s}.
\end{aligned}
\end{equation}

After the elimination of the first-order term, we now introduce Euler's formula to suppress higher-order terms in $V_{1,2}$ and $V_{1,3}$. For the convenience of later discussion, we simplify the notation of $F_{1,s}(x)$ in \autoref{eq:V1xExpand2} as follows:
\begin{equation} \label{eq:F1sExpand3}
\begin{aligned}
F_{1,s}(x) = \eta_s \sum_{r,\gamma} \Pr(r,\gamma|s) P_1(r,\gamma), \\
\end{aligned}
\end{equation}
where $\gamma$ is used to denote all the expansion variables $\{\vec{r},l_{1:r}\}$ besides $r$. $\Pr(r,\gamma|s)$ and $P_1(r,\gamma)$ are then defined to be
\begin{equation} \label{eq:PrrgammasP1rgamma}
\begin{aligned}
\Pr(r,\gamma|s) &:= \Pr(r,\vec{r}|s) p^{(r)}_{l_{1:r}}, \\
P_1(r,\gamma) &:= (-i)^{2r-s} P^{(r)}_{l_{1:r}} \prod^{\rightarrow} \vec{P}^{\vec{r}}.
\end{aligned}
\end{equation}

To apply Euler's formula in \autoref{eq:IiyP} to \autoref{eq:V1xDS}, we need to make sure that the Pauli operator $P$ is Hermitian.
To this end, we classify the Pauli terms of $F_{1,s}$ in \autoref{eq:F1sExpand3} into Hermitian and anti-Hermitian types,
\begin{equation} \label{eq:F1sDecomp}
\begin{aligned}
F_{1,s}(x) &= \eta_s \sum_{ \{r,\gamma\} \in \text{Her} } \Pr(r,\gamma|s) P_{1,s}(r,\gamma) \\
& + i \eta_s \sum_{ \{r,\gamma\} \in \text{anti-Her} } \Pr(r,\gamma|s)  (-i) P_{1,s}(r,\gamma),
\end{aligned}
\end{equation}
where $\{r,\gamma\}\in \text{Her}$ and $\{r,\gamma\}\in \text{anti-Her}$, respectively, indicates the set of $\{r,\gamma\}$ such that $P_1(r,\gamma)$ is Hermitian and anti-Hermitian. 
When $\{r,\gamma\}\in\text{Her}$, the corresponding Pauli operator owns a real coefficient, which cannot be paired with $I$ based on Euler's formula. 

It seems that we cannot eliminate the Hermitian terms in \autoref{eq:F1sDecomp} by Euler's formula. However, by taking advantage of the properties of Trotter remainder $V_1(x)$, we can show that $V_{1,2}$ and $V_{1,3}$ are actually anti-Hermitian.
Recall that we can write the exponential form of $V_1(x)$ by applying the BCH formula on the definition of $V_1(x)$ in \autoref{eq:Vx},
\begin{equation} \label{eq:V1xExpandExp2}
V_1(x) = \exp\left(i ( E_{1,2} x^2 + E_{1,3} x^3 + E_{1,4} x^4 +...) \right),
\end{equation}
where $\{E_{1,s}\}$ are some Hermitian operators determined by the BCH formula. The first order term $E_{1,1}$ vanishes due to the order condition in \autoref{lem:order}. 

\begin{lemma}[Lemma~1 in \cite{childs2019nearly}] \label{lem:order}
Let $F(x)$ be an operator-valued function that is infinitely differentiable. Let $K\geq 1$ be a non-negative integer. The following two conditions are equivalent.
\begin{enumerate}
\item Asymptotic scaling: $F(x)=\mc{O}(x^{K+1})$.
\item Derivative condition: $F(0)=F'(0)=...=F^{(K)}(0)=0$.
\end{enumerate}
\end{lemma}

From \autoref{eq:V1xExpandExp2} we then have
\begin{equation} \label{eq:F12xF13x}
\begin{aligned}
F_{1,2}(x) &= i x^2 E_{1,2}, \\
F_{1,3}(x) &= i x^3 E_{1,3}
\end{aligned}
\end{equation}
by the Taylor-series expansion on \autoref{eq:V1xExpandExp2}. 

Comparing \autoref{eq:F1sDecomp} and \autoref{eq:F12xF13x}, we can see that
\begin{equation} \label{eq:F12xF13xHerZero}
\begin{aligned}
\sum_{ \{r,\gamma\} \in \text{Her} } \Pr(r,\gamma|s) P_{1,s}(r,\gamma) = 0, \quad s=2,3.
\end{aligned}
\end{equation}
This is because $F_{1,2}(x)$ and $F_{1,3}(x)$ are anti-Hermitian from \autoref{eq:F12xF13x}.

We can then modify the form of $F_{1,s}(x)$ ($s=2$ or $3$) in \autoref{eq:F1sDecomp} as follows:
\begin{equation} \label{eq:F1sxModify}
\begin{aligned}
F_{1,s}(x) 
&= i \eta_s \sum_{ \{r,\gamma \} \in \text{Her} } \Pr(r,\gamma|s) P_{1,s}(r,\gamma) \\
&\quad + i \sum_{ \{r,\gamma\} \in \text{anti-Her} } \Pr(r,\gamma|s) (-i) P_{1,s}(r,\gamma), \\
&= i \eta_s \sum_{r,\gamma} \Pr(r,\gamma|s) (-i)^{\mathbbm{1}[P_{1,s}(r,\gamma):\text{anti-Her}]} P_{1,s}(r,\gamma).
\end{aligned}
\end{equation}
In \autoref{eq:F1sxModify}, we intentionally add an extra phase $i$ on the Hermitian terms, which has no effect on $F_{1,s}(x)$ as they own zero summation value. In this way, all the Pauli expansion terms in $F_{1,2}(x)$ and $F_{1,3}(x)$ are with imaginary coefficient, which can be paired with $I$ using Euler's formula in \autoref{eq:IiyP}. We call the second- and third-order terms $V_{1,s} (s=2,3)$ the leading TS expansion orders of $V_1(x)$.
The major reason to the Hermitian terms with zero summation value is to simplify the sampling procedure of $V_{1,s}$, which will be clarified later. 

Now, we are going to eliminate the leading-order terms in $V_{1}(x)$,
\begin{equation}
\begin{aligned}
V_1(x) &= I + \sum_{s=2}^3 i \eta_s V_{1,s}' + \sum_{s>3} \eta_s V_{1,s}, \\
V_{1,s}' &= \sum_{r,\gamma} \Pr(r,\gamma|s) P_{1,s}'(r,\gamma), \quad s=2,3\\
V_{1,s} &= \sum_{r,\gamma} \Pr(r,\gamma|s) (-i)^{2r-s} P_{1,s}(r,\gamma), \quad s\geq 4.
\end{aligned}
\end{equation}
Here, the Pauli operator for the leading-order term $P_{1,s}'(r,\gamma) := (-i)^{\mathbbm{1}[P_{1,s}(r,\gamma):\text{anti-Her}]} P_{1,s}(r,\gamma)$ is always with a real coefficient. 
We can then pair $I$ with the Pauli operators in $F_{1,2}(x)$ and $F_{1,3}(x)$,
\begin{equation} \label{eq:V1xPTSexpand}
\begin{aligned}
& V_1^{(p)}(x) = I + i\eta_2 V_{1,2}' + i\eta_3 V_{1,3}' + \sum_{s=4}^{\infty} \eta_s V_{1,s} \\
&= \frac{\eta_2}{\eta_\Sigma} (I + \eta_{\Sigma} V_{1,2}'(x) ) + \frac{\eta_3}{\eta_\Sigma} (I + \eta_{\Sigma} V_{1,3}'(x) ) + \sum_{s=4}^{\infty} \eta_s V_{1,s} \\
&= \sqrt{1+\eta_\Sigma^2} \left( \frac{\eta_2}{\eta_\Sigma} R_{1,2}(\eta_\Sigma) + \frac{\eta_3}{\eta_\Sigma} R_{1,3}(\eta_\Sigma) \right) + \sum_{s=4}^{\infty} \eta_s V_{1,s}.
\end{aligned}
\end{equation}
Here, $\eta_\Sigma:= \eta_2 + \eta_3$. The third line of \autoref{eq:V1xPTSexpand} is the final LCU formula used for the first-order PTSC algorithm. In the third line, we apply the following pairing procedure,
\begin{equation} \label{eq:V1xPTSPara}
\begin{aligned}
& I + \eta_{\Sigma} V_{1,s}' = I + i \eta_{\Sigma} \sum_{r,\gamma} \Pr(r,\gamma|s) P_{1,s}'(r,\gamma) \\
&= \sum_{r,\gamma} \Pr(r,\gamma|s) (I + \eta_{\Sigma} P_{1,s}'(r,\gamma) ) \\
&= \sqrt{1+ \eta_\Sigma^2}  \sum_{r,\gamma} \Pr(r,\gamma|s) \exp\left( i\theta(\eta_\Sigma) P_{1,s}'(r,\gamma) \right) \\
&=: \sqrt{1+ \eta_\Sigma^2} R_{1,s}(\eta_\Sigma), \quad s=2,3.
\end{aligned}
\end{equation}

In practice, we introduce a truncation $s_c$ on the expansion formula in \autoref{eq:V1xPTSexpand}. For the convenience of analysis, we set $s_c>3$. 
The truncated LCU formula for $V_1^{(p)}(x)$ is then
\begin{equation} \label{eq:tildeV1xPTS}
\tilde{V}_1^{(p)}(x) = \mu_1^{(p)}(x) \Big( \sum_{s=2,3}\mr{Pr}_1^{(p)}(s) R_{1,s}(\eta_\Sigma) + \sum_{s=4}^{s_c} \mr{Pr}_1^{(p)}(s) V_{1,s} \Big),
\end{equation}
with the $1$-norm $\mu_1^{(p)}(x)$ and probabilities $\Pr_1^{(p)}(s)$ to sample the $s$-order term
\begin{equation} \label{eq:mu1_Pr1p}
\begin{aligned}
\mu_1^{(p)}(x) &= \sqrt{1+\eta_\Sigma^2} + \sum_{s=4}^{s_c} \eta_s, \\
\mr{Pr}_1^{(p)}(s) &= \frac{1}{\mu_1^{(p)}(x)}
\begin{cases}
\sqrt{1+\eta_\Sigma^2} \frac{\eta_s}{\eta_\Sigma}, \quad & s= 2,3, \\
\eta_s, \quad & s= 4,5,...,s_c.\\
\end{cases}
\end{aligned}
\end{equation}

Combined with the deterministic first-order Trotter formula, the overall LCU formula for $U(x)$ is
\begin{equation} \label{eq:tildeU1xPTS}
\tilde{U}_1^{(p)}(x) = \tilde{V}_1^{(p)}(x) S_1(x).
\end{equation}

The following proposition gives the performance characterization of $\tilde{U}^{(p)}_1(x)$ in \autoref{eq:tildeU1xPTS} to approximate $U(x)$.

\begin{proposition}[first-order Trotter-LCU formula by paired Taylor-series compensation] \label{prop:PTS1}
For $0<x<1/(2\lambda)$ and $s_c\geq 3$, $\tilde{V}_1(x)$ in \autoref{eq:tildeV1xPTS} is a $(\mu_1^{(p)}(x),\varepsilon_1^{(p)}(x))$-LCU formula of the first-order Trotter remainder $V_1(x)$ with
\begin{equation}
\mu_1^{(p)}(x) \leq e^{(e+\frac{2}{9})(2\lambda x)^{4}}, \quad 
\varepsilon_1^{(p)}(x) \leq \left(\frac{2e\lambda x}{s_c +1}\right)^{s_c+1}.
\end{equation}
As a result, $\tilde{U}_1(x)$ in \autoref{eq:tildeU1xPTS} is a $(\mu_1^{(p)}(x),\varepsilon_1^{(p)}(x))$-LCU formula of $U(x)$.
\end{proposition}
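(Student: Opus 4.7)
The plan is to establish the two bounds separately, leveraging what has already been done for the direct Taylor-series compensation in Proposition~\ref{prop:DTS1}. The distance bound $\varepsilon_1^{(p)}(x)$ follows essentially for free: the paired formula $\tilde V_1^{(p)}(x)$ in \autoref{eq:tildeV1xPTS} is obtained from the $s_c$-truncation of \autoref{eq:V1xF1sx} by (i) decorating the Hermitian Pauli terms in $F_{1,2}$ and $F_{1,3}$ with an extra $i$ factor whose coefficients sum to zero by \autoref{eq:F12xF13xHerZero}, and (ii) regrouping the leading orders with $I$ via the Euler identity \autoref{eq:IiyP}. Neither step changes the operator value, so as operators $V_1(x)-\tilde V_1^{(p)}(x) = \sum_{s>s_c} F_{1,s}(x)$, which I would bound term-by-term using $\|V_{1,s}\|\le 1$ and $\|F_{1,s}(x)\|_1 = \eta_s = (2\lambda x)^s/s!$, then apply Corollary~\ref{coro:ExpTailPower} exactly as in \autoref{eq:Vtildebound1st}.

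For the $1$-norm bound, the key is to compare $\mu_1^{(p)}(x) = \sqrt{1+\eta_\Sigma^2} + \sum_{s=4}^{s_c}\eta_s$ (with $\eta_\Sigma = \eta_2+\eta_3$) against the target $e^{(e+2/9)(2\lambda x)^4}$. First, apply $\sqrt{1+y^2}\le 1+\tfrac{1}{2}y^2$ to the paired contribution, yielding $\mu_1^{(p)}(x) \le 1 + \tfrac{1}{2}\eta_\Sigma^2 + \sum_{s=4}^\infty \eta_s$. Next, I would use the constraint $2\lambda x < 1$ to factor $\eta_\Sigma = \tfrac{(2\lambda x)^2}{2}\bigl(1+\tfrac{2\lambda x}{3}\bigr) \le \tfrac{2}{3}(2\lambda x)^2$, so that $\tfrac{1}{2}\eta_\Sigma^2 \le \tfrac{2}{9}(2\lambda x)^4$. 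This step is where the constant $2/9$ in the exponent appears.

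For the tail $\sum_{s=4}^\infty \eta_s = \sum_{s=4}^\infty (2\lambda x)^s/s!$, I would invoke the same Poisson-tail bound (Corollary~\ref{coro:ExpTailExp}) used in the proof of Proposition~\ref{prop:DTS2k}, which under $2\lambda x < 2k+1 = 5$ (comfortably satisfied here) gives $\sum_{s=4}^\infty \eta_s \le e\,(2\lambda x)^4$. Combining, $\mu_1^{(p)}(x) \le 1 + \bigl(e+\tfrac{2}{9}\bigr)(2\lambda x)^4 \le e^{(e+2/9)(2\lambda x)^4}$ by the elementary inequality $1+y\le e^y$.

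The main obstacle is making sure the pairing step preserves the identity $V_1(x) = I + \sum_{s\ge 2} F_{1,s}(x)$ at the operator level after the Hermitian-term phase adjustment described in \autoref{eq:F1sxModify}; once that is in place via \autoref{eq:F12xF13xHerZero}, the rest is arithmetic. Everything else reduces to the quantitative bounds $\eta_\Sigma \le \tfrac{2}{3}(2\lambda x)^2$ and the Poisson tail estimate, both under the mild restriction $2\lambda x < 1$ stated in the hypothesis. Assembling these into the $(\mu_1^{(p)}(x),\varepsilon_1^{(p)}(x))$-LCU claim for $V_1(x)$, and hence for $U(x)=\tilde V_1^{(p)}(x)S_1(x)$ after multiplication by the exact Trotter unitary, completes the proof.
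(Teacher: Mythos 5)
Your proposal is correct and follows essentially the same route as the paper: bound $\sqrt{1+\eta_\Sigma^2}\le 1+\tfrac{1}{2}\eta_\Sigma^2$ to extract the $\tfrac{2}{9}(2\lambda x)^4$ term under $2\lambda x<1$, bound the tail $\sum_{s\ge 4}\eta_s$ by a Poisson-tail estimate contributing the factor $e$, combine via $1+y\le e^y$, and defer the distance bound to the same truncation argument as in the direct Taylor-series case. One cosmetic slip: the clean bound $\sum_{s\ge 4}\eta_s\le e(2\lambda x)^4$ you state actually follows from Corollary~\ref{coro:ExpTailPower} (since $(e/4)^4<e$), not Corollary~\ref{coro:ExpTailExp} as cited, but the inequality is true and the argument goes through unchanged.
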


\begin{proof}
We first bound the normalization factor $\mu_1^{(p)}(x)$. When $2\lambda x<1$ we have
\begin{align*}
&\mu_1^{(p)}(x) \leq \sqrt{1+(\eta_\Sigma)^2} + \sum_{s=4}^{\infty} \eta_s \\
&\leq 1 + \frac{1}{2}\eta_\Sigma^2 + \left( e^{2\lambda x} - \sum_{s=0}^{3} \eta_s \right) \\
&= \frac{1}{2} (2\lambda x)^4 \left( \frac{1}{2!} + \frac{(2\lambda x)}{3!} \right)^2 + \left( e^{2\lambda x} - \sum_{s=1}^{3} \eta_s \right) \stepcounter{equation}\tag{\theequation}\label{eq:mu1PTSbound}\\ 
&\leq \frac{1}{2} (2\lambda x)^4 \left( \frac{1}{2} + \frac{1}{6} \right)^2 + \left( e^{2\lambda x} - \sum_{s=1}^{3} \eta_s \right)  \\
&\leq \frac{2}{9} (2\lambda x)^4 + e^{e(2\lambda x)^4} \leq e^{(e+\frac{2}{9})(2\lambda x)^4}.
\end{align*}

For the distance bound, from \autoref{eq:V1xFull} and \autoref{eq:tildeV1xPTS} we have
\begin{equation} \label{eq:Vtildebound1st}
\begin{aligned}
& \|V_1(x) - \tilde{V}_1^{(p)}(x)\| \leq \sum_{s>s_c} \eta_s \|V_{1,s} \| \\
&= \sum_{s>s_c} \frac{(2\lambda x)^s}{s!} \leq \left(\frac{2e\lambda x}{s_c +1}\right)^{s_c+1}
\end{aligned}
\end{equation} 
In the second line, we use the fact that $\|V_{1,s}\|\leq 1$. In the third line, we apply \autoref{eq:ExpTailPower}.
\end{proof} 

From \autoref{prop:PTS1} we have shown that, by introducing first-order Trotter formula, we can further suppress the 1-norm of the LCU formula $V_1^{(p)}(x)$ to $\exp(c(\lambda x)^4)\approx 1+ \mc{O}(\lambda x)^4$ where $c$ is a constant. 
In Appendix~\ref{sec:pairedTSgeq2}, we discuss the generalized LCU formula construction of higher-order Trotter remainder $V_K(x)$ with $K=2k, k\in\mbb{N}_+$. Under such constructions, we have,
\begin{equation}
\|V_K^{(p)}(x)\|_1 = \exp(c(\lambda x)^{2K+2}).
\end{equation}
In \autoref{ssc:pairedTSsampling} we will see how this can help us to improve the time scaling of the whole simulation algorithm.

\subsection{Random-sampling implementation and performance} \label{ssc:pairedTSsampling}

We have now derived the LCU formulas for Trotter remainders $V_K(x)$ and hence the small time evolution $U(x)=V_K(x) S_K(x)$ based on the idea to utilize the Trotter order condition and pair the leading-order terms in $V_K(x)$ to suppress the normalization factors. 
We now discuss the practical random-sampling implementation of them, taking the first-order case as an example.

Suppose we want to perform Hamiltonian evolution $e^{iHt}$ on an initial state $\rho$. As is illustrated in \autoref{fig:TrotterLCUidea}, in each segment, we first implement the Trotter circuit $S_1(x)$ and then compensate the remainder $V_1(x)$ by LCU. 
In the random-sampling implementation of LCU, we embed the LCU sampling into a modified Hadamard test.
In \autoref{fig:SamplingPTS}, we show the detailed sampling procedure of $V_i$ and $V_j$. In stage 1), we sample the Taylor-expansion order $s$ from a finite probability distribution $\mr{Pr}^{(p)}_1(s)$. Afterwards, in stage 2) and 3), we randomly sample Pauli string indices $\{r,\vec{r}\}$ and $l_1,...,l_r$ based on the LCU formula of $F_{1,s}$. The variables $\{r,\vec{r}\}$ obey a multinomial distribution $\text{Mul}(r,\vec{r};\{\frac{1}{2},\frac{\vec{p}}{2}\};s)$ defined in \autoref{eq:PrConMul} while $\{l_1,l_2,...,l_r\}$ are sampled identically and independently from normalized Hamiltonian coefficients $\vec{p}$ defined in \autoref{eq:H}. Finally, in stage 4), depending on whether $s$ is the leading-order ($s=2,3$ for $K=1$) or not, we determine the sampled unitary $V_i$ to be a Pauli-rotation unitary $e^{i\theta P}$ or just Pauli operators $P$.

For the gate complexity of the random-sampling implementation, we have the following theorem.

\begin{theorem}[Gate complexity of the $K$th-order random-sampling Trotter-LCU algorithm by paired Taylor-series compensation] \label{thm:PTS}
To realize a (probabilistic) Hamiltonian simulation of $e^{iHt}$ with accuracy $\varepsilon$, the gate complexity of random-sampling $K$th-order Trotter-LCU algorithm ($K=0,1$ or $2k$) based on paired Taylor-series compensation is
\begin{equation}
\mc{O}\left((\lambda t)^{1+ \frac{1}{2K+1}}(\kappa_K L + \frac{\log(1/\varepsilon)}{\log\log(1/\varepsilon)} ) \right).
\end{equation}
Here, $\kappa_K=K$ if $K=0$ or $1$, or $\kappa_K=2\times 5^{K/2-1}$ if $K=2k$.
\end{theorem}

\begin{proof}


Without loss of generality, we focus on the case when $K=1$. The case of $K=0$ and $K=2k, k\in\mbb{N}_+$ can be analyzed similarly following \autoref{prop:PTS0} and \autoref{prop:PTS2k}, respectively. 

For the random-sampling implementation, the overall LCU formula for $U(t)$ is to repeat the sampling of $\tilde{U}_{1}^{(p)}(x)$ for $\nu$ times, $\tilde{U}_{1,\mc{tot}}^{(p)}(t) = \tilde{U}_{1}^{(p)}(x)^\nu$. 
Using \autoref{prop:ProductRLCU} and \autoref{prop:PTS1}, when $0<x<\frac{1}{2\lambda}$ and $s_c\geq 3$, we conclude that $\tilde{U}_{1,\mc{tot}}^{(p)}(t)$ is a $(\mu_{1,\mc{tot}}^{(p)}(t), \varepsilon_{1,\mc{tot}}^{(p)}(t) )$-LCU formula of $U(t)$ with
\begin{equation} \label{eq:muEps1stLOP}
\begin{aligned}
\mu_{1,\mc{tot}}^{(p)}(t) &= \mu_{1}^{(p)}(x)^\nu \leq e^{(e+c_1)\frac{(2\lambda t)^{4}}{\nu^{3}}}, \\
\varepsilon_{1,\mc{tot}}^{(p)}(t) &\leq \nu \mu_{1,\mc{tot}}^{(p)}(t) \varepsilon_{1}^{(p)}(x) \leq \nu e^{(e+c_1)\frac{(2\lambda t)^{4}}{\nu^{3}}} \left( \frac{2e\lambda x}{s_c+1} \right)^{s_c+1}.
\end{aligned}
\end{equation}
Here, $c_1 = 2/9$.

To realize a $(\mu,\epsilon)$-LCU formula for $U(t)$, we only need to set the segment number $\nu$ and the truncation order $s_c$ to satisfy
\begin{equation} \label{eq:nuScBoundKthLOP}
\begin{aligned}
\nu &\geq \max\left\{ \nu_{1}^{(p)}(t),  2\lambda t \right\}, \\
s_c &\geq \max\left\{ \left\lceil \frac{\ln\left( \frac{\mu}{\varepsilon} \nu_{1}^{(p)}(t) \right)}{W_0\left( \frac{1}{2e\lambda t}\nu_{1}^{(p)}(t) \ln\left( \frac{\mu}{\varepsilon} \nu_{1}^{(p)}(t) \right) \right)} - 1 \right\rceil, 3 \right\},
\end{aligned}
\end{equation}
we can then realize a $(\mu,\varepsilon)$-LCU formula for $U(t)$ based on $\nu$ segments of $\tilde{U}_{1}^{(p)}$ in \autoref{eq:tildeU1xPTS}. Here, $\nu_{1}^{(p)}(t):=\left(\frac{2(e+c_1)\lambda t}{\ln\mu}\right)^{\frac{1}{3}} 2\lambda t$. $W_0(y)$ is the principle branch of the Lambert $W$ function whose scaling is approximately $\ln(y)$ according to the tight bound in \autoref{lem:WfuncBound} in Appendix~\ref{sec:formulas}. 
To derive the bound for $s_c$ in the second line of \autoref{eq:nuScBoundKthLOP}, we use \autoref{lem:ExpTailSolu} in Appendix~\ref{sec:formulas}. 

The gate complexity of the random-sampling implementation of the Trotter-LCU algorithm is determined by the segment number $\nu$, Trotter order $K$ and the gate complexity of each elementary gate in the LCU formula. As shown in Fig.~\ref{fig:SamplingPTS}, to construct controlled-$U(t)$, we split it to $\nu$ segments. In each segment, we need to implement $K$th-order Trotter circuits and LCU circuits. The gate complexity of each elementary gate in the LCU circuit is determined by the truncation order $s_c$ of the Taylor-series compensation. Specifically, we consider the compilation of the controlled-Pauli gate and controlled-Pauli rotation gate, as shown in \autoref{fig:gateSyn}. The number of gates is determined by the weight of the sampled Pauli matrices $\mr{wt}(P_l)$, which is upper bounded by $\mc{O}(s_c)$.

\begin{figure}[htbp]
\centering
\includegraphics[width=0.8\linewidth]{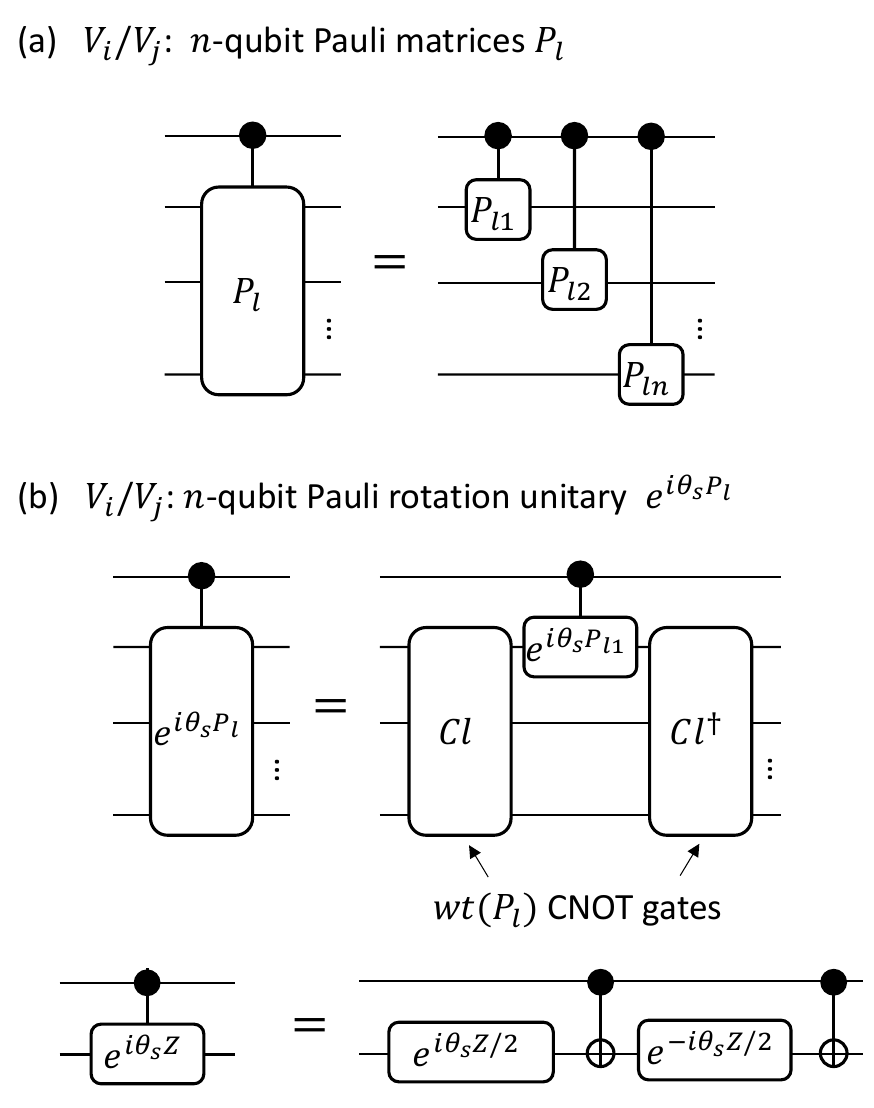}
\caption{ 
(a) Compilation of the LCU circuit when $V_i (V_j)$ is a $n$-qubit Pauli gate. (b) Compilation of the LCU circuit when $V_i (V_j)$ is a Pauli rotation unitary. Here, $Cl$ is a Clifford gate with $\mr{wt}(P_l)$ $\mc{CNOT}$ gates.
}
\label{fig:gateSyn}
\end{figure}

Therefore, the gate complexity of the overall algorithm using $K$th Trotter formula ($K=0,1,2k$) is given by 
\begin{equation} \label{eq:gateNumRandom}
\begin{aligned}
N_{K}^{(RS)} = \mc{O}(\nu(\kappa_K L + s_c))
\end{aligned}
\end{equation}
where
\begin{equation}
\kappa_K = 
\begin{cases}
0, & K=0, \\
1, & K=1, \\
2\times 5^{K/2-1}, & K=2k, k\in \mathbb{N}_+.
\end{cases}
\end{equation}

Based on \autoref{eq:gateNumRandom}, the gate complexity of the $2k$th-order Trotter-LCU algorithm is then 
\begin{equation}
\mc{O}(\nu(\kappa_K L+ s_c)) = \mc{O}\left((\lambda t)^{1+ \frac{1}{4k+1}}(\kappa_K L + \frac{\log(1/\varepsilon)}{\log\log(1/\varepsilon)} ) \right).
\end{equation}
Here, $\kappa_K$ is the stage number of the Trotter formula. When $K=0$, $\kappa_K=0$. When $K=2k$, $\kappa_K = 2\times 5^{K/2-1}$.
\end{proof}

From \autoref{thm:PTS} we can see that, by introducing the LCU compensation, the time scaling of the $K$th-order Trotter-LCU algorithm improve the bare $K$th-order Trotter time scaling from $1+1/K$ to $1+1/(2K+1)$. Moreover, the accuracy scaling is exponentially improved. This allowss us to achieve optimal gate complexity with lower-order Trotter implementation. For example, the time dependence of using first-order (respectively, second-order) Trotter formula can be improved from $t^2$ (respectively, $t^{1+1/2}$) to $t^{1+1/3}$ (respectively, $t^{1+1/5}$) by adding LCU compensation.

\section{TROTTER-LCU FORMULA WITH NESTED-COMMUTATOR COMPENSATION} \label{sec:NCC}

In this section, we provide detailed construction of the nested-commutator compensation Trotter-LCU algorithms and the gate complexity analysis.
We first sketch the procedure to derive the nested-commutator form LCU formula in \autoref{ssc:NCDerive}. Then we construct the LCU formula of  of the first-order Trotter remainder of the lattice Hamiltonians in \autoref{ssc:NestedCommLattice1st} as an example. In \autoref{ssc:NestedCommSampling}, we describe the random-sampling implementation of the nested-commutator compensation algorithm and analyze its gate complexity.

\subsection{Derivation of the nested-commutator formula} \label{ssc:NCDerive}

Our aim is to expand the LCU formula for the $K$th-order Trotter remainder $V_K(x):= U(x) S_K(x)^\dag$ ($K=1$ or $2k$, $k\in\mbb{N}_+$) in the following form:
\begin{equation} \label{eq:VKxDef}
V_K(x) = I + \sum_{s=K+1}^{2K+1} F_{K,s}^{(nc)}(x) + F_{K,\mc{res}}^{(nc)}(x),
\end{equation}
where the leading-order terms $F_{K,s}^{(nc)}(x)$ are written as a summation of nested commutators. We will use the following lemma of the operator-valued differential equation.

\begin{lemma}[(Lemma A.1 in \cite{childs2021theory}] \label{lem:variation_of_para}
Let $H(x)$, $R(x)$ be continuous operator-valued functions defined for $x\in\mbb{R}$. Then, the first-order differential equation
\begin{equation} \label{eq:ddxWx}
\frac{d}{dx} W(x) = H(x) W(x) + R(x), \quad W(0) \text{ known},
\end{equation}
has a unique solution given by
\begin{equation}
\begin{aligned}
&W(x) = \mc{T} \exp\left( \int_0^x d\tau H(\tau) \right) W(0) \\
& \quad + \int_0^x d\tau_1 \mc{T} \exp\left( \int_{\tau_1}^x d\tau_2 H(\tau_2) \right) R(\tau_1).
\end{aligned}
\end{equation}
Here, $\mc{T}$ is the time-ordering operator.
\end{lemma}
In \autoref{eq:ddxWx}, if we set $W(x)$ to be the real-time evolution $U(x)=e^{-iHx}$, we can find that $H(x)=-iH$ and $R(x)=0$. Therefore, $R(x)$ reflects the derivation of $W(x)$ from the exponential function.
We are going to apply Lemma~\ref{lem:variation_of_para} to $S_K(x)^\dag$, i.e., we set $W(x):=S_K(x)^\dag$ and $H(x)=iH$. The deviation of $S_K(x)^\dag$ from $U(x)^\dag=e^{ixH}$ is characterized by the following function,
\begin{equation} \label{eq:RKxDef}
R_K(x) = \frac{d}{dx} S_K(x)^\dag - iH S_K(x)^\dag.
\end{equation}

Applying Lemma~\ref{lem:variation_of_para}, we have
\begin{equation}
S_K(x)^\dag = e^{ixH} + \int_0^x d\tau e^{i(x-\tau)H} R_K(\tau).
\end{equation}
The Trotter remainder $V_K(x)$ can then be expressed as
\begin{equation} \label{eq:VKxRecurrence}
\begin{aligned}
V_K(x) &= U(x) S_K(x)^\dag = I + \int_0^x d\tau e^{-i\tau H} R_K(\tau) \\
&= I + \int_0^x d\tau U(\tau) S_K(\tau)^\dag S_K(\tau) R_K(\tau) \\
&= I + \int_0^x d\tau V_K(\tau) J_K(\tau),
\end{aligned}
\end{equation}
where
\begin{equation} \label{eq:JKxDef}
J_K(x) := S_K(x) R_K(x).
\end{equation}
\autoref{eq:VKxRecurrence} provides a recurrence formula to solve the expansion terms in $V_K(x)$. To be more explicit, if we expand $V_K(x)$ and $J_K(x)$ based on the operator-valued Taylor series,
\begin{equation}
V_K(x) = \sum_{s=0}^\infty G_s \frac{x^s}{s!},\quad 
J_K(x) = \sum_{s=0}^\infty C_s \frac{x^s}{s!},
\end{equation}
where $G_s$ and $C_s$ denotes the respective $s$-order term, then we have
\begin{equation} \label{eq:FsKthRecurrence}
\begin{aligned}
G_{s+1} &= \frac{(s+1)!}{x^{s+1}} \int_0^{x}d\tau \sum_{m=0}^{s} G_m C_{s-m} \frac{\tau^s}{m!(s-m)!} \\
&= \sum_{m=0}^{s} \binom{s}{m} G_m C_{s-m}
\end{aligned}
\end{equation}
since \autoref{eq:VKxRecurrence} holds for all $x\in\mbb{R}$. This is a recurrence formula which can be used to solve all the expansion terms $G_s$ of the remainder $V_K(x)$ from the expansion terms $C_r$ of $J_K(x)$.

To solve the explicit form of $G_s$, we need to study the form of expansion terms $\{C_r\}_{r=0}^{s-1}$ for the function $J_K(x)$. 
We will use the following proposition derived from \autoref{lem:order}.

\begin{proposition}[Order condition] \label{prop:order}
For the $K$th-order Trotter formula $S_K(x)$, the multiplicative remainder $V_K(x)$ and derivative remainder $J_K(x)$ defined in \autoref{eq:VKxDef}, \autoref{eq:RKxDef} and \autoref{eq:JKxDef}, respectively, the following statements are equivalent.
\begin{enumerate}[(1)]
\item $S_K(x) = U(x) + \mc{O}(x^{K+1})$,
\item $(S_K(0))^{(j)}=(-iH)^j$, for $0\leq j\leq K$,
\item $J_K(x) = \mc{O}(x^K)$,
\item $(J_K(0))^{(j)}=0$, for $0\leq j\leq K-1$,
\item $V_K(x) = I + \mc{O}(x^{K+1})$,
\item $V_K(0)=I$ and $(V_K(0))^{(j)}=0$, for $1\leq j\leq K$.
\end{enumerate}
Here, $f(x)^{(j)}$ is the $j$th-derivative of the function $f(x)$.
\end{proposition}

\begin{proof}
First, we note that $(1)\Leftrightarrow (2)$, $(3)\Leftrightarrow (4)$, and $(5)\Leftrightarrow (6)$ by applying \autoref{lem:order} and setting $F(x)$ to be $S_K(x)-U(x)$, $J_K(x)$, and $V_K(x)-I$, respectively. So we only need to prove $(2)\Leftrightarrow (4)$ and $(2)\Leftrightarrow (6)$.

We first prove $(2)\Leftrightarrow (4)$. From $(2)$ we also have $(S_K(x)^\dag)^{(j)}=(iH)^j$ for $0\leq j\leq K$. Based on \autoref{eq:RKxDef} and \autoref{eq:JKxDef}, the derivatives of $R_K(x)$ and $J_K(x)$ are
\begin{equation} \label{eq:RtJt}
\begin{aligned}
(R_K(x))^{(j)} &= (S_K(x)^\dag)^{(j+1)} - iH (S_K(x)^\dag)^{(j)}, \\
(J_K(x))^{(j)} &= \sum_{l=0}^j \binom{j}{l} (S_K(x))^{(l)} (R_K(x))^{(j-l)}.
\end{aligned}
\end{equation}
Based on $(2)$ we have $(R_K(0))^{(j)}=0$ for $0\leq j\leq K-1$. Hence $(J_K(0))^{(j)}=0$ for $0\leq j\leq K-1$. 

For the reverse direction, we notice that $R_K(x)=J_K(x)S_K(x)^\dag$. This implies
\begin{equation}
\begin{aligned}
(R_K(x))^{(j)} &= \sum_{l=0}^j \binom{j}{l} (J_K(x))^{(l)} (S_K(x)^\dag)^{(j-l)}.
\end{aligned}
\end{equation}
If we have $(J_K(0))^{(j)}=0$ for $0\leq j\leq K-1$, then $(R_K(0))^{(j)}=0$ for $0\leq j\leq K-1$. Then, from \autoref{eq:RtJt} we have $(S_K(x)^\dag)^{(j)}=(iH)^j$ for $0\leq j\leq K$, which is equivalent to $(2)$.

Now, we prove $(2)\Leftrightarrow (6)$. Based on \autoref{eq:VKxDef} we have
\begin{equation}
(V_K(x))^{(j)} = \sum_{l=0}^j \binom{j}{l} (U(x))^{(l)} (S_K(x)^\dag)^{(j-l)}.
\end{equation}
From $(2)$ we have
\begin{equation}
(V_K(0))^{(j)} = \sum_{l=0}^j \binom{j}{l} (-iH)^l (iH)^{j-l} = 0,
\end{equation}
for $1<j<K$. The reverse direction can be proven similarly based on the derivative of the formula $S_K(x) = V_K(x)^\dag U(x)$.
\end{proof}

From Proposition~\ref{prop:order}, we have the following order condition for the $K$th-order Trotter formula and remainders,
\begin{equation} \label{eq:SKJKVKorder}
\begin{aligned}
S_K(x) &= U(x) + \mc{O}(x^{K+1}), \\
J_K(x) &= \mc{O}(x^K), \\
V_K(x) &= I + \mc{O}(x^{K+1}).
\end{aligned}
\end{equation}

Compare the recurrence formula in \autoref{eq:VKxRecurrence} and the order condition in \autoref{eq:SKJKVKorder}, we will obtain the following relationship for the Taylor-series expansion terms $G_s$ and $C_s$ for $V_K(x)$ and $J_K(x)$, respectively,
\begin{equation} \label{eq:GsCsOrder}
\begin{aligned}
G_s &= 0, \quad s = 1, 2, ..., K, \\
C_s &= 0, \quad s = 0, 1, ..., K-1, \\
G_s &= C_{s-1}, \quad s = K+1, K+2, ..., 2K+1.
\end{aligned}
\end{equation}

Based on \autoref{eq:GsCsOrder}, we are going to expand $J_K(x)$ based on the operator-valued Taylor-series expansion with integral remainders,
\begin{equation} \label{eq:JKxExpand}
J_K(x) = J_{K,L}(x) + J_{K,res,2K}(x) = \sum_{s=K}^{2K} C_s \frac{x^s}{s!} + J_{K,res,2K}(x),
\end{equation}
where
\begin{equation} \label{eq:CsJresidueKth}
\begin{aligned}
&C_s = J_K^{(s)}(0), \quad s=K,K+1,...,2K, \\
&J_{K,res,s}(x) = \int_0^x d\tau \frac{(x-\tau)^{s}}{s!} J_K^{(s+1)}(\tau).
\end{aligned}
\end{equation}
Here, $J_{K,L}(x):= \sum_{s=K}^{2K} C_s \frac{x^s}{s!}$ denotes the leading-order terms in $J_K(x)$. Then, from \autoref{eq:VKxRecurrence} and \autoref{eq:GsCsOrder}, the $K$th-order Trotter remainder can be expressed as
\begin{equation} \label{eq:VKxExpandNC}
\begin{aligned}
V_K(x) &= I + M_K(x) = I + \sum_{s=K+1}^{2K+1} F_{K,s}^{(nc)}(x) + F_{K,\mc{res}}^{(nc)}(x), \\
F_{K,s}^{(nc)}(x) &= C_{s-1} \frac{x^s}{s!}, \quad s= K+1, K+2, ..., 2K+1, \\
F_{K,\mc{res}}^{(nc)}(x) &= \int_0^x d\tau \left( M_K(\tau) J_{K,L}(\tau) + V_K(\tau) J_{K,res,2K}(\tau) \right).
\end{aligned}
\end{equation}
Here, $M_K(x):=V_K(x) - I$. 

From \autoref{eq:VKxExpandNC} we can see that, the leading-order expansion terms $F_{K,s}^{(nc)}(x)$ of $V_K(x)$ with $s=K+1, K+2, ..., 2K+1$ owns a simple expression related to $C_{s-1}:=J_K^{(s-1)}(0)$. Later we will show that, $C_s$ can be simply written as a summation of nested commutators with the form, 
\begin{equation}
\ad_{H_{l_j}}^{m_j}... \ad_{H_{l_2}}^{m_2} \ad_{H_{l_1}}^{m_1} H_l,
\end{equation}
where $\sum_j m_j = s$, $H_l, H_{l_1}, ..., H_{l_j}$ are different summands in $H$.
Furthermore, we will show that $\{C_s\}$ are all anti-Hermitian. As a result, the expansion-order-pairing based on Euler's formula introduced in \autoref{sec:PTSC} can also be applied here. 
Since $\{C_s\}$ are anti-Hermitian, we can expand them by the Pauli operators as follows:
\begin{equation}
C_s = i \|C_s\|_1 \sum_{\gamma} \Pr(\gamma|s) P^{(nc)}_s(\gamma),
\end{equation}
where $\{P^{(nc)}_s(\gamma)\}$ are Hermitian Pauli operators with coefficients $1$ or $-1$. $\|C_s\|_1$ denotes the $1$-norm of this expansion. The leading-order expansion term $F_{K,s}^{(nc)}(x)$ can then be expressed as
\begin{equation} \label{eq:FKsNCexpand}
\begin{aligned}
F_{K,s}^{(nc)}(x) &= i \|C_{s-1}\|_1 \frac{x^s}{s!} \sum_{\gamma} \Pr(\gamma|s-1) P^{(nc)}_{s-1}(\gamma) \\
&=: \eta^{(nc)}_s V_{K,s}^{(nc)}.
\end{aligned}
\end{equation}
Here, $\eta^{(nc)}_s:= \|C_{s-1}\|_1 \frac{x^s}{s!}$ is the $1$-norm of the LCU formula of $F_{K,s}^{(nc)}(x)$ in \autoref{eq:FKsNCexpand}. $V_{K,s}^{(nc)}$ is the normalized LCU formula.

The residue term $F_{K,\mc{res}}^{(nc)}$ in \autoref{eq:VKxExpandNC}, however, is complicated and hard to be expressed simply using nested commutators. Due to the hardness to compensate $F_{K,\mc{res}}^{(nc)}$, we will remove it in the truncated Trotter remainder formula,
\begin{equation} \label{eq:tildeVKxNCpair}
\begin{aligned}
& \tilde{V}_K^{(nc)}(x) = I + \sum_{s=K+1}^{2K+1} F_{K,s}^{(nc)}(x) \\
&= \sum_{s=K}^{2K} \frac{\eta^{(nc)}_s}{\eta_\Sigma^{(nc)}} \left( I + \eta^{(nc)}_\Sigma V_{K,s}^{(nc)} \right)  \\
&= \sqrt{1+(\eta_\Sigma^{(nc)})^2} \sum_{s=K}^{2K} \frac{\eta^{(nc)}_s}{\eta_\Sigma^{(nc)}} R_{K,s}^{(nc)}(\eta_\Sigma).
\end{aligned}
\end{equation}
Here, $\eta_\Sigma^{(nc)}:= \sum_{s=K+1}^{2K+1}\|C_{s-1}\|_1 \frac{x^s}{s!}$. 
In the fourth line, we apply the following pairing procedure based on Euler's formula, similar to \autoref{eq:V1xPTSPara},
\begin{equation} \label{VKxNCpaired}
\begin{aligned}
&I + \eta_{\Sigma}^{(nc)} V_{K,s}^{(nc)} = I + i \eta_{\Sigma}^{(nc)} \sum_{\gamma} \Pr(\gamma|s-1) P^{(nc)}_{s-1}(\gamma) \\
&= \sqrt{1+ (\eta_\Sigma^{(nc)})^2}  \sum_{\gamma} \Pr(\gamma|s-1)  \exp\left( i\theta(\eta_\Sigma^{(nc)}) P^{(nc)}_{s-1}(\gamma) \right) \\
&=: \sqrt{1+ (\eta_\Sigma^{(nc)})^2} R_{2,s}^{(nc)}(\eta_\Sigma^{(nc)}),
\end{aligned}
\end{equation}
for $s=K+1,K+2,...,2K+1$. Recall that $\theta(x):=\tan^{-1}(x)$.

$\tilde{V}_K^{(nc)}(x)$ in \autoref{eq:tildeVKxNCpair} is the final LCU formula for the nested-commutator compensation of $V_K(x)$. In what follows, we estimate the $1$-norm $\mu_K^{(nc)}(x)$ and distance $\varepsilon_K^{(nc)}(x)$ of this LCU formula,
\begin{equation}
\begin{aligned}
\mu_K^{(nc)}(x) &:= \sqrt{1+ (\eta_\Sigma^{(nc)})^2}, \\
\varepsilon_K^{(nc)}(x) &:= \|\tilde{V}_K^{(nc)}(x) - V_K(x)\| = \|F_{K,\mc{res}}^{(nc)}(x)\|.
\end{aligned}
\end{equation}

\begin{proposition}[Bound the 1-norm and error of nested-commutator expansion formula] \label{prop:NCnormBound}
$\tilde{V}^{(nc)}_K(x)$ in \autoref{eq:tildeVKxNCpair} is a $(\mu_K^{(nc)}(x), \varepsilon_K^{(nc)}(x))$ formula of the $K$th-order Trotter remainder $V_K(x)$ with
\begin{equation}
\begin{aligned}
\mu_K^{(nc)}(x) &= \sqrt{1+ (\eta_\Sigma^{(nc)})^2} = \sqrt{1 + \left( \sum_{s=K}^{2K} \|C_s\|_1 \frac{x^{s+1}}{(s+1)!} \right)^2} \\
\varepsilon_K^{(nc)}(x) &\leq \int_0^x d\tau \left( \|M_K(\tau)\| \|J_{K,L}(\tau)\| + \|J_{K,res,2K}(\tau)\| \right),
\end{aligned}
\end{equation}
where
\begin{equation} \label{eq:JKLJKresMKbound}
\begin{aligned}
\|J_{K,L}(\tau)\| &\leq \sum_{s=K}^{2K} \|C_s\| \frac{x^{s+1}}{(s+1)!},  \\
\|J_{K,res,2K}(\tau)\| &\leq \int_0^x d\tau \frac{(x-\tau)^s}{s!} \|J_K^{(s+1)}(\tau)\|, \\
\|M_K(\tau)\| &\leq \int_0^\tau d\tau_1 \int_0^{\tau_1} d\tau_2 \frac{(\tau_1-\tau_2)^{(K-1)}}{(K-1)!} \|J_K^{(K)}(\tau_2)\|.
\end{aligned}
\end{equation}

\end{proposition}

\begin{proof}

The value of $\mu_K^{(nc)}(x)$ is derived based on \autoref{eq:FKsNCexpand} and \autoref{eq:tildeVKxNCpair}. 
To calculate $\varepsilon_K^{(nc)}(x)$, we will use the following bound,
\begin{equation} \label{eq:EpsKNCbound}
\begin{aligned}
&\varepsilon_K^{(nc)}(x) = \|F_{K,\mc{res}}^{(nc)}(x)\| \\
&\leq \int_0^x d\tau \left( \|M_K(\tau)\| \|J_{K,L}(\tau)\| + \|V_K(\tau)\| \|J_{K,res,2K}(\tau)\| \right) \\
&= \int_0^x d\tau \left( \|M_K(\tau)\| \|J_{K,L}(\tau)\| + \|J_{K,res,2K}(\tau)\| \right).
\end{aligned}
\end{equation}
In the second equality, we use the fact that $V_K(\tau)$ is a unitary. To bound $\|J_{K,L}(\tau)\|$, we have
\begin{equation}
\|J_{K,L}(\tau)\| \leq \sum_{s=K}^{2K} \|C_s\| \frac{x^{s+1}}{(s+1)!}.
\end{equation}

To bound $J_{K,res,2K}(\tau)$ in \autoref{eq:EpsKNCbound}, from \autoref{eq:CsJresidueKth} we have
\begin{equation}
\|J_{K,res,2K}(\tau)\| \leq \int_0^x d\tau \frac{(x-\tau)^s}{s!} \|J_K^{(s+1)(\tau)}\|.
\end{equation}

Finally, to bound $\|M_K(\tau)\|$ in \autoref{eq:EpsKNCbound}, we have
\begin{equation} \label{eq:MKtauBound}
\begin{aligned}
\|M_K(\tau)\| &= \left\|\int_0^\tau d\tau_1 V_K(\tau_1) J_K(\tau_1) \right\| \\
&\leq \int_0^\tau d\tau_1 \|J_K(\tau_1)\| = \int_0^\tau d\tau_1 \|J_{K,res,K-1}(\tau_1)\| \\
&\leq \int_0^\tau d\tau_1 \int_0^{\tau_1} d\tau_2 \frac{(\tau_1-\tau_2)^{(K-1)}}{(K-1)!} \|J_K^{(K)}(\tau_2)\|.
\end{aligned}
\end{equation}
In the first line, we use the definition of $M_K(\tau)=V_K(\tau)-I$ and the recurrence formula in \autoref{eq:VKxRecurrence}. In the second line, we use the property that $V_K(\tau)$ is a unitary. In the third line, we use the order condition in Proposition~\ref{prop:order}. In the final line, we apply the operator-valued Taylor-series expansion on $J_K(\tau)$ and set the truncation order $s_c=K-1$.

\end{proof}

From \autoref{prop:NCnormBound} we can see that, to study the performance of the truncated LCU formula $\tilde{V}^{(nc)}_K(x)$ in \autoref{eq:tildeVKxNCpair}, we only need to study the property of the derivatives of $J_K(x)$, including $\{C_s\}$. 
In the following section, we are going to derive the explicit formula of $\tilde{V}^{(nc)}_K(x)$, taking the lattice Hamiltonian with first-order Trotter formulas as an example.

\subsection{Example: first-order lattice Hamiltonian} \label{ssc:NestedCommLattice1st}

Now, we focus on the lattice Hamiltonians with the form $H=A+B$ in \autoref{eq:HLattice}. For the lattice Hamiltonian, the first-order Trotter formula is
$S_1(x) = e^{-ixB} e^{-ixA}$.
Here, we assume that the time-evolution of each two-qubit component $e^{-ix H_{j,j+1}}$ is easy to be implement on the quantum computer.

To derive the explicit form of the LCU formula $\tilde{V}_1^{(nc)}(x)$ in \autoref{eq:tildeVKxNCpair}, we first derive $J_1(x)$ defined in \autoref{eq:JKxDef} and its derivatives $C_s:= J_1^{(s)}(x)$. From \autoref{eq:RKxDef} and \autoref{eq:JKxDef} we have
\begin{equation}
\begin{aligned}
R_1(x) &= \frac{d}{dx} S_1(x)^\dag - iH S_1(x)^\dag = i [e^{ixA},B] e^{ixB}, \\
J_1(x) &= S_1(x) R_1(x) = i e^{-ixB}e^{-ixA} [e^{ixA},B] e^{ixB} \\
&= i\left( B - e^{-ix \ad_B}e^{-ix \ad_A}B \right).
\end{aligned}
\end{equation}

Applying the general Libniz formula to $J_1(x)$ we have,
\begin{equation} \label{eq:J1sderi}
J_1^{(s)}(x) = (-i)^{s+1} \sum_{\substack{m_1,n_1;\\m_1+n_1=s} } \binom{s}{m_1} e^{-ix \ad_B} \ad_B^{m_1} e^{-ix \ad_A} \ad_A^{n_1} B.
\end{equation}

If we set the truncation of $J_1^{(s)}(x)$ to be $s_c$ and apply the following operator Taylor-series expansion formula,
\begin{equation} \label{eq:operatorTaylor}
Q(x) = \sum_{s=0}^{s_c} \frac{x^s}{s!} Q^{(s)}(0) + \int_0^x d\tau \frac{(x-\tau)^{s_c}}{s_c!} Q^{(s_c+1)}(\tau),
\end{equation}
we can expand $J_1(x)$ as
\begin{equation}
J_1(x) = \sum_{s=0}^{s_c} C_s \frac{x^s}{s!} + J_{1,\mc{res}}(x),
\end{equation}
where
\begin{equation} \label{eq:CsJtresidue1st}
\begin{aligned}
C_s = J_1^{(s)}(0) &= (-i)^{s+1} \sum_{\substack{m_1,n_1;\\m_1+n_1=s} } \binom{s}{m_1} \ad_B^{m_1} \ad_A^{n_1} B, \\
J_{1,res,s_c}(x) &= \int_0^x d\tau \frac{(x-\tau)^{s_c}}{s_c!} J_1^{(s_c+1)}(\tau).
\end{aligned}
\end{equation}
We can see that, $C_s$ can be written as the summation of nested commutators with concise form $\ad_B^{m_1} \ad_A^{n_1} B$. Note that, $(-i)^{s+1} \ad_B^{m_1} \ad_A^{n_1} B$ with $m_1+n_1=s$ is always anti-Hermitian when $A$ and $B$ are Hermitian. 
On the other hand, these nested commutators are all with the nice property that their spectral norm and $1$-norm is linear to the system size $n$. To be more specific, we have the following norm bound.

\begin{proposition}
\label{prop:adnorm1st}
Consider a lattice Hamiltonian $H=A+B$ with the form in \autoref{eq:HLattice}. Suppose the spectral norm and $1$-norm of its components $H_{j,j+1}$ are upper bounded by $\Lambda$ and $\Lambda_1$. Then for the nested commutators appearing in \autoref{eq:CsJtresidue1st}, we have the following bound
\begin{equation} \label{eq:adnormAB1st}
\begin{aligned}
\left\| e^{-i \tau \ad_B} \ad_B^{m_1} e^{-i\tau \ad_A} \ad_A^{n_1} B \right\| \leq \frac{n}{2} 3^{m_1} 2^{n_1} 2^s \Lambda^{s+1},
\end{aligned}
\end{equation}
where $m_1,n_1$ are non-negative integers satisfying $m_1+n_1 = s$. As a result, we can bound the spectral norm of $J_1^{(s)}$ as
$\|J_1^{(s)}(x)\| \leq \frac{n}{2} 10^s \Lambda^{s+1}$.
The $1$-norm upper bound is to simply replace $\Lambda$ by $\Lambda_1$.
\end{proposition}

\begin{proof}
We first focus on one Hamiltonian term $H_{j,j+1}$ contained in $B$ and bound the norm,
\begin{equation} \label{eq:adnormH1st}
\left\| e^{-i\tau \ad_B} \ad_B^{m_1} e^{-i\tau \ad_A} \ad_A^{n_1} (H_{j,j+1}) \right\| \leq 3^{m_1}2^{n_1} 2^s \Lambda^{s+1}.
\end{equation} 
To do this, we are going to decompose commutator to the elementary nested commutators in the following form:
\begin{equation} \label{eq:elementNC1st}
\begin{aligned}
& e^{-i\tau \ad_B} \ad_{H_{j_s,j_s+1}} ... \ad_{H_{j_{n_1+1},j_{n_1+1}+1}} \cdot \\
& \; e^{-i\tau \ad_A} \ad_{H_{j_{n_1},j_{n_1}+1}} ... \ad_{H_{j_1,j_1+1}} H_{j,j+1},
\end{aligned}
\end{equation}
where $j_1,j_2,...,j_s$ are the possible vertice's indices. For each elementary nested commutator, the spectral norm can be easily bounded by 
$(2\Lambda)^s \Lambda$
by simply expanding all the commutators and applying triangle inequality. Here, we use the property that the spectral norm of all the exponential operators with anti-Hermitian exponent is $1$.

Now, we count the number of the possible elementary commutators with the form in \autoref{eq:elementNC1st}. We will check the action of each adjoint operator $\ad_A$ or $\ad_B$ from the right to the left. For the first location, if we expand $\ad_A$, there will be only two possible elementary nonzero components, $\ad_{H_{j-1,j}}$ and $\ad_{H_{j+1,j+2}}$. If the next $\ad$ is still $\ad_A$, the support will still be on the four qubits: $j-1,j,j+1$, and $j+2$. As a result, there will still be only two possible components, $\ad_{H_{j-1,j}}$ and $\ad_{H_{j+1,j+2}}$. Similarly, the exponential operator $e^{-i\tau \ad_A}$ will not enlarge the support since one can expand it to the power of $\ad_A$.
The support will be enlarged when $\ad_B$ comes. In this layer, the support of the operator will be expanded to six qubits, and there will be three elementary components. The number of possible elementary commutators is then
\begin{equation}
3^{m_1} 2^{n_1}.
\end{equation}
Combining the number of elementary nested commutators and the norm bound for each commutator and applying triangle inequality, we will obtain \autoref{eq:adnormH1st}.
Finally, in the operator $B$, there are $\frac{n}{2}$ possible summands. This finishes the proof of \autoref{eq:adnormAB1st}.

Now, we apply \autoref{eq:adnormAB1st} to bound the norm of $J_1^{(s)}(x)$. From \autoref{eq:J1sderi} we have
\begin{equation}
\begin{aligned}
\|J_1^{(s)}(x)\| &\leq \sum_{\substack{m_1,n_1;\\m_1+n_1=s} } \binom{s}{m_1} \| e^{-ix \ad_B} \ad_B^{m_1} e^{-ix \ad_A} \ad_A^{n_1} B\| \\
&\leq \frac{n}{2} 2^s \Lambda^{s+1} \sum_{\substack{m_1,n_1;\\m_1+n_1=s} } \binom{s}{m_1}  3^{m_1} 2^{n_1} \\
&= \frac{n}{2} 10^s \Lambda^{s+1}.
\end{aligned}
\end{equation}
In the third line, we apply the binomial theorem.

Since $1$-norm can be estimated based on the same logic by counting the number of elementary nested commutators and the $1$-norm of each nest commutator, the derivation for the $1$-norm is similar by replacing $\Lambda$ to $\Lambda_1$.
\end{proof}

As introduced in \autoref{ssc:NCDerive}, when $K=1$, we set $F_{1,s}^{(nc)}(x)$ with $s=2,3$ to be the leading-order terms and set the truncation order $s_c=3$. That is, we only compensate the second- and third-order error using LCU methods. While we are not able to achieve the logarithmic accuracy similar to PTSC algorithms, we can achieve a high accuracy of $\mc{O}(\varepsilon^{-1/3})$, which is cubicly improved comparing to the bare first-order Trotter result $\mc{O}(\varepsilon^{-1})$.

Based on \autoref{eq:tildeVKxNCpair}, the truncated nested-commutator LCU formula for $V_1(x)$ can be written as
\begin{equation} \label{eq:V1xNCexpand}
\begin{aligned}
\tilde{V}_1(x) &= I + \sum_{s=2}^3 F_{1,s}^{(nc)}(x), \\
&=\sqrt{1+ (\eta_\Sigma^{(nc)})^2} \sum_{s=2}^3 \frac{\eta_s^{(nc)}}{\eta_\Sigma^{(nc)}} R_{1,s}^{(nc)}(\eta_\Sigma).
\end{aligned}
\end{equation}
Here, $\eta_\Sigma^{(nc)}:= \sum_{s=2}^{3}\|C_{s-1}\|_1 \frac{x^s}{s!}$. The explicit form of $R_{1,s}^{(nc)}(\eta_\Sigma)$ can be obtained by the definitions in \autoref{eq:FKsNCexpand}, \autoref{VKxNCpaired} and the Pauli operator decomposition based on the nested-commutator form in \autoref{eq:CsJtresidue1st}.

Combined with the deterministic first-order Trotter formula, the overall LCU formula for $U(x)$ is
\begin{equation} \label{eq:tildeU1xNC}
\begin{aligned}
\tilde{U}_1^{(nc)}(x) = \tilde{V}_1^{(nc)}(x) S_1(x).
\end{aligned}
\end{equation}

Following \autoref{prop:NCnormBound} and \autoref{prop:adnorm1st}, we can bound the $1$-norm and error of the LCU formula in \autoref{eq:V1xNCexpand} and \autoref{eq:tildeU1xNC} as follows.

\begin{proposition}[first-order Trotter-LCU formula by nested-commutator compensation for lattice Hamiltonians] \label{prop:NC1}

Consider a lattice Hamiltonian with the form in \autoref{eq:HLattice}. For $\min\{\frac{1}{4\Lambda}, \frac{3}{20\Lambda_1} \}>x>0$, \autoref{eq:V1xNCexpand} is a $(\mu_1^{(nc)}(x),\varepsilon_1^{(nc)}(x))$-LCU formula of $V_1(x)$ with
\begin{equation}
\begin{aligned}
\mu_1^{(nc)}(x) &\leq \exp\left(10 n^2 (\Lambda_1 x)^4\right), \\
\varepsilon_1^{(nc)}(x) &\leq 15 n^2 (\Lambda x)^4.
\end{aligned}
\end{equation}
As a result, $\tilde{U}_1^{(nc)}(x)$ in \autoref{eq:tildeU1xNC} is a $(\mu_1^{(nc)}(x),\varepsilon_1^{(nc)}(x))$-LCU formula of $U(x)$. Here, $\Lambda_1$ and $\Lambda$ are defined in \autoref{eq:LatticeHnormbound}. 
\end{proposition}

\begin{proof}

We start from bounding the $1$-norm $\mu_1^{(nc)}(x)$. From \autoref{prop:NCnormBound} we have
\begin{equation}
\begin{aligned}
\mu_1^{(nc)}(x) &\leq \sqrt{1 + \left(\sum_{s=1}^2 \|C_s\|_1 \frac{x^{s+1}}{(s+1)!} \right)^2}  \\
&\leq 1 + \frac{1}{2} \left(\sum_{s=1}^2 \|C_s\|_1 \frac{x^{s+1}}{(s+1)!} \right)^2 \\
&= 1 + n^2\left(\frac{25}{8}(\Lambda_1 x)^4 + \frac{250}{12}(\Lambda_1 x)^5 + \frac{1250}{3!3!}(\Lambda_1 x)^6 \right) \\
&\leq 1 + 3 n^2 \frac{25}{8}(\Lambda_1 x)^4 \leq \exp\left( 10 n^2 (\Lambda_1 x)^4 \right). 
\end{aligned}
\end{equation}
In the third line, we use the fact that $C_s = J_1^{(s)}(0)$ and \autoref{prop:adnorm1st}. In the fourth line, we use the assumption that $\Lambda_1 x \leq \frac{3}{20}$. 

Now, we bound the spectral norm distance $\varepsilon_1^{(nc)}(x)=\|F_{1,\mc{res}}^{(nc)}\|$. From \autoref{prop:NCnormBound} we know that we only need to bound $\|J_{1,L}(\tau)\|$, $\|J_{1,res,2}(\tau)\|$, and $M_1(\tau)$ based on \autoref{eq:JKLJKresMKbound}.
For $\|J_{1,L}(\tau)\|$, from \autoref{prop:adnorm1st} we have
\begin{equation} \label{eq:J1Lbound}
\begin{aligned}
\|J_{1,L}(\tau)\| &\leq \sum_{s=1}^{2} \|C_s\| \frac{\tau^s}{s!} \leq \frac{n}{2} \sum_{s=1}^{2} 10^s \Lambda^{s+1} \frac{\tau^s}{s!}.
\end{aligned}
\end{equation}
For $\|J_{1,res,2}(\tau)\|$ and $\|M_1(\tau)\|$, from \autoref{eq:JKLJKresMKbound} and \autoref{prop:adnorm1st} we have
\begin{equation} \label{eq:J1res2M1bound}
\begin{aligned}
\|J_{1,res,2}(\tau)\| &\leq \int_0^\tau d\tau_1 \frac{(\tau-\tau_1)^2}{2!} \|J_1^{(3)}(\tau_1)\| \\
&\leq \frac{n}{2}10^3 \Lambda^4 \frac{\tau^3}{3!}, \\
\|M_1(\tau)\| &\leq \int_0^\tau d\tau_1 \int_0^{\tau_1} d\tau_2 \|J_1^{(1)}(\tau_2)\| \\
&\leq \frac{n}{2}10 \Lambda^2 \frac{\tau^2}{2!}.
\end{aligned}
\end{equation}
Based on \autoref{eq:J1Lbound}, \autoref{eq:J1res2M1bound} and \autoref{prop:NCnormBound},
\begin{equation}
\begin{aligned}
\|F_{1,\mc{res}}^{(nc)}\| &\leq \int_0^x d\tau \left( \|M_1(\tau)\|\|J_{1,L}(\tau)\| + \|J_{1,res,2}(\tau)\| \right) \\
&\leq \int_0^x d\tau \left( \frac{n^2}{4}\sum_{s=1}^2 10^{s+1}\Lambda^{s+3} \frac{\tau^{s+2}}{s!2!} + \frac{n}{2}10^3 \Lambda^4 \frac{\tau^3}{3!} \right) \\
&= \frac{n^2}{4}\sum_{s=1}^2 10^{s+1} \frac{(\Lambda x)^{s+2}}{s!2!(s+3)} + \frac{n}{2}10^3 \frac{(\Lambda x)^4}{4!} \\
&\leq 2\cdot \frac{n^2}{4} 10^2\frac{(\Lambda x)^4}{8} + \frac{n}{2}10^3 \frac{(\Lambda x)^4}{4!} \\
&\leq 15n^2 (\Lambda x)^4.
\end{aligned}
\end{equation}
In the fourth line, we use the assumption that $x\leq \frac{1}{4\Lambda}$.
\end{proof}

\subsection{General construction and performance}
\label{ssc:NestedCommSampling}

We can easily extend the first-order analysis above to the higher-order case. 
In Appendix~\ref{sec:AppNested2nd}, we provide an explicit construction for second-order nested-commutator expansion, which will be used for the later numerical results.
For the general LCU formula for the $K$th-order Trotter remainder, we have the following proposition to characterize the LCU formulas $\tilde{V}_K(x)$ in \autoref{eq:tildeVKxNCpair}. 

\begin{proposition}[Trotter-LCU formula by nested-commutator compensation for lattice Hamiltonians] \label{prop:NCK}
Consider a lattice Hamiltonian with the form in \autoref{eq:HLattice}. We set $\beta:=2(4\kappa+5)$ where $\kappa$ is the stage number of the $K$th-order Trotter formula ($K=1$ or $2k$). For $\min\{\frac{K+1}{\beta\Lambda},\frac{K+2}{\beta\Lambda_1}\}>x>0$, $\tilde{V}_K^{(nc)}(x)$ in \autoref{eq:tildeVKxNCpair} is a $(\mu_K^{(nc)}(x),\varepsilon_K^{(nc)}(x))$-LCU formula of $V_K(x)$ with
\begin{equation}
\begin{aligned}
\mu_K^{(nc)} &\leq \exp\left( \frac{n^2 \kappa^2 \beta^{2(K+1)}}{2 (K!)^2}  (\Lambda_1 x)^{2K+2} \right), \\
\varepsilon_K^{(nc)} &\leq (n\kappa)^2 \frac{\beta^{2K+1}}{K!(K+1)!} (\Lambda x)^{2K+2}.
\end{aligned}
\end{equation}
As a result, $\tilde{U}_K^{(nc)}(x)$ in \autoref{eq:tildeVKxNCpair} is a $(\mu_K^{(nc)}(x),\varepsilon_K^{(nc)}(x))$-LCU formula of $U(x)$. Here, $\Lambda$ and $\Lambda_1$ are, respectively, the largest spectral norm and  $1$-norm of the lattice Hamiltonian components defined in \autoref{eq:LatticeHnormbound}.
\end{proposition}
The proof of \autoref{prop:NCK} is in Appendix~\ref{sec:AppNChigherTrotter}.

The circuit of random-sampling implementation of the NCC algorithm is similar to the PTSC algorithm, which is illustrated in \autoref{fig:SamplingNC}. The only difference is that we sample the Pauli operators based on the nested-commutator expansion formula. In Appendix~\ref{sec:AppNCSampling}, we generalize our random-sampling algorithm to the $K$th-order situation to demonstrate its scalability. Specifically, the space and time cost of the sampling algorithm are $\mc{O}(K\kappa_K)$ and $\mc{O}(K(\log{\kappa_K}+\log{n}))$, respectively.
In practice, we need to expand the leading-order terms $\{F_{K,s}\}$ to a summation of different adjoint operators based on the methods in \autoref{ssc:NCDerive} first, and then calculate the corresponding sampling probability $\Pr(\gamma|s)$.
We have the following theorem to characterize the gate complexity of the $K$th-order NCC algorithm with random-sampling implementation.

\begin{theorem}[Gate complexity of the $K$th-order random-sampling Trotter-LCU algorithm by nested-commutator compensation for lattice Hamiltonians] \label{thm:NC}
In a $K$th-order Trotter-LCU algorithm ($K=1$ or $2k$) based on nested-commutator compensation, if the segment number $\nu$ satisfy all the requirements below,
\begin{equation} \label{eq:nuBoundKthNC}
\begin{aligned}
\nu &\geq \max\left\{\frac{\beta\Lambda}{K+1}t, \frac{\beta\Lambda_1}{K+2}t \right\}, \\
\nu &\geq \left( \frac{\kappa^2}{2\ln\mu (K!)^2} \right)^{\frac{1}{2K+1}} \beta^{1+\frac{1}{2K+1}} n^{\frac{2}{2K+1}} (\Lambda_1 t)^{1+\frac{1}{2K+1}}, \\
\nu &\geq \left( \frac{\mu\kappa^2}{K!(K+1)!} \right)^{\frac{1}{2K+1}} \beta \varepsilon^{-\frac{1}{2K+1}} n^{\frac{2}{2K+1}} (\Lambda t)^{1+\frac{1}{2K+1}},
\end{aligned}
\end{equation}
we can then realize a $(\mu,\varepsilon)$-LCU formula for $U(t)$ based on $\nu$ segments of $\tilde{U}^{(nc)}_K(x)$ in \autoref{eq:tildeVKxNCpair}. As a result, the gate complexity of random-sampling $K$th-order Trotter-LCU algorithm based on nested-commutator compensation for the lattice Hamiltonian is 
\begin{equation}
\mc{O}\left( n^{1+\frac{2}{2K+1}} t^{1+\frac{1}{2K+1}} \varepsilon^{-\frac{1}{2K+1}} \right).
\end{equation}
Here, $\beta:=2(4\kappa+5)$ where $\kappa$ is the stage number of the Trotter formula. $\Lambda$ and $\Lambda_1$ are, respectively, the largest spectral norm and $1$-norm of the lattice Hamiltonian components defined in \autoref{eq:LatticeHnormbound}.
\end{theorem}

\begin{proof}

For the random-sampling implementation, the overall LCU formula for $U(t)$ is to repeat the sampling of $\tilde{U}_K^{(nc)}(x)$ for $\nu$ times, $\tilde{U}_{K,\mc{tot}}^{(nc)}(t) = \tilde{U}_K^{(nc)}(x)^\nu$. 
Using Proposition~\ref{prop:ProductRLCU} and \ref{prop:NCK}, when $0<x<\min\{\frac{K+1}{\beta\Lambda},\frac{K+2}{\beta\Lambda_1}\}$, we conclude that $\tilde{U}_{K,\mc{tot}}^{(nc)}(t)$ is a $(\mu_{K,\mc{tot}}^{(nc)}(t), \epsilon_{K,\mc{tot}}^{(nc)}(t) )$-LCU formula of $U(t)$ with
\begin{equation} \label{eq:muEpsKthNC}
\begin{aligned}
&\mu_{K,\mc{tot}}^{(nc)}(t) = \mu_K(x)^\nu \leq \exp\left( \frac{n^2 \kappa^2 \beta^{2(K+1)}}{2 (K!)^2} \frac{(\Lambda_1 t)^{2K+2}}{\nu^{2K+1}} \right), \\
&\varepsilon_{K,\mc{tot}}^{(nc)}(t) \leq \nu \mu_{K,\mc{tot}}^{(nc)}(t) \varepsilon_K^{(nc)}(x) \\
&\quad\quad \leq \nu \mu_{K,\mc{tot}}^{(nc)}(t)\left( (n\kappa)^2 \frac{\beta^{2K+1}}{K!(K+1)!}\frac{(\Lambda t)^{2K+2}}{\nu^{2K+2}} \right).
\end{aligned}
\end{equation} 

To realize a $(\mu,\varepsilon)$-LCU formula for $U(t)$, we only need the segment number $\nu$ satisfy all the requirements in \autoref{eq:nuBoundKthNC}.
It suffices to choice 
\begin{equation}
\nu = \mc{O}\left( n^{\frac{2}{2K+1}} t^{1+\frac{1}{2K+1}} \varepsilon^{-\frac{1}{2K+1}} \right).
\end{equation}

Based on \autoref{eq:gateNumRandom}, the gate complexity of the $K$th-order Trotter-LCU algorithm is then 
\begin{equation}
\mc{O}(\nu(\kappa_K L+ s_c)) = \mc{O}\left( n^{1+\frac{2}{2K+1}} t^{1+\frac{1}{2K+1}} \varepsilon^{-\frac{1}{2K+1}} \right).
\end{equation}
Here we use the fact that $s_c=2K+1=\mc{O}(1)$ and $L=\mc{O}(n)$ for lattice Hamiltonians.
\end{proof}

So far, we have restricted our construction of the nested-commutator expansion to lattice Hamiltonians. In practice, various physical Hamiltonians, including those for the electronic structure of quantum materials~\cite{babbush2018lowdepth}, quantum chemistry Hamiltonians~\cite{lee2021even,Berry2019qubitizationof,vonBurg2021quantum} and power-law interaction Hamiltonians~\cite{childs2021theory}, also possess sparse properties. As a result, the methods for the nested-commutator expansion of the Trotter remainder 
$V_K(x)$ introduced in \autoref{ssc:NCDerive} can also be applied to a general Hamiltonian $H$. 

In Appendix~\ref{sec:AppNCgeneral}, we discuss how to perform the nested-commutator expansion of $V_K(x)$ for general Hamiltonians, and discuss the performance of the resulting LCU formula in \autoref{prop:NCGenH}.
We find that the $1$-norm of the LCU formula based on nested-commutator expansion is closely related to the following nested commutator norm of the Hamiltonian $H=\sum_{l=1}^L H_l$,
\begin{equation}
\tilde{\alpha}_{\mr{com}}(H) := \sum_{l_{s+1}=1}^L ...\sum_{l_{2}=1}^L \| [H_{l_{s+1}},...[H_{l_2},H_l]]...] \|.
\end{equation}
$\tilde{\alpha}_{\mr{com}}(H)$ is originally defined in Ref.~\cite{childs2021theory} to analyze the performance of Trotter methods. In Ref.~\cite{childs2021theory}, the authors estimate the values of $\alpha^{(s)}_H$ for typical Hamiltonian models like plane-wave-basis quantum chemistry models, $k$-local Hamiltonian, and Hamiltonians with power-law interactions. Following similar estimation methods, we can also calculate $\tilde{\alpha}_{\mr{com}}(H)$ for different models and consider their explicit nested-commutator expansions. We will leave the explicit evaluation of other typical Hamiltonians for a future work.

\section{CONCLUSION AND OUTLOOK} \label{sec:conclusion}

We study the Hamiltonian simulation algorithms based on the composition of Trotter and LCU algorithms. 
In both theoretical and numerical studies, we show that the $0$th-order paired Taylor-series compensation (PTSC) algorithm, $2k$th-order PTSC algorithm and the $2k$th-order nested-commutator compensation (NCC) algorithm enjoy different advantages and will be useful in different scenarios. 
Taking the $n$-qubit lattice Hamiltonian as an example: the $0$th-order PTSC algorithm performs the best when $t$ is small compared with $n$ and $1/\varepsilon$; the $2k$th-order PTSC algorithm performs the best when $n$ is small compared with $t$ and $1/\varepsilon$; while the $2k$th-order NC algorithm performs the best when $1/\varepsilon$ is small compared with $n$ and $t$. 
In practice, with finite system size $n$, simulation time $t$ and inverse accuracy $1/\varepsilon$, we can think about a hybrid implementation of different algorithms. For example, when the sparsity $L$ of a given Hamiltonian is large, we can first split the Hamiltonian to two parts,
\begin{equation}
H = H_1 + H_2 = \sum_{j=1}^{L_1} H_{j} + \sum_{k=L_1+1}^{L} H_{k},
\end{equation}
where the summands in $H_1$ are the few dominant terms with large coefficients. We can then perform second-order Trotter only for $H_1$, and use PTS to expand the remainder
\begin{equation}
V_2(x;H_1) = \prod_{j_1=L_1}^1 e^{i H_{j_1} x} e^{-iHx} \prod_{j_2=1}^{L_1} e^{i H_{j_2} x}.
\end{equation}
If the number of dominant terms $L_1$ is small, we can then reduce the $L$ dependence of the algorithm similar to $0$th-order PTSC algorithm while keep the good $t$-dependence of second-order PTSC algorithm.
As another example, we can hybridize second-order PTSC and NCC algorithms: we apply the nested-commutator compensation for the leading-order terms (i.e., the terms with $s=3,4$, and $5$), and normal Taylor-series compensation for higher-order terms. In this case, we can find an optimal truncation location $s_c$ which fulfills the high simulation accuracy requirement and keeps the nested-commutator scaling for the leading-order compensation terms.

The design of Trotter-LCU algorithms is based on a series connection of Trotter and LCU algorithms.
A similar composition method can also be exploited for other Hamiltonian simulation algorithms~\cite{hagan2022composite}.
For example, we may replace the deterministic Trotter with the ones with random permutation~\cite{childs2019fasterquantum}. 
Recently, Cho et al.~\cite{cho2022doubling} consider similar idea to compensate the Trotter error using randomized unitary operators.
Using anticommutative cancellation~\cite{Zhao2021exploiting}, we can further reduce the compensation terms.

\begin{acknowledgments}
We thank Xiaoming Zhang, Xiao Yuan, Min-Hsiu Hsieh, Yuan Su, Kaiwen Gui, Ming Yuan, Senrui Chen, and Ying Li for helpful discussion and suggestions. 
We would like to especially thank Wenjun Yu for highlighting the validity of the variant where the ancillary qubit is measured and reset for each segment.
P.~Z.~ and L.~J.~acknowledge support from the ARO MURI (W911NF-21-1-0325), AFOSR MURI (FA9550-19-1-0399, FA9550-21-1-0209), AFRL (FA8649-21-P-0781), NSF (OMA-1936118, ERC-1941583, OMA-2137642), NTT Research, and the Packard Foundation (2020-71479).
J.S. would like to thank support from the Innovate UK (Project No.10075020) and support through Schmidt Sciences, LLC.
Q.~Z.~acknowledges HKU Seed Fund for Basic
Research for New Staff via Project No.~2201100596, Guangdong Natural Science Fund via Project No.~2023A1515012185, National Natural Science Foundation of China (NSFC) via Projects No.~12305030 and No.~12347104, Hong Kong Research Grant Council (RGC) via No.~27300823, No.~N\_HKU718/23, and No.~R6010-23, Guangdong Provincial Quantum Science Strategic Initiative GDZX2200001.
\end{acknowledgments}

\begin{appendix}

\section{PAIRED TAYLOR-SERIES COMPENSATION WITH HIGHER-ORDER TROTTER FORMULAS} \label{sec:pairedTSgeq2}

Following the same idea in \autoref{ssc:pairedTS1}, we now generalize it to the case with $2k$th-order Trotter formula. Expanding the $2k$th-order Trotter remainder, we have
\begin{equation}
V_{2k}(x) = \sum_{s=0}^\infty F_{2k,s}(x) = \sum_{s=0}^\infty \eta_s V_{2k,s},
\end{equation}
where
\begin{equation} \label{eq:F2ksSimp}
\begin{aligned}
F_{2k,s}(x) &= \eta_s \sum_{r,\gamma} \Pr(r,\gamma|s) P_{2k}(r,\gamma), \\
\Pr(r,\gamma|s) &:= \Pr(r,\vec{r}_{1:\kappa},\vec{r}'_{1:\kappa}|s) \sum_{l_{1:r}} p_{l_{1:r}}^{(r)}, \\
P_{2k}(r,\gamma) &:= (-i)^{2r-s} \prod^\leftarrow\vec{P}^{\vec{r}'_\kappa}... \prod^\leftarrow\vec{P}^{\vec{r}'_1} \left( P^{(r)}_{l_{1:r}} \right) \cdot \\
&\quad \prod^\rightarrow\vec{P}^{\vec{r}_1}...\prod^\rightarrow\vec{P}^{\vec{r}_\kappa}.
\end{aligned}
\end{equation}
Here, we use $\gamma$ to denote all the expansion variables $\{\vec{r}_{1:\kappa},\vec{r}'_{1:\kappa},l_{1:r}\}$ besides $r$.

We ignore the derivation and provide the general form of the LCU formula for $\tilde{V}_{2k}^{(p)}(x)$,
\begin{equation} \label{eq:tildeV2kxPTS}
\begin{aligned}
\tilde{V}_{2k}^{(p)}(x) &= \mu_{2k}^{(p)}(x) \Big( \sum_{s=2k+1}^{4k+1} \mr{Pr}_{2k}^{(p)}(s) R_{2k,s}^{(p)}(\eta_\Sigma) \\
&\quad + \sum_{s=4k+2}^{s_c} \mr{Pr}_{2k}^{(p)}(s) V_{2k,s}^{(p)} \Big).
\end{aligned}
\end{equation}
where
\begin{align*}
\mu_{2k}^{(p)}(x) &= \sqrt{1+(\eta_\Sigma)^2} + \sum_{s=4k+2}^{s_c} \eta_s, \quad 
\eta_\Sigma = \sum_{s=2k+1}^{4k+1} \eta_s, \stepcounter{equation}\tag{\theequation}\label{eq:V2kxPTSPara}\\ 
\mr{Pr}^{(p)}(s) &= \frac{1}{\mu_{2k}^{(p)}(x)}
\begin{cases}
& \sqrt{1+(\eta_\Sigma)^2} \frac{\eta_s}{\eta_\Sigma}, \\ &\quad  s=2k+1,2k+2,...,4k+1, \\
& \eta_s, \\ &\quad s= 4k+2,4k+3,...,s_c. \\
\end{cases}
\end{align*}
The Pauli rotation unitary $R_{2k,2q+1}^{(p)}(y):=\exp(i\theta(y) P_{2k}'(r,\gamma)')$ where $\theta(y):=\tan^{-1}(1+y^2)$ and $P_{2k}'(r,\gamma) := (-i)^{\mathbbm{1}[P_{2k,s}(r,\gamma):\text{anti-Her}]} P_{2k}(r,\gamma)$. 

Combined with the deterministic Trotter formula, the overall LCU formula for $U(x)$ is
\begin{equation} \label{eq:tildeU2kxPTS}
\begin{aligned}
\tilde{U}_{2k}^{(p)}(x)  = \tilde{V}_{2k}^{(p)}(x) S_{2k}(x).
\end{aligned}
\end{equation}

The following proposition gives the performance characterization of $\tilde{U}_{2k}^{(p)}(x)$ to approximate $U(x)$.

\begin{proposition}[$2k$th-order Trotter-LCU formula by paired Taylor-series compensation] \label{prop:PTS2k}

For $0<x<1/(2\lambda)$ and $s_c\geq 4k+1$, $\tilde{V}_{2k}^{(p)}(x)$ in \autoref{eq:tildeV2kxPTS} is a $(\mu_{2k}^{(p)}(x),\varepsilon_{2k}^{(p)}(x))$-LCU formula of $V_{2k}(x)$ with
\begin{equation}
\begin{aligned}
\mu_{2k}^{(p)}(x) &\leq e^{(e+c_k)(2\lambda x)^{4k+2}}, \\
\varepsilon_{2k}^{(p)}(x) &\leq \left(\frac{2e\lambda x}{s_c +1}\right)^{s_c+1}.
\end{aligned}
\end{equation}
Here, 
\begin{equation} \label{eq:c_k}
c_k := \frac{1}{2}\left( \frac{e}{2k+1} \right)^{4k+2},
\end{equation}
so that $0.3>c_k>0$. As a result, $\tilde{U}_{2k}^{(p)}(x)$ in \autoref{eq:tildeU2kxPTS} is a $(\mu_{2k}^{(p)}(x),\varepsilon_{2k}^{(p)}(x))$-LCU formula of $U(x)$.
\end{proposition}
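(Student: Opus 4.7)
The plan is to follow the same two-step template used for the lower-order cases (Propositions~\ref{prop:PTS0} and \ref{prop:PTS1}): bound the $1$-norm $\mu_{2k}^{(p)}(x)$ and bound the truncation distance $\varepsilon_{2k}^{(p)}(x)$ separately, then lift the claim from $V_{2k}(x)$ to $U(x)$ by right-multiplying with the unitary $S_{2k}(x)$, which preserves both the $1$-norm and the spectral distance.

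For the $1$-norm, I would start from the explicit formula in \autoref{eq:V2kxPTSPara},
\begin{equation*}
\mu_{2k}^{(p)}(x) \;=\; \sqrt{1+\eta_\Sigma^{\,2}} \;+\; \sum_{s=4k+2}^{s_c} \eta_s,
\end{equation*}
apply the elementary inequality $\sqrt{1+y^2}\le 1+y^2/2$, and then extend the finite tail to infinity to get
\begin{equation*}
\mu_{2k}^{(p)}(x) \;\le\; \tfrac12 \eta_\Sigma^{\,2} \;+\; \Bigl(e^{2\lambda x} - \sum_{s=1}^{4k+1}\eta_s\Bigr).
\end{equation*}
By Corollary~\ref{coro:ExpTailExp} (the exponential tail bound already invoked in the $1$st-order proof) the bracketed quantity is at most $e^{e(2\lambda x)^{4k+2}}$ provided $2\lambda x < 2k+1$, which is implied by $2\lambda x<1$. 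The remaining task is to show $\tfrac12 \eta_\Sigma^{\,2} \le c_k (2\lambda x)^{4k+2}$ with $c_k=\tfrac12(e/(2k+1))^{4k+2}$. For this I would factor $\eta_\Sigma=(2\lambda x)^{2k+1}\sum_{j=0}^{2k}(2\lambda x)^j/(2k+1+j)!$, use $2\lambda x<1$ to drop the $(2\lambda x)^j$ factors, and bound $\sum_{j\ge 0} 1/(2k+1+j)! \le 2/(2k+1)!$ since the factorials grow at least geometrically with ratio $\ge 2$. Combined with Stirling's bound $(2k+1)!\ge\sqrt{2\pi(2k+1)}\,((2k+1)/e)^{2k+1}$, this yields the desired estimate; finally $1+y\le e^y$ merges the two pieces into $e^{(e+c_k)(2\lambda x)^{4k+2}}$.

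For the distance bound, the argument is a verbatim repeat of the one for direct Taylor-series compensation, since $\tilde V_{2k}^{(p)}(x)$ and $\tilde V_{2k}(x)$ in \autoref{eq:tildeV2kxDS} truncate the \emph{same} underlying series at the same order; the pairing in \autoref{eq:V2kpairing} is an exact algebraic rewriting of the first $4k+1$ terms and introduces no additional approximation. Hence
\begin{equation*}
\bigl\|V_{2k}(x)-\tilde V_{2k}^{(p)}(x)\bigr\| \;\le\; \sum_{s>s_c}\eta_s \;\le\; \Bigl(\tfrac{2e\lambda x}{s_c+1}\Bigr)^{s_c+1}
\end{equation*}
by Corollary~\ref{coro:ExpTailPower}. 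Passing to $\tilde U_{2k}^{(p)}(x)=\tilde V_{2k}^{(p)}(x)S_{2k}(x)$, the $1$-norm is unchanged (the elementary unitaries just get composed with $S_{2k}(x)$) and unitary invariance of the spectral norm gives the same $\varepsilon$.

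The routine parts are the truncation error and the Corollary~\ref{coro:ExpTailExp} application. The main obstacle is the $\eta_\Sigma^{\,2}$ estimate: making the explicit constant $c_k=\tfrac12(e/(2k+1))^{4k+2}$ come out cleanly requires a careful Stirling-type bound on $(2k+1)!$ combined with the geometric tail bound above, and one must verify that $c_k<0.3$ and that the resulting constant is sharper than the naive $2/((2k+1)!)^2$. For $k=1$ the direct algebraic computation reproduces the $2/9$ constant appearing in \autoref{eq:mu1PTSbound}; for general $k$ a small amount of care with Stirling's inequality suffices.
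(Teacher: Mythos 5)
Your proposal is correct and follows essentially the same route as the paper's proof: bound $\sqrt{1+\eta_\Sigma^2}\le 1+\tfrac12\eta_\Sigma^2$, control the high-order tail with Corollary~\ref{coro:ExpTailExp}, show $\tfrac12\eta_\Sigma^2\le c_k(2\lambda x)^{4k+2}$, merge via $1+y\le e^y$, and reuse the Proposition~\ref{prop:DTS2k} tail argument for $\varepsilon_{2k}^{(p)}(x)$ since the pairing is an exact rewriting. The only deviation is in the $c_k$ step: the paper applies Corollary~\ref{coro:ExpTailPower} with argument $1$ to get $\sum_{s\ge 2k+1}1/s!\le(e/(2k+1))^{2k+1}$ directly, while your geometric-ratio-plus-Stirling estimate gives $\tfrac{2}{((2k+1)!)^2}\le\tfrac{1}{\pi(2k+1)}\bigl(e/(2k+1)\bigr)^{4k+2}\le c_k$, which is equally valid; note only that the $2/9$ constant you mention comes from the first-order case in \autoref{eq:mu1PTSbound}, whereas for $k=1$ here the analogous computation yields $c_1=\tfrac12(e/3)^6$.
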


\begin{proof}

We will focus on the case with $k=1$. The proof for a general $k$ is simiar.
We first bound the normalization factor $\mu_{2}^{(p)}(x)$. When $2\lambda x<1$ we have
\begin{align*}
&\mu_{2}^{(p)}(x) \leq \sqrt{1+(\eta_\Sigma)^2} + \sum_{s=6}^{\infty} \eta_s \\
&\leq 1 + \frac{1}{2}\eta_\Sigma^2 + \left( e^{2\lambda x} - \sum_{s=0}^{5} \eta_s \right) \\
&= \frac{1}{2} (2\lambda x)^{6} \left( \frac{1}{3!} + \frac{2\lambda x}{4!} +\frac{(2\lambda x)^{2}}{5!} \right)^2 + \left( e^{2\lambda x} - \sum_{s=1}^{5} \eta_s \right) \stepcounter{equation}\tag{\theequation}\label{eq:mu2kLOPbound}\\ 
&\leq \frac{1}{2} (2\lambda x)^{6} \left( \sum_{s=3}^\infty \frac{1}{3!} \right)^2 + \left( e^{2\lambda x} - \sum_{s=1}^{5} \eta_s \right)  \\
&\leq \frac{1}{2}\left( \frac{e}{3} \right)^6 (2\lambda x)^{6} + e^{e(2\lambda x)^{6}} \leq e^{(e+c_1)(2\lambda x)^{6}}.
\end{align*}
In the fifth line, we use Corollary~\ref{coro:ExpTailPower}.

The distance bound can be derived in the same manner as the one in Proposition~\ref{prop:PTS1}.
\end{proof} 

\section{TIGHT NESTED-COMMUTATOR ANALYSIS FOR SECOND-ORDER TROTTER-LCU ALGORITHM} \label{sec:AppNested2nd}

We can easily extend the methods for the first-order analysis in \autoref{ssc:NestedCommLattice1st} to the higher-order case. Taking the second-order case as an example, the second-order Trotter formula of the lattice Hamiltonian is
$S_2(x) = e^{-i\frac{x}{2}A} e^{-ixB} e^{-i\frac{x}{2}A}$.

To derive the explicit LCU formula $\tilde{V}_2^{(nc)}(x)$ in \autoref{eq:tildeVKxNCpair}, we first derive $J_2(x)$ defined in \autoref{eq:JKxDef} and its derivatives $C_s:=J_2^{(s)}(x)$. From \autoref{eq:RKxDef} and \autoref{eq:JKxDef} we have
\begin{equation} \label{eq:R2xJ2x}
\begin{aligned}
S_2(x)^\dag &= e^{ixH} + \int_0^x d\tau e^{i(x-\tau)H} R_2(\tau), \\
R_2(\tau) &= i[e^{i\frac{x}{2}A}, B]e^{ixB}e^{i\frac{x}{2}A} + \frac{i}{2}[e^{i\frac{x}{2}A}e^{ixB}, A]e^{i\frac{x}{2}A}, \\
J_2(x) &= S_2(x)R_2(x) = \frac{i}{2}\left( A - e^{-i\frac{x}{2}\ad_A}e^{-ix\ad_B}A \right) \\ 
& + i\left( e^{-i\frac{x}{2}\ad_A}B - e^{-i\frac{x}{2}\ad_A}e^{-ix\ad_B} e^{-i\frac{x}{2}\ad_A}B \right).
\end{aligned}
\end{equation}

Following the approach in \autoref{ssc:NCDerive}, we expand $V_2(x)$ and $J_2(x)$ by
\begin{equation}
\begin{aligned}
V_2(x) = \sum_{s=0}^\infty G_s \frac{x^s}{s!},
\quad J_2(x) = \sum_{s=0}^\infty C_s \frac{x^s}{s!},
\end{aligned}
\end{equation}
then based on \autoref{prop:order} and the recurrence formula \autoref{eq:VKxRecurrence} we have,
\begin{equation} \label{eq:2ndNCorder}
\begin{aligned}
G_1 = G_2 = 0, \quad C_0 = C_1 = 0, \\
G_s = C_{s-1}, \quad s=3,4,5.
\end{aligned}
\end{equation}
Combining \autoref{eq:R2xJ2x} and \autoref{eq:2ndNCorder}, we can show that
\begin{equation}
G_s = \mc{O}(n^{\lfloor \frac{s}{3} \rfloor}).
\end{equation}
Therefore, the first three nontrivial terms $G_3,G_4,G_5=\mc{O}(n)$. These terms will be set as the leading-order terms. 

Based on \autoref{eq:2ndNCorder}, we are going to expand $J_K(x)$ based on the operator-valued Taylor-series expansion with integral remainders,
\begin{equation}
J_2(x) = J_{2,L}(x) + J_{2,res,4}(x) = \sum_{s=2}^{4} C_s \frac{x^s}{s!} + J_{2,res,4}(x),
\end{equation}
where
\begin{equation} \label{eq:CsJresidue2nd}
\begin{aligned}
C_s &= J_2^{(s)}(0), \quad s=2,3,4, \\
J_{2,res,s}(x) &= \int_0^x d\tau \frac{(x-\tau)^{s}}{s!} J_2^{(s+1)}(\tau).
\end{aligned}
\end{equation}
$J_{2,L}(x):= \sum_{s=2}^{4} C_s \frac{x^s}{s!}$ denotes the leading-order terms in $J_K(x)$. Then, from \autoref{eq:VKxRecurrence} and \autoref{eq:2ndNCorder}, the second-order Trotter remainder can be expressed as
\begin{equation} \label{eq:V2xExpandNC}
\begin{aligned}
V_2(x) &= I + \sum_{s=3}^5 F_{2,s}^{(nc)}(x) + F_{2,\mc{res}}^{(nc)}(x), \\
F_{2,s}^{(nc)}(x) &= C_{s-1} \frac{x^s}{s!}, \quad 3,4,5, \\
F_{2,\mc{res}}^{(nc)}(x) &= \int_0^x d\tau \left( M_2(\tau) J_{2,L}(\tau) + V_2(\tau) J_{2,res,4}(\tau) \right).
\end{aligned}
\end{equation}
We put the explicit nested-commutator expressions of the leading-order terms $F_{2,s}^{(nc)}(x)$ ($s=3,4$, or $5$) in Sec.~\ref{sec:AppNested2nd}.

In practice, we truncate the formula with the order $s_c=5$. Based on \autoref{eq:tildeVKxNCpair}, the truncated nested-commutator LCU formula for $V_2(x)$ can be written as
\begin{equation} \label{eq:V2xNCexpandpair}
\begin{aligned}
\tilde{V}_2(x) &= I + \sum_{s=3}^5 F_{2,s}^{(nc)}(x), \\
&=\sqrt{1+ (\eta_\Sigma^{(nc)})^2} \sum_{s=3}^5 \frac{\eta_s^{(nc)}}{\eta_\Sigma^{(nc)}} R_{2,s}^{(nc)}(\eta_\Sigma).
\end{aligned}
\end{equation}
Here, $\eta_\Sigma^{(nc)}:= \sum_{s=3}^{5}\|C_{s-1}\|_1 \frac{x^s}{s!}$. The explicit form of $R_{2,s}^{(nc)}(\eta_\Sigma)$ can be obtained by the definitions in \autoref{eq:FKsNCexpand}, \autoref{VKxNCpaired} and the Pauli operator decomposition based on the nested-commutator form in \autoref{eq:NumC2C3C4}. 

For a tight numerical estimation, we seek for a tighter bound of the $1$-norm $\mu_2^{(nc)}(x)$ and spectral norm accuracy $\varepsilon_2^{(nc)}(x)$ for the LCU formula in \autoref{eq:V2xNCexpandpair}.

\subsection{Bound the 1-norm of recurrence function}

Hereafter, we define $A':=-\frac{i}{2}A$ and $B' := -i B$ to simplify the notation.
We have
\begin{equation} \label{eq:JxJaxJbx}
\begin{aligned}
J_2(x) &= J_a(x) + J_b(x), \\
J_a(x) &= \left(e^{x\ad_{A'}}e^{x\ad_{B'}}A' -A'\right), \\
J_b(x) &= \left(e^{x\ad_{A'}}e^{x\ad_{B'}}e^{x\ad_{A'}}B' -e^{x\ad_{A'}}B'\right).
\end{aligned}
\end{equation}

Apply the Libniz rule to $J_a(x)$ and $J_b(x)$, we have
\begin{equation} \label{eq:JaJbDeri}
\begin{aligned}
J_a^{(s)}(x) &= \sum_{l=0}^s \binom{s}{l} e^{x\ad_{A'}} \ad_{A'}^l e^{x\ad_{B'}} \ad_{B'}^{s-l} A', \\
J_b^{(s)}(x) &= \sum_{\substack{l_1, l_2, l_3 \\ l_1+l_2+l_3=s,l_2+l_3\neq 0}} \binom{s}{l_1,l_2,l_3}\cdot \\
&\quad\quad e^{x\ad_{A'}} \ad_{A'}^{l_1} e^{x\ad_{B'}} \ad_{B'}^{l_2}  e^{x\ad_{A'}} \ad_{A'}^{l_3} B'.
\end{aligned}
\end{equation}
Then,
\begin{equation}
\begin{aligned}
C_s &= \sum_{l=0}^{s-1} \ad_{A'}^{l}\ad_{B'}^{s-l}A \\
&\quad + \sum_{\substack{l_1, l_2, l_3 \\ l_1+l_2+l_3=s,l_2+l_3\neq 0}} \binom{s}{l_1,l_2,l_3} \ad_{A'}^{l_1}\ad_{B'}^{l_2} \ad_{A'}^{l_3} B.
\end{aligned}
\end{equation}
We can solve the explicit form of the leading-order terms with $s=2,3$, and $4$,
\begin{equation} \label{eq:NumC2C3C4}
\begin{aligned}
C_2 &= \left(\ad_{B'}^2 A' + 2\ad_{A'}\ad_{B'} A' \right)+ \left(3\ad_{A'}^2 B' + 2\ad_{B'}\ad_{A'} B'\right), \\
C_3 &= \left(\ad_{B'}^3 A' + 3\ad_{A'}\ad_{B'}^2 A' + 3\ad_{A'}^2\ad_{B'} A' \right) + ( 7\ad_{A'}^3 B'  \\
&\quad  + 3\ad_{B'}\ad_{A'}^2 B' + 6\ad_{A'}\ad_{B'}\ad_{A'} B' + 3\ad_{B'}^2\ad_{A'} B' ), \\
C_4 &= \left(\ad_{B'}^4 A' + 4\ad_{A'}\ad_{B'}^3 A' + 6\ad_{A'}^2\ad_{B'}^2 A' + 4\ad_{A'}^3\ad_{B'} A'  \right) \\ 
&\quad + ( 15\ad_{A'}^4 B' + 4\ad_{B'}\ad_{A'}^3 B' + 12 \ad_{A'}^2\ad_{B'}\ad_{A'} B' \\ 
&\quad + 12 \ad_{A'}\ad_{B'}\ad_{A'}^2 B'+ 6\ad_{B'}^2\ad_{A'}^2 B'  \\
&\quad + 12\ad_{A'}\ad_{B'}^2\ad_{A'} B' + 4\ad_{B'}^3\ad_{A'} B' ).
\end{aligned}
\end{equation}
We can then expand the operators $A'$ and $B'$ to Pauli operators and solve the $1$-norm of $C_2, C_3$, and $C_4$ under the Pauli decomposition based on \autoref{eq:NumC2C3C4}. The $1$-norm of $\tilde{V}_2^{(nc)}(x)$ is given by
\begin{equation} \label{eq:tildeMuNC}
\tilde{\mu}_2^{(nc)}(x) = \sqrt{1 + \left(\sum_{s=2}^4 \|C_s\|_1 \frac{x^{s+1}}{(s+1)!} \right)^2}.
\end{equation}

We summarize the procedure to estimate the segment number $\nu$ of second-order Trotter-LCU under nested-commutator compensation as follows.
\begin{enumerate}
\item For a specific lattice Hamiltonian, get the explicit Pauli expansion form of $C_2$, $C_3$, $C_4$ by \autoref{eq:NumC2C3C4}. 
\item Get the expression of the $1$-norm $\tilde{\mu}^{(nc)}(x)$ by \autoref{eq:tildeMuNC}. Calculate $x_{U}$ based on $\tilde{\mu}^{(nc)}(x)=2$.
\item Solve the following residue operator,
\begin{equation}
V_{2,\mc{res}}(x) = V_2(x) - \tilde{V}_2(x) = U(x)S_2(x)^\dag,
\end{equation}
calculate its spectral norm $\|V_{2,\mc{res}}(x)\|$ numerically. Check either the following requirement are satisfied:
\begin{equation} \label{eq:VresxRequire}
\|V_{2,\mc{res}}(x)\| \frac{t}{x} \leq \varepsilon,
\end{equation}
where $\varepsilon$ is a preset accuracy requirement. If not, we search the largest $x$ value in the region $0<x<x_U$ by dichotomy which makes \autoref{eq:VresxRequire} satisfied.
\item The number of segments is given by $\nu:=t/x$. We then calculate the number of gates based on methods in Appendix~\ref{ssc:NC2ndGateNum}.
\end{enumerate}

We have a tighter count bound for the norm of $J_a^{(s)}(x)$ and $J_b^{(s)}(x)$,
\begin{proposition} \label{prop:adnorm2nd}
Consider a lattice Hamiltonian $H=A+B$ with the form in \autoref{eq:HLattice}. Suppose the spectral norm and $1$-norm of its components $H_{j,j+1}$ are bounded by $\Lambda$ and $\Lambda_1$. Then for the recurrence function $J_2(x)$, we have the following norm bound for its derivatives,
\begin{equation}
\begin{aligned}
\|J^{(s)}_2(x)\| &\leq \frac{n}{2} (7^s + 12^s) \Lambda^{s+1}, \\
\|J^{(s)}_2(x)\|_1 &\leq \frac{n}{2} (7^s + 12^s) \Lambda_1^{s+1}.
\end{aligned}
\end{equation}
\end{proposition}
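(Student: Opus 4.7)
The plan is to extend the elementary-nested-commutator counting in the proof of Proposition~\ref{prop:adnorm1st} to the second-order setting. Using the splitting $J_2=J_a+J_b$ from \autoref{eq:JxJaxJbx}, I will bound $\|J_a^{(s)}(x)\|$ and $\|J_b^{(s)}(x)\|$ separately and then combine them by the triangle inequality, with targets $\tfrac{n}{2}\cdot 7^s\Lambda^{s+1}$ and $\tfrac{n}{2}\cdot 12^s\Lambda^{s+1}$ respectively so that their sum reproduces the stated bound. The $1$-norm claim follows from the identical argument under the replacement $\Lambda\mapsto\Lambda_1$, since each step uses only $\|[H,Y]\|_{\star}\leq 2\|H\|_{\star}\|Y\|_{\star}$ (valid for both the spectral and $1$-norm) together with a purely combinatorial support count on the lattice.

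Starting from the Leibniz expansions in \autoref{eq:JaJbDeri}, I will first use the commutativity $[\ad_A,e^{x\ad_A}]=0$ and the analogous identity for $B$ to push each power of $\ad$ past its matching exponential, and then invoke unitarity of the outermost $e^{x\ad_A}$ to strip it from the spectral norm. The remaining interior exponentials $e^{x\ad_B}$, and for $J_b^{(s)}$ also $e^{x\ad_A}$, are handled by the key lattice feature that $A$ and $B$ are each sums of mutually commuting two-qubit terms. Consequently conjugation by $e^{xB}$ of an operator supported on a finite lattice interval reduces to conjugation by only the $O(1)$ many $B$-summands overlapping that interval, preserving the spectral norm and dilating the effective support only to its immediate neighborhood; in particular $e^{x\ad_A}H_{j,j+1}=H_{j,j+1}$ when $H_{j,j+1}\in A$. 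After these simplifications I am left with a plain nested commutator $\ad_A^{l}\ad_B^{s-l}H_{j,j+1}$, or $\ad_A^{l_1}\ad_B^{l_2}\ad_A^{l_3}H_{j,j+1}$ for $J_b^{(s)}$, whose spectral norm is bounded by (number of elementary nested commutators)$\,\times\,2^s\Lambda^{s+1}$, the latter factor coming from $s$ iterated applications of $\|[H,Y]\|\leq 2\|H\|\|Y\|$.

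The core combinatorial step is the light-cone count. As in Proposition~\ref{prop:adnorm1st}, each $\ad$ acting on the current operator contributes a bounded number of non-commuting branches determined by the lattice summands overlapping the current support, and the commuting-summand structure of $A$ (resp.\ $B$) confines that support to a constant-width neighborhood of the starting qubits, so the per-adjoint branch counts stay bounded as $s$ grows. Collecting these multiplicities, applying the binomial identity $\sum_{l}\binom{s}{l}a^l b^{s-l}=(a+b)^s$ for $J_a^{(s)}$ and the trinomial analogue for $J_b^{(s)}$, and summing over the $n/2$ starting summands in $A$ (or $B$) produces the prefactor $\tfrac{n}{2}$ times the targeted geometric factors $7^s$ and $12^s$. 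The main obstacle I anticipate is the bookkeeping inside $J_b^{(s)}$: its three alternating groups $\ad_A^{l_1}\,e^{x\ad_B}\,\ad_B^{l_2}\,e^{x\ad_A}\,\ad_A^{l_3}$ open the support on both ends with an intermediate exponential sandwiched between the inner two adjoint groups, which must be shown via the commuting-summand conjugation argument to spread the operator by only a bounded number of sites before the next adjoint acts. Without this controlled-widening lemma the trinomial sum would outstrip $12^s$; once it is in place, the triangle inequality $\|J_2^{(s)}\|\leq\|J_a^{(s)}\|+\|J_b^{(s)}\|$ delivers the full claim, and the $1$-norm version follows verbatim with $\Lambda\mapsto\Lambda_1$.
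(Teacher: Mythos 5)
Your plan is essentially the paper's own proof: the same splitting $J_2=J_a+J_b$ from \autoref{eq:JxJaxJbx}, the same Leibniz expansion \autoref{eq:JaJbDeri}, the same light-cone branch counting as in Proposition~\ref{prop:adnorm1st}, a binomial/trinomial resummation, the triangle inequality, and the replacement $\Lambda\mapsto\Lambda_1$ for the $1$-norm. One bookkeeping point, as written, would spoil the constants: you bound every elementary nested commutator uniformly by $2^s\Lambda^{s+1}$, but combined with the natural branch counts ($3,2$ for $J_a^{(s)}$ and $4,3,2$ for $J_b^{(s)}$) this yields $10^s$ and $18^s$, not the targeted $7^s$ and $12^s$. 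Hitting the stated constants requires tracking the factor $\tfrac{1}{2}$ attached to every $A$-adjoint, which comes from the midpoint splitting of the second-order Trotter step (the rescaling $A\leftarrow -\tfrac{i}{2}A$ in \autoref{eq:modAB}), so that each $\ad_A$ contributes $\Lambda$ while each $\ad_B$ contributes $2\Lambda$; the paper's proof uses $(3^l2^{s-l})\,\Lambda^l(2\Lambda)^{s-l}$ for $J_a^{(s)}$ and $(4^{l_1}3^{l_2}2^{l_3})\,\Lambda^{l_1}(2\Lambda)^{l_2}\Lambda^{l_3}$ for $J_b^{(s)}$, giving $(3+4)^s=7^s$ and $(4+6+2)^s=12^s$. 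With that correction your argument coincides with the paper's.
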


\begin{proof}
From \autoref{eq:JaJbDeri} we have,
\begin{equation} \label{eq:JasJbsNorm}
\begin{aligned}
\|J_a^{(s)}(x)\| &\leq \sum_{l=0}^s \binom{s}{l} \|e^{x\ad_{A'}} \ad_{A'}^l e^{x\ad_{B'}} \ad_{B'}^{s-l} A'\|, \\
\|J_b^{(s)}(x)\| &\leq \sum_{\substack{l_1, l_2, l_3 \\ l_1+l_2+l_3=s,l_2+l_3\neq 0}} \binom{s}{l_1,l_2,l_3} \cdot \\
&\quad\|e^{x\ad_{A'}} \ad_{A'}^{l_1} e^{x\ad_{B'}} \ad_{B'}^{l_2}  e^{x\ad_{A'}} \ad_{A'}^{l_3} B'\|.
\end{aligned}
\end{equation}

To bound the norm of nested commutators, we use the same methods in \autoref{prop:adnorm1st}. We have
\begin{equation} \label{eq:adnorm2ndtight}
\begin{aligned}
& \|e^{x\ad_{A'}} \ad_{A'}^l e^{x\ad_{B'}} \ad_{B'}^{s-l} A\| \\
&\quad\quad \leq (3^l 2^{s-l})(\Lambda^l (2\Lambda)^{s-l} ) \cdot (\frac{n}{2}\Lambda), \\
& \|e^{x\ad_{A'}} \ad_{A'}^{l_1} e^{x\ad_{B'}} \ad_{B'}^{l_2}  e^{x\ad_{A'}} \ad_{A'}^{l_3} B\| \\
&\quad\quad \leq (4^{l_1} 3^{l_2} 2^{l_3})(\Lambda^{l_1} (2\Lambda)^{l_2} \Lambda^{l_3} ) \cdot (\frac{n}{2}\Lambda).
\end{aligned}
\end{equation}
Here, the first bracket of each bound corresponds to possible nest commutators, while the second bracket of each bound indicates the norm enlargement of each nested commutator.

Combining \autoref{eq:JasJbsNorm} and \autoref{eq:adnorm2ndtight}, we have
\begin{equation} \label{eq:JasJbsNorm2}
\begin{aligned}
\|J_a^{(s)}(x)\| &\leq \sum_{l=0}^s \binom{s}{l} (3\Lambda)^l (2\cdot 2\Lambda)^{s-l}) \cdot (\frac{n}{2}\Lambda) \\
&\leq (7\Lambda)^s \left( \frac{n}{2}\Lambda \right), \\
\|J_b^{(s)}(x)\| &\leq \sum_{\substack{l_1, l_2, l_3 \\ l_1+l_2+l_3=s,l_2+l_3\neq 0}} \binom{s}{l_1,l_2,l_3}\cdot \\
&\quad(4\Lambda)^{l_1} (3\cdot 2\Lambda)^{l_2} (2\Lambda)^{l_3}) \cdot (\frac{n}{2}\Lambda) \\
&\leq (12\Lambda)^s \left( \frac{n}{2}\Lambda \right).
\end{aligned}
\end{equation}

The $1$-norm bound can be derived in the same way.
\end{proof}

We can derive the bounds for $\tilde{\mu}_2^{(nc)}(x)$ and $\varepsilon_2^{(nc)}(x)$ as follows:
\begin{align}
&\tilde{\mu}_2^{(nc)}(x) = \sqrt{ 1 + \left(\sum_{s=2}^4 \|C_s\|_1 \frac{x^{s+1}}{(s+1)!}\right)^2} \label{eq:tildemuNC2ndBound1} \\
&\leq \sqrt{ 1 + \frac{n^2}{4} \left( \sum_{s=2}^4 (7^s + 12^s) \frac{(\Lambda_1 x)^{s+1}}{(s+1)!}\right)^2 } \label{eq:tildemuNC2ndBound2} \\
&\leq \exp\left( (35n)^2 (\Lambda_1 x)^6 \right) \label{eq:tildemuNC2ndBound3}, \\
&\varepsilon_2^{(nc)}(x) \leq \frac{n^2}{4}(7^2+12^2)\sum_{s=2}^4 (7^s+12^s)\frac{(\Lambda x)^{s+4}}{3!s!(s+4)} \nonumber \\
&\quad\quad\quad + \frac{n}{2}(7^5+12^5)\frac{(\Lambda x)^6}{6!} \label{eq:epsilonNC2ndBound1}\\
&\leq (3n^2 + 185n)(\Lambda x)^6 \leq 370 n^2 (\Lambda x)^6, \label{eq:epsilonNC2ndBound2} 
\end{align}
Here, \autoref{eq:tildemuNC2ndBound3} holds when $\Lambda_1 x\leq 1/3$, \autoref{eq:epsilonNC2ndBound2} holds when $\Lambda x\leq 21/72$.

\subsection{Estimating the gate counts} \label{ssc:NC2ndGateNum}

To estimate the performance, i.e., gate complexity of the Trotter-LCU algorithm based on nested-commutator compensation, we need to first estimate the number of segments $\nu=t/x$ in the algorithm. This is determined by the following three constraints,
\begin{equation} \label{eq:nuConstr}
\begin{aligned}
\Lambda_1 x, \Lambda x\leq 1 \quad \Leftarrow\quad &\nu \geq \max\{\Lambda_1 t, \Lambda t\}, \\
(\tilde{\mu}^{(nc)})^\nu \leq 2 \quad \Leftrightarrow\quad &\nu\ln\tilde{\mu}^{(nc)} \leq \ln2, \\
\nu\tilde{\mu}^\nu (\varepsilon^{(nc)}) \leq \varepsilon \quad \Leftarrow \quad &\nu\varepsilon^{(nc)} \leq \frac{1}{2}\varepsilon.
\end{aligned}
\end{equation}

After we solve the required segment number $\nu$ by \autoref{eq:nuConstr}, we can estimate the gate number accordingly. Here, we introduce the method to estimate the gate number of the LCU part. The gate number in the implementation of second-order Trotter formula can be evaluated following the methods in Ref.~\cite{childs2018toward}. 
In the worst-case scenario, the Pauli weight of the gate is determined by the weight of Pauli operators contained in $C_4$ in \autoref{eq:NumC2C3C4}. The largest Pauli weight is $6$. As a result, a controlled-Pauli gate will cost at most six $\mc{CNOT}$ gates and no non-Clifford gate.

If we consider the paired algorithm, when we sample the third-order term, it will be a Pauli rotation unitary on four qubits. In this case, it will cost eight $\mc{CNOT}$ gates and two single-qubit Pauli rotation gates $R_z(\theta)$.

We summarize the whole procedure to estimate the gate complexity.
\begin{enumerate}
\item Input: Hamiltonian parameters: $H=A+B$, $n$, $\Lambda$, $\Lambda_1$. Normalization requirements $\mu=2$, accuracy requirements $\varepsilon$, time requirements $t$.

\item Normalization factor estimation.
\begin{enumerate}
\item Analytical way (scalable). Using \autoref{eq:tildemuNC2ndBound2} to get the function $\tilde{\mu}^{(nc)}(x)$.
\item Numerical way (scalable). Using \autoref{eq:NumC2C3C4} to get the form of $C_2$, $C_3$, and $C_4$. Then get the function $\tilde{\mu}^{(nc)}(x)$ by \autoref{eq:tildemuNC2ndBound1}.
\end{enumerate}

\item Accuracy estimation.
\begin{enumerate}
\item Analytical way (scalable). Using \autoref{eq:epsilonNC2ndBound1} to get the function $\varepsilon^{(nc)}(x)$.
\item Numerical way (unscalable). Calculate $V_{2,\mc{res}}(x):= U(x)S_2(x)^\dag - I - \sum_{s=2}^4 C_s \frac{x^{s+1}}{(s+1)!}$ numerically. Solve its largest singular value, which is an upper bound of $\varepsilon^{(nc)}(x)$.
\end{enumerate}

\item Based on the constraints in \autoref{eq:nuConstr}, calculate the segment number $\nu$.

\item Analyze the $\mc{CNOT}$ and $R_z(\theta)$ gate number.
\end{enumerate}

\section{EXPLICIT NESTED-COMMUTATOR COMPENSATION FOR HIGHER-ORDER TROTTER REMAINDERS} \label{sec:AppNChigherTrotter}

In this section, we provide detailed results for the nested-commutator compensation for higher-order Trotter remainders introduced in Sec.~\ref{sec:NCC}. Here, we will focus on the lattice model Hamiltonians in \autoref{eq:HLattice}. The results for general Hamiltonians will be presented in Appendix~\ref{sec:AppNCgeneral}.

As introduced in Sec.~\ref{ssc:NCDerive}, for the (asymmetric) multiplicative remainder $V_{K}(x)$, our aim is to construct a LCU formula with the following form
\begin{equation} \label{eq:AppV2kxNC}
V_{K}(x) = I + \sum_{s=K+1}^{2K+1} F_{K,s}^{(nc)}(x) + F_{K,\mc{res}}^{(nc)}(x),
\end{equation}
where $F_{K,\mc{res}}^{(nc)}(x)$ denotes the term with $x$-order $\mc{O}(x^{2K+2})$. In practice, we will use the truncated LCU formula with paired leading-order terms, derived in \autoref{eq:tildeVKxNCpair},
\begin{equation} \label{eq:ApptildeVKxNCpair}
\begin{aligned}
&\tilde{V}_K^{(nc)}(x) = I + \sum_{s=K+1}^{2K+1} F_{K,s}^{(nc)}(x) \\
&= I + i \sum_{s=K}^{2K} \eta^{(nc)}_s V_{K,s}^{(nc)} \\
&= \sum_{s=K}^{2K} \frac{\eta^{(nc)}_s}{\eta_\Sigma^{(nc)}} \left( I + \eta^{(nc)}_\Sigma V_{K,s}^{(nc)} \right)  \\
&= \sqrt{1+(\eta_\Sigma^{(nc)})^2} \sum_{s=K}^{2K} \frac{\eta^{(nc)}_s}{\eta_\Sigma^{(nc)}} R_{K,s}^{(nc)}(\eta_\Sigma),
\end{aligned}
\end{equation}
where $\eta^{(nc)}_\Sigma = \sum_{s=K+1}^{2K+1} \eta^{(nc)}_s = \sum_{s=K}^{2K} \|C_{s-1}\|_1 \frac{x^{s+1}}{(s+1)!}$.

We are going to finish the following tasks,
\begin{enumerate}
\item (Sec.~\ref{ssc:AppNCDerive}) Derive the explicit formulas for the leading-order expansion terms $F_{K,s}^{(nc)}(x)$ with $s=K+1,...,2K+1$. 
\item (Sec.~\ref{ssc:AppNormBound}) Prove the $1$-norm $\mu_K^{(nc)}(x)$ and error bound $\varepsilon_K^{(nc)}(x)$ in Proposition~\ref{prop:NCK}.
\end{enumerate}

\subsection{Derivation of LCU formula with nested-commutator form} \label{ssc:AppNCDerive}

We first introduce the canonical expression of the $K$th-order Trotter formula,
\begin{equation} \label{eq:AppSKtWt}
S_K(x) := W(x) = \prod_{j=1}^{\kappa} W_j(x) = W_\kappa(x)... W_2(x)W_1(x),
\end{equation}
where
\begin{equation} \label{eq:AppWjt}
W_j(x) := e^{-ix b_j B} e^{-ix a_j A},
\end{equation}
is the $j$th stage of the Trotter formula. $\kappa$ is the stage number. We have $\kappa=1$ for $K=1$ and $\kappa\leq 2\times 5^{k-1}$ when $K=2k$. The stage lengths $a_j$ and $b_j$ are determined based on \autoref{eq:S2x} and \autoref{eq:S2kx}. We have
\begin{equation}
\begin{aligned}
0\leq & a_j, b_j \leq 1, \quad \forall j=1,2,...,\kappa, \\
\sum_{j=1}^\kappa & a_j = \sum_{j=1}^\kappa b_j =1.
\end{aligned}
\end{equation}
For example, for the second-order Trotter formula $S_2(x)=e^{-ix \frac{A}{2}} e^{-ix B} e^{-ix\frac{A}{2}}$, we set the stage number $\kappa=2$ with $a_1=\frac{1}{2}, b_1=1; a_2 = \frac{1}{2}, b_2=0$.

The Hermitian conjugate of $W(x)$ is
\begin{equation}
W(x)^\dag = \prod_{j=\kappa}^{1} W_j(x)^\dag = W_1(x)^\dag W_2(x)^\dag ... W_\kappa(x)^\dag.
\end{equation}

As is discussed in Sec.~\ref{ssc:NCDerive}, to derive the nested-commutator form of $V_K(x)$, we first solve $J_K(x)$ defined in \autoref{eq:JKxDef}. We have
\begin{equation} \label{eq:Rt}
\begin{aligned}
R_K(x) &= \frac{d}{dx} W(x)^\dag - iH W(x)^\dag, \\
J_K(x) &:= W(x) R_K(x). 
\end{aligned}
\end{equation}

From Proposition~\ref{prop:order}, we can write the $K$th-Trotter remainder $V_K(x)$ and $J_K(x)$ as the following form:
\begin{equation} \label{eq:VtJt}
\begin{aligned}
V_K(x) &:= I + M_K(x) = I + \sum_{s=K+1}^{2K+1} C_{s-1} \frac{x^s}{s!} + F_{K,\mc{res}}(x), \\
J_K(x) &= \sum_{s=K}^{2K} C_s \frac{x^s}{s!} + J_{K,res,2K}(x),
\end{aligned}
\end{equation}
where $F_{K,\mc{res}}(x)=\mc{O}(x^{2K+2})$ and $J_{K,res,2K}(x)=\mc{O}(x^{2K+1})$ are the higher-order remaining terms to be analyzed later. We also denote
\begin{equation} \label{eq:FLJL}
\begin{aligned}
F_L(x) &:= \sum_{s=K+1}^{2K+1} C_{s-1} \frac{x^s}{s!}, \\
J_L(x) &:= \sum_{s=K}^{2K} C_s \frac{x^s}{s!},
\end{aligned}
\end{equation}
as the leading-orders whose explicit forms will be calculated in this section. We will show that $F_L(x)$ contains $\mc{O}(n)$ term where $n$ is the lattice size. 

Now, we are going to solve the exact form of $C_s$ for the leading-orders. We first try to solve the succinct form of $R_K(x)$ based on its definition in \autoref{eq:Rt}. Taking the derivative for each Trotter stage, we have
\begin{equation} \label{eq:Rtderive1}
\begin{aligned}
& R_K(x) = \frac{d}{dx}\left[ \prod_{j=\kappa}^1 W_j(x)^\dag \right] - iH \prod_{j=\kappa}^1 W_j(x)^\dag \\
&= \sum_{j=\kappa}^{1} \prod_{l=j-1}^1 W_l(x)^\dag \frac{d}{dx} W_j(x)^\dag \prod_{l=\kappa}^{j+1} W_l(x)^\dag \\
&\quad\quad - \sum_{j=\kappa}^1 (ia_j A + ib_j B) \prod_{l=\kappa}^{1} W_l(x)^\dag.
\end{aligned}
\end{equation}
Here we assume $1\leq j-1<j+1\leq \kappa$. When $j=1$ (or $\kappa$), the value of the product $\prod_{l=j-1}^1 W_l(x)^\dag$ (or $\prod_{l=\kappa}^{j+1} W_l(x)^\dag$) will be regarded as $I$. Recall that $W_j(x)^\dag = e^{ixa_j A} e^{ix b_j B}$. We further expand $\frac{d}{dx} W_j(x)^\dag$ and merge the two terms together,
\begin{equation} \label{eq:Rtderive2}
\begin{aligned}
& R_K(x) = \sum_{j=\kappa}^{1} \prod_{l=j-1}^1 W_l(x)^\dag \left( ia_j A W_j(x)^\dag + W_j(x)^\dag ib_j B \right) \cdot \\
&\quad\quad \prod_{l=\kappa}^{j+1} W_l(x)^\dag - \sum_{j=\kappa}^1 (ia_j A + ib_j B) \prod_{l=\kappa}^{1} W_l(x)^\dag \\
&= \sum_{j=\kappa}^{1} \left[ \prod_{l=j}^1 W_l(x)^\dag, i a_{j+1} A + i b_j B \right] \prod_{l=\kappa}^{j+1} W_l(x)^\dag.
\end{aligned}
\end{equation}
Here, we assume $a_{\kappa+1}=0$. Now, we simplify the commutator by splitting the product and then change the summation order,
\begin{equation} \label{eq:Rtderive3}
\begin{aligned}
& R_K(x) = \sum_{j=\kappa}^{1} \sum_{s=j}^{1} \Big( \prod_{m=s-1}^1 W_m(x)^\dag [W_s(x)^\dag, i a_{j+1} A + i b_j B] \cdot\\
&\qquad\qquad \prod_{m=j}^{s+1} W_m(x)^\dag \Big)  \prod_{l=\kappa}^{j+1} W_l(x)^\dag \\
&= \sum_{s=\kappa}^1 \sum_{j=\kappa}^s \Big( \prod_{m=s-1}^1 W_m(x)^\dag [W_s(x)^\dag, i a_{j+1} A + i b_j B] \cdot \\
&\qquad\qquad \prod_{m=j}^{s+1} W_m(x)^\dag \Big)  \prod_{l=\kappa}^{j+1} W_l(x)^\dag \\
&= \sum_{s=\kappa}^1 \prod_{m=s-1}^1 W_m(x)^\dag [W_s(x)^\dag, i c_{j+1} A + i d_s B] \prod_{l=\kappa}^{s+1} W_l(x)^\dag,
\end{aligned}
\end{equation}
where 
\begin{equation} \label{eq:csds}
c_s:= \sum_{j=\kappa}^s a_j, \quad d_s=\sum_{j=\kappa}^s b_j, \quad s\leq \kappa.
\end{equation}
We also set $c_{\kappa+1}=0$.

\begin{widetext}
Based on \autoref{eq:Rtderive3}, we now derive the succinct form of $J_K(x)$,
\begin{equation} \label{eq:Jtderive1}
\begin{aligned}
J_K(x) = W(x)R_K(x) = \prod_{l=1}^{\kappa} W_l(x) R_K(x) = \sum_{s=\kappa}^1 \prod_{l=s}^\kappa W_l(x) [W_s(x)^\dag, ic_{s+1}A + id_s B] \prod_{m=\kappa}^{s+1} W_m(x)^\dag.
\end{aligned}
\end{equation}

We expand each stage of the Trotter formula $W_l(x)$ in the formula,
\begin{equation} \label{eq:Jtderive2}
\begin{aligned}
J_K(x) &= i\sum_{j=\kappa}^1\left( \prod_{l=j+1}^\kappa W_l(x)^\dag \left( c_{j+1}A + d_j B \right) \prod_{l=\kappa}^{j+1} W_l(x)^\dag - \prod_{l=j}^\kappa W_l(x)^\dag \left( c_{j+1} A + d_j B \right) \prod_{l=\kappa}^{j} W_l(x)^\dag \right) \\
&= -i\sum_{j=\kappa}^1\left( \prod_{l=j}^\kappa e^{-i\tau b_l \ad_B} e^{-i\tau a_l \ad_A} \left( c_{j+1}A + d_j B \right) - \prod_{l=j+1}^\kappa e^{-i\tau b_l \ad_B} e^{-i\tau a_l \ad_A} \left( c_{j+1} A + d_j B \right) \right).
\end{aligned}
\end{equation}

Finally, we apply the following operator-valued Taylor expansion formula with integral form of the remainder:
\begin{equation} \label{eq:AppOperatorTaylor}
\begin{aligned}
Q(x) = \sum_{s=0}^{k} \frac{x^s}{s!} Q^{(s)}(0) + \int_0^x d\tau \frac{(x-\tau)^k}{k!} Q^{(k+1)}(\tau).
\end{aligned}
\end{equation}

By the general Libniz formula, we obtain the derivatives of $J_K(x)$ as follows:
\begin{equation} \label{eq:JxDerivatives}
\begin{aligned}
J^{(s)}_K(x) &= (-i)^{s+1} \sum_{j=\kappa}^1 \sum_{\substack{m_j,n_j;...;m_{\kappa},n_{\kappa} \\ m_j,n_j\neq0; \sum_{l=j}^{\kappa} m_l + n_l =s}} \binom{s}{m_j,n_j,...,m_\kappa, n_\kappa} \cdot \\
&\quad  \prod_{l=j}^\kappa (b_l^{m_l} a_l^{n_l}) \prod_{l=j}^\kappa \left( e^{-ix b_l \ad_B} \ad_B^{m_l} e^{-ix a_l \ad_A} \ad_A^{n_l}\right) (c_{j+1}A + d_j B). 
\end{aligned}
\end{equation}
\end{widetext}

We can then expand $J_K(x)$ around $t=0$ as follows:
\begin{equation} \label{eq:JtExpand}
\begin{aligned}
J_K(x) &= \sum_{s=K}^{2K} C_s \frac{x^s}{s!} + J_{K,res,2K}(x).
\end{aligned}
\end{equation}
Here, we use the order condition in Proposition~\ref{prop:order} so that the terms with expansion order from $0$ to $K-1$ are all zeros. The $s$th-order term $C_s$ and the $2K$th-order residue $J_{K,res,2K}(x)$ can then be expressed as
\begin{equation} \label{eq:CsJxRes}
\begin{aligned}
C_s & = J^{(s)}_K(0), \\
J_{K,res,2K}(x) &= \int_0^x d\tau \frac{(x-\tau)^{2K}}{(2K)!} J^{(2K+1)}_K(\tau).
\end{aligned}
\end{equation}

\subsection{Norm bounds for LCU formula} \label{ssc:AppNormBound}

In Sec.~\ref{ssc:AppNCDerive} we have derived the explicit form of the LCU formula for the $K$th-order Trotter remainder $V_K(x)$. Based on \autoref{eq:Jtderive2}, \autoref{eq:JxDerivatives}, \autoref{eq:CsJxRes} and Proposition~\ref{prop:NCnormBound}, we are going to prove the $1$-norm bound $\mu^{(nc)}_K(x)$ and error bound $\varepsilon^{(nc)}_K(x)$ in Proposition~\ref{prop:NCK}.

First of all, we need to estimate the norms of $J_K^{(s)}(x)$. We have the following results.

\begin{proposition}[Upper bound of the norm of nested commutators] \label{prop:adnormKth}
Consider a lattice Hamiltonian $H=A+B$ with the form in \autoref{eq:HLattice}. Suppose the spectral norm and $1$-norm of its components $H_{j,j+1}$ are bounded by $\Lambda$ and $\Lambda_1$. Then for the nested commutators appearing in \autoref{eq:CsJxRes}, we have the following bound
\begin{equation} \label{eq:adnormAB}
\begin{aligned}
& \left\| \prod_{l=j}^\kappa e^{-i\tau b_l \ad_B} \ad_B^{m_l} e^{-i\tau a_l \ad_A} \ad_A^{n_l} (c_{j+1}A + d_j B) \right\|  \\
&\quad \leq \frac{n}{2}2^s\Lambda^{s+1} \Big( c_{j+1} \prod_{l=j}^\kappa (2(l-j)+2)^{m_l}(2(l-j)+1)^{n_l} \\
&\quad\quad  + d_j \prod_{l=j}^\kappa (2(l-j)+3)^{m_l}(2(l-j)+2)^{n_l} \Big)
\end{aligned}
\end{equation}
where $\{m_l,n_l\}_{l=1}^\kappa$ are non-negative integers satisfying $\sum_{l=j}^\kappa (m_l + n_l) = s$. $\kappa$ is the stage number determined by the Trotter formula in \autoref{eq:AppSKtWt}. $\{a_l,b_l\}_{l=1}^\kappa$ are defined by the Trotter formula in \autoref{eq:AppWjt}. $\{c_l,d_l\}_{l=1}^\kappa$ are defined in \autoref{eq:csds}. As a result, we can bound the norm of $J^{(s)}_K(x)$ defined in \autoref{eq:JxDerivatives} as follows:
\begin{equation}
\|J^{(s)}_K(x)\| \leq n\kappa(4\kappa+5)^s 2^s \Lambda^{s+1}.
\end{equation}
The $1$-norm upper bound for $J^{(s)}_K(x)$ is to simply replace $\Lambda$ by $\Lambda_1$.
\end{proposition}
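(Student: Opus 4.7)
The proof extends the argument of Proposition~\ref{prop:adnorm1st} from a single layer to $\kappa$ alternating layers of $\ad_A$ and $\ad_B$. The overall strategy is two-fold: (i) bound each elementary nested-commutator path in spectral norm, and (ii) count the number of such nonzero paths by tracking the support bounding box as the layers are applied.

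For (i), I would expand each inner $\ad$ via $\|[X,Y]\|\leq 2\|X\|\|Y\|$, so that a depth-$s$ nested commutator of local terms $H_{i_k,i_k+1}$ has norm at most $(2\Lambda)^s\Lambda$. The exponentials $e^{-i\tau b_l\ad_B}$ and $e^{-i\tau a_l\ad_A}$ are conjugations by unitaries and hence preserve the spectral norm, so they contribute no extra factor. For (ii), I would prove by induction on $l$ from $j$ up to $\kappa$ that, after the layer $l$ has been fully applied to a starting term $H_{i_0,i_0+1}$, the support bounding box is a contiguous interval whose size and number of touching $A$- and $B$-edges match the claimed factors $(2(l-j)+2)^{m_l}(2(l-j)+1)^{n_l}$ (for the $A$-case) and $(2(l-j)+3)^{m_l}(2(l-j)+2)^{n_l}$ (for the $B$-case). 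The key geometric observations are: within a single layer, repeated application of $\ad_A$ never enlarges the bounding box beyond its first application (since $A$-edges are mutually disjoint and the set of $A$-edges touching a fixed interval is fixed once the boundary $A$-edges are included), and $e^{-i\tau a_l\ad_A}$ is a power series in $\ad_A$ that inherits this property; the same holds for $B$. Transitioning between layers enlarges the interval by at most one qubit on each side, producing the ``$+1$ per layer'' pattern in the exponents. Summing over the $n/2$ local terms in each of $A$ and $B$, weighted by $c_{j+1}$ and $d_j$, yields \autoref{eq:adnormAB}.

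To convert Eq.~\eqref{eq:adnormAB} into the bound on $\|J^{(s)}_K(x)\|$, I would apply the triangle inequality to \autoref{eq:JxDerivatives}, substitute the per-commutator bound, and combine the resulting $\prod_l a_l^{n_l}b_l^{m_l}$ with the geometric factors via the multinomial theorem:
\begin{equation*}
\sum_{\sum(m_l+n_l)=s}\binom{s}{\{m_l,n_l\}}\prod_{l=j}^\kappa\bigl(b_l(2(l-j)+c_B)\bigr)^{m_l}\bigl(a_l(2(l-j)+c_A)\bigr)^{n_l}=\Bigl(\sum_{l=j}^\kappa b_l(2(l-j)+c_B)+a_l(2(l-j)+c_A)\Bigr)^s.
\end{equation*}
Using $\sum_l a_l=\sum_l b_l=1$ together with $2(l-j)+c\leq 2\kappa+c$, the bracketed sum is bounded by an $O(\kappa)$ quantity, giving the stated $(4\kappa+5)^s$ with slack. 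The outer sum over $j\in\{1,\ldots,\kappa\}$ contributes the prefactor $\kappa$, and $2^s\Lambda^{s+1}$ comes from the per-path norm. The $1$-norm version follows by repeating the same counting argument with $\|[X,Y]\|_1\leq 2\|X\|_1\|Y\|_1$, noting that the conjugations $e^{-i\tau b_l\ad_B}$ and $e^{-i\tau a_l\ad_A}$ act on operators whose support is already contained in an $O(\kappa s)$-qubit window, so they can be absorbed without enlarging the per-path $1$-norm beyond $(2\Lambda_1)^s\Lambda_1$.

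The main obstacle is the inductive support-tracking in Step (ii): one must carefully handle the parity distinction between the $A$- and $B$-starting cases, verify that the bounding box grows by exactly the claimed amount at each layer transition, and confirm that the within-layer repetitions of $\ad_A$ (and of its exponential) do not further enlarge the box. The $1$st-order base case already exhibits these subtleties and is handled in Proposition~\ref{prop:adnorm1st}; once the inductive geometric lemma is formalised, the remaining multinomial accounting is routine.
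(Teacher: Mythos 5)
Your proposal is correct and follows essentially the same route as the paper's proof: bound each elementary nested-commutator path by $(2\Lambda)^s\Lambda$ (unitary conjugations costing nothing), count the nonzero paths by tracking the support/light-cone growth layer by layer to get the factors $(2(l-j)+2)^{m_l}(2(l-j)+1)^{n_l}$ and $(2(l-j)+3)^{m_l}(2(l-j)+2)^{n_l}$, then combine with the multinomial theorem, $\sum_l a_l=\sum_l b_l=1$, and the sum over the stage index $j$ to reach $n\kappa(4\kappa+5)^s2^s\Lambda^{s+1}$. The only loose spot is your justification of the $1$-norm case via an ``$O(\kappa s)$-qubit window'': the clean reason the exponentials can be dropped there is that they are conjugations by unitaries, which have LCU $1$-norm equal to $1$, so submultiplicativity of $\|\cdot\|_1$ gives the bound directly, exactly as in the spectral-norm case with $\Lambda$ replaced by $\Lambda_1$.
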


\begin{proof}
To begin with, we now focus on one Hamiltonian term $H_{j,j+1}$ in $A$ and bound the norm 
\begin{equation} \label{eq:adnormHA}
\begin{aligned}
&\left\| \prod_{l=j}^\kappa e^{-i\tau b_l \ad_B} \ad_B^{m_l} e^{-i\tau a_l \ad_A} \ad_A^{n_l} (H_{j,j+1}) \right\| \\
&\quad \leq 2^s\Lambda^{s+1} \prod_{l=j}^\kappa (2(l-j)+2)^{m_l}(2(l-j)+1)^{n_l}.
\end{aligned}
\end{equation} 
To this end, we decompose operator to the elementary nested commutators with the form,
\begin{equation} \label{eq:elementNCKth}
\begin{aligned}
& e^{-i\tau \ad_B} \ad_{H_{j_{s},j_{s}+1}} ... \ad_{H_{j_{s-m_\kappa+1},j_{s-m_\kappa+1}+1}} \cdot \\
& ... \\
& e^{-i\tau \ad_B} \ad_{H_{j_{m_j+n_j},j_{m_j+n_j}+1}} ... \ad_{H_{j_{n_j+1},j_{n_j+1}+1}} \cdot \\
& \; e^{-i\tau \ad_A} \ad_{H_{j_{n_j},j_{n_j}+1}} ... \ad_{H_{j_1,j_1+1}} H_{j,j+1},
\end{aligned}
\end{equation}
where $j_1, j_2, ..., j_s$ are the possible vertex indices. For each elementary nested commutator, the spectral norm can be bounded by
\begin{equation}
(2\Lambda)^s \Lambda.
\end{equation}
This can he done by expand all the commutators and apply triangle inequality. Here, we use the property that the spectral norm of all the exponential operator with anti-Hermitian exponent is 1.

Now, we count the number of the possible elementary commutators with the form in \autoref{eq:elementNCKth}. We will check the action of the adjoint operators from right to left. For the first location, we know that $\ad_A H_{j,j+1} = 0$. For simplicity, we keep the term $\ad_{H_{j,j+1}} H_{j,j+1}$ in the counting of elementary commutators. If the next $\ad$ is still $\ad_A$, the support will still be on the two qubits $j$ and $j+1$. As a result, there will be one possible elementary term $\ad_{H_{j,j+1}}$ left. Similarly, the exponential operator $e^{-i\tau \ad_A}$ will not enlarge the support since one can expand it to the power of $\ad_A$. The support will be enlarged when $\ad_B$ comes into play. In this layer, the support of the operator will be expanded to four qubits: $j-1, j, j+1$, and $j+2$. We can decompose $\ad_B$ to $2$ nonzero elementary elements, $\ad_{H_{j-1,j}}$ and $\ad_{H_{j+1,j+2}}$. If the next operator is still $\ad_B$ or $e^{-i\ad_B}$, it will not enlarge the support. Following this logic, we can see that the number of possible elementary commutators is bounded by
\begin{equation}
\begin{aligned}
&(2(\kappa-j)+2)^{m_\kappa} (2(\kappa-j)+1)^{n_\kappa} ... 4^{m_{j+1}}3^{n_{j+1}} 2^{m_j} 1^{n_j} \\
&\quad = \prod_{l=j}^\kappa (2(l-j)+2)^{m_l}(2(l-j)+1)^{n_l}.
\end{aligned}
\end{equation}
We remark that, the elementary nested commutators with $n_j\neq 0$ is actually $0$. Here, we keep these commutators for the simplicity of counting.

Combining the number of elementary nested commutators and the norm bound for each commutator and applying triangle inequality, we will obtain \autoref{eq:adnormHA}. 

Similarly, we can check one Hamiltonian term $H_{j,j+1}$ in $B$ and bound the norm with
\begin{equation} \label{eq:adnormHB}
\begin{aligned}
&\left\| \prod_{l=j}^\kappa e^{-i\tau b_l \ad_B} \ad_B^{m_l} e^{-i\tau a_l \ad_A} \ad_A^{n_l} (H_{j,j+1}) \right\| \\
&\quad\leq 2^s\Lambda^{s+1} \prod_{l=j}^\kappa (2(l-j)+3)^{m_l}(2(l-j)+2)^{n_l}.
\end{aligned}
\end{equation} 
The counting logic is similar to the case for $H_{j,j+1}$ in $A$. The only difference is that when counting the number of elementary nested commutators, the action of the first $\ad_A$ will enlarge the operator space to four qubits.

Applying \autoref{eq:adnormHA} and \autoref{eq:adnormHB} for all the components $H_{j,j+1}$ in $H$, we will obtain \autoref{eq:adnormAB}.

Now, we apply \autoref{eq:adnormAB} to bound the norm of $J^{(s)}_K(x)$ in \autoref{eq:JxDerivatives}. We have
\begin{widetext}
\begin{align*}
&\|J^{(s)}_K(x)\| \leq \sum_{j=\kappa}^1 \sum_{\substack{m_j,n_j;...;m_{\kappa},n_{\kappa} \\ \sum_{l=j}^{\kappa} m_l + n_l =s}} \binom{s}{m_j,n_j,...,m_\kappa, n_\kappa} \prod_{l=j}^\kappa (b_l^{m_l} a_l^{n_l}) \left\|\prod_{l=j}^\kappa \left( e^{-it b_l \ad_B} \ad_B^{m_l} e^{-it a_l \ad_A} \ad_A^{n_l}\right) (c_{j+1}A + d_j B)\right\| \\
&\leq 2^s\Lambda^{s+1}\frac{n}{2} \sum_{j=\kappa}^1 \sum_{\substack{m_j,n_j;...;m_{\kappa},n_{\kappa} \\ \sum_{l=j}^{\kappa} m_l + n_l =s}} \binom{s}{m_j,n_j,...,m_\kappa, n_\kappa} \prod_{l=j}^\kappa (b_l^{m_l} a_l^{n_l}) \\
&\quad \left( c_{j+1} \prod_{l=j}^\kappa (2(l-j)+2)^{m_l}(2(l-j)+1)^{n_l}  + d_j \prod_{l=j}^\kappa (2(l-j)+3)^{m_l}(2(l-j)+2)^{n_l} \right) \\
&\leq 2^s\Lambda^{s+1}\frac{n}{2} \sum_{j=\kappa}^1 c_{j+1}\left( \sum_{l=j}^\kappa b_l(2(l-j)+2) + a_l(2(l-j)+1) \right)^s + d_j \left( \sum_{l=j}^\kappa b_l(2(l-j)+3) + a_l(2(l-j)+2) \right)^s \\
&\leq 2^s\Lambda^{s+1}\frac{n}{2} \sum_{j=\kappa}^1 c_{j+1}\left( 4(\kappa-j)+3 \right)^s + d_j \left( 4(\kappa-j)+5 \right)^s \stepcounter{equation}\tag{\theequation} \label{eq:JsxBoundDerive} \\
&\leq 2^s\Lambda^{s+1}\frac{n}{2} \sum_{j=\kappa}^1 \left( 4(\kappa-j)+3 \right)^s + \left( 4(\kappa-j)+5 \right)^s \\
&\leq 2^s\Lambda^{s+1}\frac{n}{2}\cdot 2\kappa(4\kappa+5)^s = n\kappa (4\kappa+5)^s 2^s \Lambda^{s+1}.
\end{align*}
\end{widetext}
In the third line, we use multinomial theorem. In the fourth line, we use the fact that $a_l,b_l>0$ and $\sum_{l}a_l,\sum_l b_l \leq 1$. The fifth line is due to $0\leq c_j,d_j<1$ for all $j$. In the sixth line, we use the following bound:
\begin{equation}
\begin{aligned}
&\;\; \sum_{j=1}^{\kappa} (4(\kappa-j)+a)^s = \sum_{l=0}^{\kappa-1} (4l+a)^s \\
&\leq \int_0^{\kappa} (4x+a)^s dx = \frac{1}{4} \frac{(4\kappa+a)^{s+1}-a^{s+1}}{s+1} \\
&= \kappa \frac{1}{s+1} \sum_{m=0}^s (4\kappa+a)^m a^{s-m} \leq \kappa (4\kappa+a)^s.
\end{aligned}
\end{equation}
Here, $a$ is any real number.

Since $1$-norm can be estimated based on the same logic by counting the number of nested commutators and the $1$-norm of each nest commutator, the derivation for the $1$-norm is similar by replacing $\Lambda$ to $\Lambda_1$.

\end{proof}

Now, we prove the $1$-norm bound $\mu^{(nc)}_K(x)$ of $\tilde{V}_K(x)$. From Proposition~\ref{prop:NCnormBound} we have
\begin{align*}
&\|\tilde{V}_K(x)\|_1 = \sqrt{ 1 + \left( \sum_{s=K}^{2K} \|C_s\|_1 \frac{x^{s+1}}{(s+1)!} \right)^2 } \\
&\leq  1 + \frac{1}{2} \left( \sum_{s=K}^{2K} \|C_s\|_1  \frac{x^{s+1}}{(s+1)!} \right)^2 \\
&\leq 1 + \frac{1}{2} n^2 \kappa^2 \left( \sum_{s=K}^{2K} \beta^s \frac{(\Lambda_1 x)^{s+1}}{(s+1)!} \right)^2 \stepcounter{equation}\tag{\theequation}\label{eq:VKx1normBound} \\
&\leq 1 + \frac{1}{2} n^2 \kappa^2 \left( (K+1) \beta^{K+1} \frac{(\Lambda_1 x)^{K+1}}{(K+1)!} \right)^2  \\
&\leq 1 + \frac{n^2 \kappa^2 \beta^{2(K+1)}}{2 (K!)^2}  (\Lambda_1 x)^{2K+2} \\
&\leq \exp\left( \frac{n^2 \kappa^2 \beta^{2(K+1)}}{2 (K!)^2}  (\Lambda_1 x)^{2K+2} \right).
\end{align*}
Here, we set $\beta:=2(4\kappa+5)$. In the third line, we use Proposition~\ref{prop:adnormKth}. In the fourth line, we use the assumption that $\beta \Lambda_1 x\leq (K+2)$.

Then, we prove the distance bound $\varepsilon_K^{(nc)}(x) = \|F_{K,\mc{res}}\|$. From Proposition~\ref{prop:NCnormBound} we know that we only need to bound $\|J_{K,L}(\tau)\|$, $\|J_{K,res,2K}(\tau)\|$, and $M_K(\tau)$ based on \autoref{eq:JKLJKresMKbound}.

We start from $\|J_{K,L}(\tau)\|$. From Proposition~\ref{prop:adnorm1st} we have
\begin{equation} \label{eq:JKLbound}
\begin{aligned}
\|J_{K,L}(\tau)\| &\leq \sum_{s=K}^{2K} \|C_s\| \frac{\tau^s}{s!} \\
&\leq n\kappa \sum_{s=K}^{2K} \beta^s \Lambda^{s+1} \frac{\tau^s}{s!}.
\end{aligned}
\end{equation}

Then, we bound $\|J_{K,res,2K}(\tau)\|$ and $\|M_K(\tau)\|$. From \autoref{eq:JKLJKresMKbound} and Proposition~\ref{prop:adnormKth} we have
\begin{equation} \label{eq:JKres2KMKbound}
\begin{aligned}
\|J_{K,res,2K}(\tau)\| &\leq \int_0^\tau d\tau_1 \frac{(\tau-\tau_1)^{2K}}{(2K)!} \|J_K^{(2K+1)}(\tau_1)\| \\
&\leq n\kappa \frac{(\beta \tau)^{2K+1}}{(2K+1)!} \Lambda^{2K+2}, \\
\|M_K(\tau)\| &\leq \int_0^\tau d\tau_1 \int_0^{\tau_1} d\tau_2 \frac{(\tau_1-\tau_2)^{(K-1)}}{(K-1)!} \|J_K^{(K)}(\tau_2)\| \\
&\leq n\kappa \frac{\beta^K \tau^{K+1}}{(K+1)!} \Lambda^{K+1}.
\end{aligned}
\end{equation}

Based on \autoref{eq:JKLbound}, \autoref{eq:JKres2KMKbound} and Proposition~\ref{prop:NCnormBound} we have
\begin{align*}
&\|F_{K,\mc{res}}(x)\| \leq \int_0^\tau \left(\|M_K(\tau)\| \|J_L(\tau)\| + \|J_{K,res,2K}(\tau)\| \right) \\
&\leq \int_0^x d\tau (n\kappa)^2 \sum_{s=K}^{2K} \frac{\beta^{s+K}\Lambda^{s+K+2}\tau^{s+K+1}}{(K+1)!s!} \\
&\quad + (n\kappa)\frac{\beta^{2K+1}\Lambda^{2K+2}\tau^{2K+1}}{(2K+1)!}  \stepcounter{equation}\tag{\theequation}\label{eq:FKresBound} \\
&= (n\kappa)^2 \sum_{s=K}^{2K} \frac{\beta^{s+K}(\Lambda x)^{s+K+2}}{(K+1)!s!(s+K+2)} + (n\kappa)\frac{\beta^{2K+1}(\Lambda x)^{2K+2}}{(2K+2)!} \\
&\leq (n\kappa)^2 (K+1)\frac{\beta^{2K} (\Lambda x)^{2K+2}}{(2K+2)K!(K+1)!} + (n\kappa) \frac{\beta^{2K+1}x^{2K+2}}{(2K+2)!} \\
&\leq (n\kappa)^2 \frac{\beta^{2K+1}}{K!(K+1)!} (\Lambda x)^{2K+2}.
\end{align*}
In the fourth line, we assume $\beta \Lambda x\leq K+1$. In the fifth line, we use the fact that $n\kappa>1$.

To summarize, from \autoref{eq:VKx1normBound} and \autoref{eq:FKresBound}, we have proven that,
\begin{equation}
\begin{aligned}
\mu_{K}^{(nc)} &:= \|V_K(x)\|_1 \leq \exp\left( \frac{n^2 \kappa^2 \beta^{2(K+1)}}{2 (K!)^2}  (\Lambda_1 x)^{2K+2} \right), \\
\varepsilon_K^{(nc)} &:= \|\tilde{V}_K(x) - V_K(x)\| = \|F_{K,\mc{res}}(x)\| \\
&\leq (n\kappa)^2 \frac{\beta^{2K+1}}{K!(K+1)!} (\Lambda x)^{2K+2}.
\end{aligned}
\end{equation}
Here, $\beta:=2(4\kappa+5)$. We assume that $\beta x\leq \max\{\frac{K+1}{\Lambda}, \frac{K+2}{\Lambda_1}\}$. This finish the proof of Proposition~\ref{prop:NCK}.

\section{EFFICIENT SAMPLING FOR THE HIGHER-ORDER NESTED-COMMUTATOR COMPENSATION ALGORITHM} \label{sec:AppNCSampling}

Following the analysis in Appendix~\ref{sec:AppNChigherTrotter}, we now design a general sampling method for the general $K$th-order Trotter remainder $\tilde{V}_K^{(nc)}(x)$. We consider the general $K$th-order Trotter formula in the canonical form in \autoref{eq:AppSKtWt}.

The truncated Trotter remainder can be expanded as
\begin{equation} \label{eq:VKxExpandTruc}
\begin{aligned}
V_K(x) &:= I + \sum_{s=K+1}^{2K+1} F_{K,s}, \\
F_{K,s} &= C_{s-1} \frac{x^s}{s!}.
\end{aligned}
\end{equation}
Based on \autoref{eq:JxDerivatives}, we have shown that $C_s$ can be written as
\begin{equation} \label{eq:CsNCForm}
\begin{aligned}
C_s 
= \sum_{j=\kappa}^1 C_{s,j}^{(A')} + C_{s,j}^{(B')},
\end{aligned}
\end{equation}
where $A':= -iA, B':= -iB$, 
\begin{equation} \label{eq:CsjApBp}
\begin{aligned}
C_{s,j}^{(A')} &:= c_{j+1} \sum_{\substack{m_j,n_j;...;m_{\kappa},n_{\kappa} \\ m_j,n_j\neq0; \sum_{l=j}^{\kappa} m_l + n_l =s}} \binom{s}{m_j,n_j,...,m_\kappa, n_\kappa} \cdot \\ 
& \quad\quad \prod_{l=j}^\kappa (b_l^{m_l} a_l^{n_l}) \left( \ad_{B'}^{m_l} \ad_{A'}^{n_l}\right) A', \\
C_{s,j}^{(B')} &:= d_j \sum_{\substack{m_j,n_j;...;m_{\kappa},n_{\kappa} \\ m_j,n_j\neq0; \sum_{l=j}^{\kappa} m_l + n_l =s}} \binom{s}{m_j,n_j,...,m_\kappa, n_\kappa} \cdot \\
&\quad\quad \prod_{l=j}^\kappa (b_l^{m_l} a_l^{n_l}) \left( \ad_{B'}^{m_l} \ad_{A'}^{n_l}\right) B'. \\
\end{aligned}
\end{equation}
The values $c_s$ and $d_s$ are defined in \autoref{eq:csds}.

We now construct efficient random sampling of $C_{s,j}^{(A')}$ and $C_{s,j}^{(B')}$ \autoref{eq:CsjApBp} based on LCU formulas with the $1$-norm of 
\begin{equation} \label{eq:musj}
\begin{aligned}
\mu_{s,j}^{(A')} &= 2^s \Lambda_1^{s+1} \frac{n}{2} c_{j+1} \chi_{A',j}^s, \\
\mu_{s,j}^{(B')} &= 2^s \Lambda_1^{s+1} \frac{n}{2} d_j \chi_{B',j}^s,
\end{aligned}
\end{equation}
respectively. Here,
\begin{equation}
\begin{aligned}
\chi_{A',j} &:= \sum_{l=j}^\kappa b_l(2(l-j)+2) + a_l(2(l-j)+1), \\
\chi_{B',j} &:= \sum_{l=j}^\kappa b_l(2(l-j)+3) + a_l(2(l-j)+2).
\end{aligned}
\end{equation}

There are $\frac{n}{2}$ summands $\{H_{q,q+1}\}$ in $A'$. We now focus on a generic summand $H_{q,q+1}$ in $A'$ and check the action of the adjoint operators on it,
\begin{equation} \label{eq:adBp_adApHj}
\begin{aligned}
&\quad \ad_{s,j;\vec{m},\vec{n}}^{(A')} H_{q,q+1} :=\\
&(-i) \ad_{B'}^{m_\kappa} \ad_{A'}^{n_\kappa} ... \ad_{B'}^{m_{j+1}} \ad_{A'}^{n_{j+1}} \ad_{B'}^{m_j} \ad_{A'}^{n_j} H_{q,q+1}.
\end{aligned}
\end{equation}
Recall that $m_\kappa + n_\kappa + ... + m_{j+1} + n_{j+1} + m_j + n_j = s$.

We would like to follow that construction in Proposition~\ref{prop:adnormKth}, where we first decompose the adjoint operator in \autoref{eq:adBp_adApHj} into the elementary operators with the form,
\begin{equation} \label{eq:elementAd}
\begin{aligned}
\ad_{q_{1:s}}^{(s)} H_{q,q+1} := (-i)^{s+1} \ad_{H_{q_{s},q_{s}+1}} ... \ad_{H_{q_1,q_1+1}} H_{q,q+1}.
\end{aligned}
\end{equation}
For each elementary nested commutator, the $1$-norm is
\begin{equation}
\Lambda_1^{(s)} = (2\Lambda_1)^s \Lambda_1.
\end{equation}
Here, we have assumed that all the summands $\{H_{q,q+1}\}$ in $A$ or $B$ have been padded similar to \autoref{eq:PadHj} so that their $1$-norms are all $\Lambda_1$.

Now, we count the number of elementary nested commutators with the form of \autoref{eq:elementAd} in the adjoint operator $\ad_{s,j}^{(A')}$ in \autoref{eq:adBp_adApHj}. In the proof of Proposition~\ref{prop:adnormKth}, we show the number of possible elementary nested commutator is upper bounded by
\begin{equation}
\begin{aligned}
&\quad N(\ad_{s,j;\vec{m},\vec{n}}^{(A')} H_{q,q+1} ) \\
&= (2(\kappa-j)+2)^{m_\kappa} (2(\kappa-j)+1)^{n_\kappa} ... 4^{m_{j+1}}3^{n_{j+1}} 2^{m_j} 1^{n_j} \\
&= \prod_{l=j}^\kappa (2(l-j)+2)^{m_l}(2(l-j)+1)^{n_l}.
\end{aligned}
\end{equation}
We now discuss how to achieve this bound by ``padding'' zero-valued elementary nested commutators into the decomposition of $\ad_{s,j}^{(A')}$. We notice that, after the sequential action of $\ad_{A'}^{n_j}, \ad_{B'}^{m_j}, \ad_{A'}^{n_{j+1}}, \ad_{B'}^{m_{j+1}},...$, the support, i.e., the index of the qubit where the operators act nontrivially, of the resulting operator is given by the ``light cone'' of $q:(q+1), (q-1):(q+2), (q-2):(q+3), (q-3):(q+4),...$, as illustrated in\autoref{fig:AdjExpandSamp}. We keep track of the largest possible support of the resulting adjoint operators and pad the LCU formula in the following situations:
\begin{enumerate}
\item (Padding small summands) When the $1$-norm of the summand $H_{q_l,q_l+1}$ is smaller than $\Lambda_1$, we add extra $\pm I$ terms in the LCU formula of $H_{q_l,q_l+1}$ to make its $1$-norm to $\Lambda_1$, similar to \autoref{eq:PadHj}.
\item (Padding $0$-valued commutators) Many nested commutators with the form of \autoref{eq:elementAd} may be 0 due to the commutation relationship, for example $\ad_{H_{q+2,q+3}} \ad_{H_{q-1,q}} H_{q,q+1} = 0$, since the support of $\ad_{H_{q-1,q}} H_{q,q+1}$ is on qubits $q-1,q$ and $q+1$ which commutes with $H_{q+2,q+3}$. However, we keep all these $0$-valued terms as long as the support of the nested commutator is in the ``light-cone'' range of the operator. 
\item (Padding boundary terms) For the summands $H_{q,q+1}$ which are close to the boundary, after a few action of the adjoint operators, the ``light cone'' will touch the boundary of the system. In this case, we introduce virtual padding ancillary qubits and extra padding nested commutators on it:
    \begin{enumerate}
    \item For the summand $H_0$ or $H_n$ which own only one-qubit support on the boundary, we redefine the Pauli terms in the LCU of them. For example, if $H_0 = \Lambda_1 \sum_{\omega} p_\omega P_0^{(\omega)}$, we then redefined it to $H_{-1,0} = \sum_{\omega} p_\omega \left(I_{-1}\otimes P_{0}\right)^{(\omega)}$. 
    \item We define the ``virtual'' summand $H_{q_l,q_l+1}$ where qubits $q_l, q_l+1$ are all virtual qubits as
    \begin{equation}
        H_{q_l,q_l+1} := \frac{\Lambda_1}{2} \left( I_{q_l,q_l+1} + (- I_{q_l,q_l+1}) \right).
    \end{equation}
    \end{enumerate}
Since all the operations on the virtual qubits are $\pm I$, we do not need to introduce these qubits in the real implementation. 
\end{enumerate}

After these padding, we have now construct the LCU formula of $\ad_{s,j}^{(A')}$ with the following form:
\begin{equation} \label{eq:adsjApPadded}
\begin{aligned}
&\quad \ad_{s,j;\vec{m},\vec{n}}^{(A')} H_{q,q+1} \\ 
&= N(\ad_{s,j}^{(A')}H_{q,q+1})\cdot \\
&\quad\quad \sum_{q_1, ..., q_s} (-i)^{s+1} \ad_{H_{q_{s},q_{s}+1}} ... \ad_{H_{q_1,q_1+1}} H_{q,q+1} \\
&=: N(\ad_{s,j}^{(A')}H_{q,q+1}) \sum_{q_1, ..., q_s} \ad_{q_{1:s}}^{(s)} H_{q,q+1}
\end{aligned}
\end{equation}
where $q_1,...,q_s$ are indices in the light-cone region.
Recall that all the elementary operators with the form \autoref{eq:elementAd} are with the same $1$-norm of $(2\Lambda_1)^s \Lambda_1$, irrelevant of the qubit index $q$ and the rank number $\{m_l,n_l\}_{l=j}^\kappa$. Therefore, the $1$-norm of $\ad_{s,j}^{(A')} H_{q,q+1}$ is 
\begin{equation} \label{eq:adsjApH_1norm}
\mu\{\ad_{s,j;\vec{m},\vec{n}}^{(A')} H_{q,q+1}\} = N\left(\ad_{s,j;\vec{m},\vec{n}}^{(A')} \right)\, (2\Lambda_1)^s \Lambda_1,
\end{equation}
which is irrelevant of the qubit index $q$.

Now, based on \autoref{eq:adsjApPadded}, we can write $C_{s,j}^{(A')}$ in \autoref{eq:CsjApBp} as
\begin{equation} \label{eq:CsjAp_1norm}
\begin{aligned}
&\frac{C_{s,j}^{(A')} }{c_{j+1}} = \sum_{\substack{ 1\leq q\leq n \\ q:\text{odd} } } \sum_{\substack{m_j,n_j;...;m_{\kappa},n_{\kappa} \\ \sum_{l=j}^{\kappa} m_l + n_l =s}} \binom{s}{m_j,n_j,...,m_\kappa, n_\kappa}\cdot \\
&\quad\quad \prod_{l=j}^\kappa (b_l^{m_l} a_l^{n_l})  \ad_{s,j;\vec{m},\vec{n}}^{(A')} H_{q,q+1} \\
&= \sum_{\substack{ 1\leq q\leq n \\ q:\text{odd} } } \sum_{\substack{m_j,n_j;...;m_{\kappa},n_{\kappa} \\ \sum_{l=j}^{\kappa} m_l + n_l =s}} \binom{s}{m_j,n_j,...,m_\kappa, n_\kappa} \cdot \\
&\quad \prod_{l=j}^\kappa (b_l^{m_l} a_l^{n_l}) N(\ad_{s,j}^{(A')}H_{q,q+1}) \sum_{q_1,...,q_s} \ad_{q_{1:s}}^{(s)} H_{q,q+1} \\
&= \chi_{A',j}^s \mr{Mul}\left(\{m_j,n_j,...,m_\kappa,n_\kappa\}; \{\vec{p}_{A',b}, \vec{p}_{A',a}\}; s \right) \cdot \\
&\quad \sum_{\substack{ 1\leq q\leq n \\ q:\text{odd} } } \sum_{q_1,...,q_s} \ad_{q_{1:s}}^{(s)} H_{q,q+1},
\end{aligned}
\end{equation}
where
\begin{equation}
\begin{aligned}
p_{A',b;l} = \frac{b_l(2(l-j)+2)}{\chi_{A',j}}, \quad p_{A',a;l} = \frac{a_l(2(l-j)+2)}{\chi_{A',j}},
\end{aligned}
\end{equation}
for $l=j,j+1,...,\kappa$ and $\mr{Mul}(v;\vec{p};s)$ denotes a multinomial distribution where we sample the variable value $v$ based on the probability distribution $\vec{p}$ for $s$ times. Based on \autoref{eq:adsjApH_1norm} and \autoref{eq:CsjAp_1norm}, we can easily check that the $1$-norm of $C_{s,j}^{(A')}$ is given by $\mu_{s,j}^{(A')}$ in \autoref{eq:musj}. More importantly, since the $1$-norm of $\ad_{q_{1:s}}^{(s)} H_{q,q+1}$ is independent of $q$ and $q_1,...,q_s$, the sampling of $q$ from all odd qubits and $q_1,...,q_s$ from the light-cone region follows uniform distribution.

Similarly, we can decompose and pad extra terms in $C_{s,j}^{(B')}$ to construct LCU with the following form:
\begin{equation} \label{eq:CsjBpExpand}
\begin{aligned}
\frac{C_{s,j}^{(B')} }{d_j} &= \chi_{B',j}^s \mr{Mul}\left(\{m_j,n_j,...,m_\kappa,n_\kappa\}; \{\vec{p}_{B',b}, \vec{p}_{B',a}\}; s \right) \cdot \\
&\quad \sum_{\substack{ 1\leq q\leq n \\ q:\text{even} } } \sum_{q_1,...,q_s} \ad_{q_{1:s}}^{(s)} H_{q,q+1}, \\
p_{B',b;l} &= \frac{b_l(2(l-j)+3)}{\chi_{B',j}}, \quad p_{B',a;l} = \frac{a_l(2(l-j)+2)}{\chi_{A',j}},
\end{aligned}
\end{equation}
for $l=j,j+1,...,\kappa$, so that the $1$-norm of $C_{s,j}^{(B')}$ is given by $\mu_{s,j}^{(B')}$ in \autoref{eq:musj}. Again, the $1$-norm of the operator $\sum_{q_1,...,q_s} \ad_{q_{1:s}}^{(s)} H_{q,q+1}$ is $(2\Lambda_1)^s \Lambda_1$, independent of $q$ and $q_1,...,q_s$. Therefore, the $1$-norm of $C_s$ in \autoref{eq:CsNCForm} is
\begin{equation} \label{eq:muCs}
\mu\{C_s\} = \sum_{j=\kappa}^1 \mu_{s,j}^{(A')} + \mu_{s,j}^{(B')},
\end{equation}
and the $1$-norm of $F_{K,s}$ is
\begin{equation} \label{eq:muFKs}
\begin{aligned}
\mu\{F_{K,s}\} = \mu\{C_{s-1}\} \frac{x^s}{s!}, \\
\end{aligned}
\end{equation}

Based on the above analysis, we summarize the overall sampling procedure in \autoref{fig:NCSamplingSketch} and Algorithm~\ref{Alg:NCCSampling}. We consider a multistage sampling: first, we sample the expansion order $s$ based on $\mu\{F_{K,s}\}$; second, for a given expansion order $s$, we sample the terms $C_{s-1,j}^{(A')}$ or $C_{s-1,j}^{(B')}$ in the nested commutator $C_{s-1}$ based on \autoref{eq:CsNCForm} and \autoref{eq:musj}; third, we sample the power of the adjoint operators $\vec{m}$ and $\vec{n}$ based on the multinomial distribution in \autoref{eq:CsjAp_1norm} and \autoref{eq:CsjBpExpand}; fourth, we uniformly sample the specific Hamiltonian summands $H_{q,q+1}$ and uniformly sample the adjoint Hamiltonian summands $\{H_{q_1,q_1+1},..., H_{q_{s-1}, q_{s-1}+1}\}$, each from the light-cone region; finally, if there are multiple terms in the Hamiltonian summands, we then uniformly sample the specific Pauli terms in the summands.

\begin{figure}[htbp]
\centering
\includegraphics[width=\columnwidth]{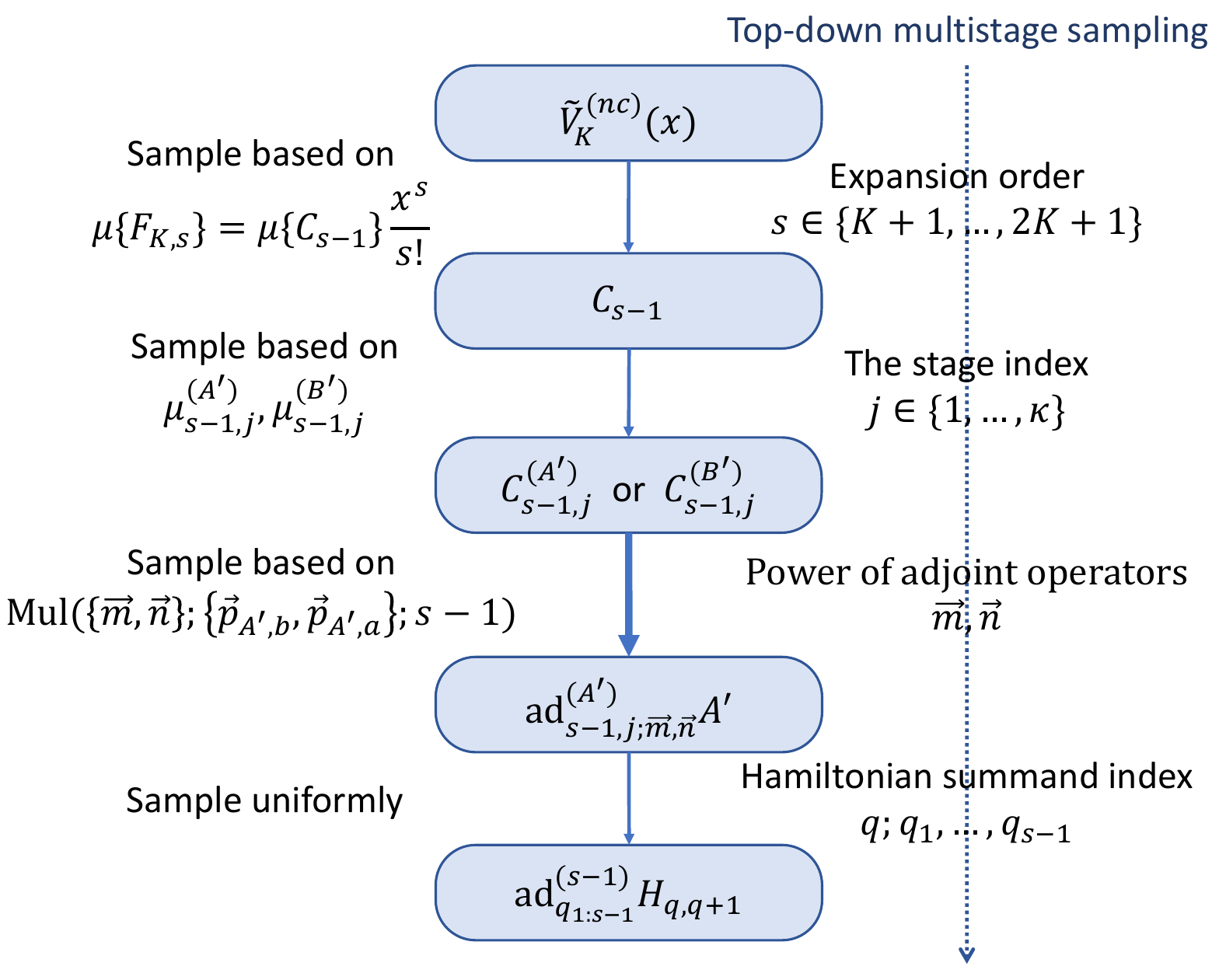} 
\caption{ 
A general procedure to obtain the sampling weights and perform the sampling in the nested-commutator compensation algorithm.}
\label{fig:NCSamplingSketch}
\end{figure}

\begin{figure}
\begin{algorithm}[H]
\caption{Nested-commutator compensation algorithm: sampling of Pauli operators}
\label{Alg:NCCSampling}
\begin{algorithmic}[1]
    \Require
    An $n$-qubit Hamiltonian $H$; unit evolution time $0<x<1$ for each $K$th-order Trotter segment; the canonical form of the $K$th-order Trotter method with coefficients $\{a_j,b_j\}$ for $j=1,...,\kappa$; 
    \Ensure
    Sampling of a Pauli operator $P^{(\omega_j;...;\omega_{j_{s}},b_{s})}$ from the Trotter remainder $\tilde{V}_K^{(nc)}(x)$.
    \State Calculate the $1$-norm of the nested commutator $C_s$ for $s=K,...,2K$ and its $j$th-stage components $C_{s,j}^{(A')}$ and $C_{s,j}^{(B')}$ for $j=1,...,\kappa$ based on \autoref{eq:musj} and \autoref{eq:muCs}.
    \State Sample the expansion order $s\in\{K,...,2K\}$ based on the $1$-norm $\mu\{F_{K,s+1}\}= \mu\{C_{s} \} \frac{x^{s+1}}{(s+1)!}$. This determines the sampled nested commutator $C_s$.
    \State From $C_s$, sample the $j$th-stage components $C_{s,j}^{(A')}$ or $C_{s,j}^{(B')}$ based on the $1$-norm $\mu_{s,j}^{(A')}$ and $\mu_{s,j}^{(B')}$ in \autoref{eq:musj}. 
    \State From $C_{s,j}^{(A')}$ (or $C_{s,j}^{(B')}$) with a given $j$, sample the power of the sequential adjoint operators $m_j,n_j;...;m_\kappa, n_\kappa$ based on a multinomial distribution $\mr{Mul}(\{\vec{m},\vec{n}\};\{\vec{p}_{A' (B'),b}, \vec{p}_{A' (B'),a}\};s)$ defined in \autoref{eq:CsjAp_1norm} and \autoref{eq:CsjBpExpand}.
    \State Sample the index of the starting Hamiltonian summand $q$ uniformly from $A'$ ($B'$). Sample the index $q_1,...,q_s$ of the subsequent adjoint Hamiltonian summands, each from the ``light-cone'' region of that location.
    \State For Hamiltonian summands indexed by $q$ and $\{q_1,..., q_s\}$, sample the Pauli operators $P^{(\omega_j)}$ and $\{P^{(\omega_{j_1})},...,P^{(\omega_{j_s})}\}$ independently based on the padded LCU formula for each Hamiltonian summand in \autoref{eq:PadHj}. For each adjoint location $q_1,...,q_s$, uniformly and independently sample the multiplication order $b_1,...,b_s\in\{0,1\}$, which indicates the multiplication order of the Pauli operators.
    \State Set $P := P^{(\omega_j)}$
    \For{$l=1~\textbf{to}~s$} \Comment{Calculate the output Pauli operator}
        \If{$b_l=0$}
            \State Set $P:= P^{(\omega_{j_1})}\cdot P$,
        \Else
            \State Set $P:= - P \cdot P^{(\omega_{j_1})}$,
        \EndIf
    \EndFor
    \State Output $P$ as the sampled Pauli operator.
\end{algorithmic}
\end{algorithm}
\end{figure}

Now, we analyze the space and time cost of the whole sampling algorithm in Algorithm~\ref{Alg:NCCSampling}. The calculation of the $1$-norms of $C_s$ and its components $C_{s,j}^{(A')}$ and $C_{s,j}^{(B')}$ requires $\mc{O}(K\kappa)$ spacetime resources. Consider a parallel calculation, we need $\mc{O}(K\kappa)$ spatial resources and $\mc{O}(1)$ time resources. We store all the above $1$-norm coefficients in the memory with the size $\mc{O}(K\kappa)$. The sampling of $F_{K,s}$ from $K$ discrete values requires $\mc{O}(\log{K})$ steps~\cite{bringmann2012efficient}. For a given nested commutator $C_{s-1}$, the sampling of $C_{s,j}^{(A')}$ and $C_{s,j}^{(B')}$ from $\kappa$ discrete values requires $\mc{O}(\log{\kappa})$ steps. The multinomial sampling of the power of adjoint operators $\vec{m}$ and $\vec{n}$ requires $\mc{O}(s\log{\kappa})$ steps. Finally, the uniform sampling of the Hamiltonian summands from the light-cone region requires $\mc{O}(s\log{n})$ steps. To summarize, the space and time cost of Algorithm~\ref{Alg:NCCSampling} are $\mc{O}(K\kappa)$ and $\mc{O}(K(\log{\kappa}+\log{n}))$, respectively.

\section{NESTED COMMUTATOR COMPENSATION FOR TROTTER FORMULAS OF GENERAL HAMILTONIANS} \label{sec:AppNCgeneral}

We now extend the methods to analyze the lattice model Hamiltonian to a general Hamiltonian. Consider a $L$-sparse Hamiltonian with the form $H = \sum_{l=1}^L H_l$.
The $K$th-order Trotter formula ($K=1$ or $2k$, $k\in \mbb{N}_+$) for $U(x)=e^{-ixH}$ can be written as
\begin{equation} \label{eq:SxGeneric}
S_K(x) = \prod_{j=1}^\kappa \prod_{l=1}^L e^{ -ix a_{(j,l)} H_{\pi_j(l)} }.
\end{equation}
Here, $\kappa$ is the number of stages in the Trotter formula, that is, how many times each Hamiltonian component $H_l$ is repeated in the implementation.
We have $\kappa=1$ for $K=1$ and $\kappa= 2\times 5^{k-1}$ when $K=2k$. The stage length coefficients $a_{(j,l)}$ are determined based on \autoref{eq:S2x} and \autoref{eq:S2kx}. The permutation $\pi_j$ indicates the ordering of the summands $\{H_l\}$ within the $j$th stage in the Trotter formula. In Suzuki's constructions~\cite{suzuki1990fractal} of Trotter formulas considered in this work, we alternately reverse the ordering of summands between neighboring stages. 

In what follows, we omit the subscript $K$ in $S_K(x)$ for simplicity. To further simplify the notation, we introduce the lexicographical order~\cite{childs2021theory} for the pair of tuples $(j,l)$ in \autoref{eq:SxGeneric}. For two pairs of tuples $(j,l)$ and $(j',l')$, we have
\begin{enumerate}
\item $(j,l)\succeq (j',l')$ if $j>j'$, or if $j=j'$ and $l>l'$.
\item $(j,l)\succ (j',l')$ if $(j,l)\succeq (j',l')$ and $(j,l)\neq (j',l')$.
\end{enumerate}
We can also define $(j,l)\preceq (j',l')$ and $(j,l)\prec (j',l')$ in the same way. We denote the number of different tuples $(j,l)$ in $S_K(x)$ as $\Upsilon$, which is usually equal to $\kappa L$.
We can then express $S_K(x)$ and $S_K(x)^\dag$ in the following way:
\begin{equation} \label{eq:SxGenOrder}
\begin{aligned}
S_K(x) = \prod_{(j,l)}^{\leftarrow} e^{-ix a_{(j,l)} H_{\pi_j(l)} }, 
\quad S_K(x)^\dag = \prod_{(j,l)}^{\rightarrow} e^{ix a_{(j,l)} H_{\pi_j(l)} }.
\end{aligned}
\end{equation}

Similar to the case of lattice Hamiltonians, based on \autoref{eq:VKxRecurrence} and \autoref{VKxNCpaired}, we can expand the Trotter remainder $V_K(x)$ to the following form:
\begin{equation}
V_K(x) = I + F_L(x) + F_{K,\mc{res}}(x),
\end{equation}
where
\begin{equation} \label{eq:FLxFKres}
\begin{aligned}
F_L(x) &:= \sum_{s=K}^{2K} C_s \frac{x^{s+1}}{(s+1)!}, \\
F_{K,\mc{res}}(x) &:= \int_0^x d\tau \left( M_K(\tau) J_L(\tau) + V_K(\tau)J_{K,res,2K}(\tau)  \right).
\end{aligned}
\end{equation}
Here, $F_L(x)$ indicates the leading-order terms to be compensated by LCU formula, $F_{K,\mc{res}}(x)=\mc{O}(x^{2K+2})$ is the high-order part.
In practice, we remove the high-order part with order $s>2K+1$ and implement only the leading-order terms
\begin{equation} \label{eq:tildeVGen}
\begin{aligned}
\tilde{V}_K^{(nc)}(x) &= I + F_L(x) = I + \sum_{s=K+1}^{2K+1} C_{s-1} \frac{x^{s}}{s!} \\
&=\sqrt{1+ (\eta_\Sigma^{(nc)})^2} \sum_{s=K+1}^{2K+1} \frac{\eta_s^{(nc)}}{\eta_\Sigma^{(nc)}} R_{2,s}^{(nc)}(\eta_\Sigma),
\end{aligned}
\end{equation}
where $\eta_\Sigma^{(nc)}:= \sum_{s=K+1}^{2K+1}\|C_{s-1}\|_1 \frac{x^s}{s!}$. The explicit form of $R_{2,s}^{(nc)}(\eta_\Sigma)$ can be obtained by the definitions in \autoref{eq:FKsNCexpand}, \autoref{VKxNCpaired} and the Pauli operator decomposition based on the nested-commutator form in \autoref{eq:NumC2C3C4}. 


We are going to finish the following tasks:
\begin{enumerate}
\item (App.~\ref{ssc:AppGenNCDerive}) Derive the explicit formulas for the leading-order expansion terms $C_s$ with $s=K,...,2K$. 
\item (App.~\ref{ssc:AppGenNCBound}) Prove the $1$-norm $\mu_K^{(nc)}(x)$ and error bound $\varepsilon_K^{(nc)}(x)$ of the LCU formula $\tilde{V}^{(nc)}_K(x)$ in \autoref{eq:tildeVGen}.
\end{enumerate}

\subsection{Derivation of nested-commutator form for general Hamiltonians} \label{ssc:AppGenNCDerive}

Based on the Trotter formulas in \autoref{eq:SxGenOrder}, we expand $R_K(x)$ in \autoref{eq:RKxDef} as follows:
\begin{equation}
\begin{aligned}
R_K(x) &= i \sum_{\gamma=1}^\Upsilon \prod_{\gamma'=\gamma-1}^{1} e^{ix a_{\gamma'} H_{\gamma'} } \left( a_\gamma H_\gamma \right)\cdot \\
&\quad \prod_{\gamma'=\Upsilon}^{\gamma+1} e^{ix a_{\gamma'} H_{\gamma'} } - iH \prod_{\gamma'=\Upsilon}^1 e^{ix a_{\gamma'} H_{\gamma'} },
\end{aligned}
\end{equation}
$J_K(x)$ in \autoref{eq:JKxDef} can then be written as,
\begin{equation} \label{eq:GenJx}
\begin{aligned}
J_K(x) &= i \sum_{\gamma=1}^\Upsilon \prod_{\gamma'=\gamma+1}^{\Upsilon} e^{-ix a_{\gamma'} \ad_{H_{\gamma'}} } (a_\gamma H_\gamma) \\
&\quad - i \prod_{\gamma'=1}^\Upsilon e^{-ix a_{\gamma'}\ad_{H_{\gamma'}}} H.
\end{aligned}
\end{equation}

The derivative of $J_K(x)$ is
\begin{widetext}
\begin{equation}
\begin{aligned}
J^{(s)}_K(x) &= i \sum_{\gamma=1}^\Upsilon \sum_{\substack{m_{\gamma+1}...m_{\Upsilon}\\ \sum_{\upsilon=\gamma+1}^\Upsilon m_{\upsilon} =s}} \binom{s}{m_{\gamma+1},...,m_{\Upsilon}} \left( \prod_{\gamma'=\gamma+1}^{\Upsilon} (-i a_{\gamma'} \ad_{H_{\gamma'}} )^{m_{\gamma'}} e^{-ix a_{\gamma'} \ad_{H_{\gamma'}} }  \right) (a_\gamma H_\gamma) \\
&\quad - i \sum_{\substack{m_1...m_{\Upsilon}\\ \sum_{\upsilon=1}^\Upsilon m_{\upsilon} =s}} \binom{s}{m_1,...,m_{\Upsilon}} \left( \prod_{\gamma'=1}^\Upsilon (-i a_{\gamma'} \ad_{H_{\gamma'}} )^{m_{\gamma'}} e^{-ix a_{\gamma'} \ad_{H_{\gamma'}} }  \right) H.
\end{aligned}
\end{equation}

Using the operator-valued Taylor-series expansion in \autoref{eq:operatorTaylor}, we have
\begin{equation} \label{eq:AppGenCsNC}
\begin{aligned}
C_s 
&= i \sum_{\gamma=1}^\Upsilon \sum_{\substack{ \sum_{(j',l')} m_{(j',l')} = s \\ (j',l')\succ (j,l) } } \binom{s}{ \{m_{(j',l')}\} } \left( \prod_{(j',l')\succ (j,l)}^{\leftarrow} (-i a_{(j',l')} \ad_{H_{\pi_{j'}(l')}} )^{m_{(j',l')}} \right) (a_{(j,l)} H_{\pi_j(l)}) \\
&\quad -i \sum_{ \sum_{(j',l')} m_{(j',l')} =s } \binom{s}{ \{m_{(j',l')}\} } \left( \prod_{(j',l')}^\leftarrow (-i a_{(j',l')} \ad_{H_{\pi_{j'}(l')}} )^{m_{(j',l')}} \right) H \\
\end{aligned}
\end{equation}
\end{widetext}
which is the nested-commutator form. Here, $\{m_{(j',l')}\}$ are a group of integers whose summation is $s$. Their corresponding multinomial coefficient is given by
\begin{equation}
\binom{s}{ \{m_{(j',l')}\} } := \frac{s!}{\prod_{(j',l')} m_{(j',l')}!}.
\end{equation}

If we define
\begin{equation}
A_{(j,l)} = A_\gamma = -i a_\gamma H_\gamma = -i a_{(j,l)} H_{\pi_j(l)},
\end{equation}
we can simplify \autoref{eq:AppGenCsNC} as
\begin{equation} \label{eq:AppGenCsSimp}
\begin{aligned}
C_s &= - \sum_{\gamma=1}^\Upsilon \sum_{\substack{m_{\gamma+1}...m_{\Upsilon}\\ \sum_{\upsilon=\gamma+1}^\Upsilon m_{\upsilon} =s}} \binom{s}{m_{\gamma+1},...,m_{\Upsilon}}  \prod_{\gamma'=\gamma+1}^\Upsilon \ad_{A_{\gamma'}}^{m_{\gamma'}} A_\gamma \\
&+ \sum_{\substack{m_1...m_{\Upsilon}\\ \sum_{\upsilon=1}^\Upsilon m_{\upsilon} =s}} \binom{s}{m_1,...,m_{\Upsilon}} \prod_{\gamma'=1}^\Upsilon \ad_{A_{\gamma'}}^{m_{\gamma'}} (-iH),
\end{aligned}
\end{equation} 
which is the summation of nested commutators. Therefore, one can still pair the leading-order terms $F_L(x)$ defined in \autoref{eq:FLxFKres} with $I$ to suppress the $1$-norm, shown in \autoref{eq:tildeVGen}.

\subsection{Norm bounds for the nested-commutator expansion} \label{ssc:AppGenNCBound}

Now, we are going to bound the $1$-norm $\mu^{(nc)}_K(x)$ and distance $\varepsilon^{(nc)}_K(x)$ of the truncated LCU formula $\tilde{V}_K^{(nc)}(x)$ in \autoref{eq:tildeVKxNCpair} for a general Hamiltonian. We will use the following formula in the derivation.
\begin{lemma}[Theorem~5 in \cite{childs2021theory}] \label{lem:tightNCbound}
Let $A_1, A_2$, ..., $A_r$ and $B$ be operators. Then, the conjugation has the expansion,
\begin{equation}
\begin{aligned}
&e^{\tau \ad_{A_r}} ... e^{\tau \ad_{A_2}} e^{\tau \ad_{A_1}} B \\
&\quad =G_0 + G_1 \tau + ... + G_{s-1}\frac{\tau^{s-1}}{(s-1)!} + G_{res,s}(\tau).
\end{aligned}
\end{equation}
Here, $G_0, G_1,...,G_{s-1}$ are operators independent of $\tau$. The operator-valued function $G_{res,s}(\tau)$ is given by
\begin{equation}
\begin{aligned}
&G_{res,s}(\tau) := \sum_{\gamma=1}^r \sum_{ \substack{m_1+...+m_\gamma=s \\ m_\gamma\neq 0} } e^{\tau \ad_{A_r}} ... e^{\tau \ad_{A_{\gamma+1}}}\cdot \\
&\int_0^\tau d\tau_2  \frac{(\tau-\tau_2)^{m_\gamma-1} \tau^{m_1+...+m_{\gamma-1}}}{(m_\gamma-1)!m_{\gamma-1}!...m_1!} e^{\tau_2 \ad_{A_\gamma}} \ad_{A_\gamma}^{m_\gamma} ... \ad_{A_1}^{m_1} B. 
\end{aligned}
\end{equation}
Furthermore, we have the spectral-norm bound,
\begin{equation}
\|G_{res,s}(\tau)\| \leq \alpha_{\mr{com}}^{(s)}(A_r,...,A_1;B) \frac{|\tau|^s}{s!} e^{2|\tau|\sum_{\gamma=1}^r \|A_r\|},
\end{equation}
for general operators and 
\begin{equation}
\|G_{res,s}(\tau)\| \leq \alpha_{\mr{com}}^{(s)}(A_r,...,A_1;B) \frac{|\tau|^s}{s!},
\end{equation}
when $A_1,...,A_r$ are anti-Hermitian. $\alpha_{\mr{com}}^{(s)}(A_r,...,A_1;B)$ is defined in \autoref{eq:alphaHalphaCom}.
\end{lemma}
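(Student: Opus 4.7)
The plan is to prove the lemma in two stages: first the explicit identity for the residue $G_{res,s}(\tau)$, then the two spectral-norm bounds. The key tool throughout is Taylor's theorem with integral remainder applied to a single exponential of adjoint,
\[
e^{\tau\ad_A}X \;=\; \sum_{m=0}^{M-1} \frac{\tau^m}{m!}\ad_A^m X \;+\; \int_0^{\tau} \frac{(\tau-\tau_2)^{M-1}}{(M-1)!}\, e^{\tau_2\ad_A}\ad_A^M X\, d\tau_2,
\]
applied iteratively from the innermost factor $e^{\tau\ad_{A_1}}$ outward.

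I would prove the identity by induction on $r$. The base case $r=1$ is the display above with $M=s$, which matches the claimed formula since only the $\gamma=1,\,m_1=s$ term contributes to $G_{res,s}(\tau)$. For the step from $r-1$ to $r$, write $E_r(\tau):=e^{\tau\ad_{A_r}}\cdots e^{\tau\ad_{A_1}}B = e^{\tau\ad_{A_r}}E_{r-1}(\tau)$, split $E_{r-1}(\tau)=P_{r-1}(\tau)+G_{res,s}^{(r-1)}(\tau)$ by the inductive hypothesis, and treat the two pieces separately. For the polynomial part, decompose $P_{r-1}(\tau)=\sum_{t=0}^{s-1}\tau^t P_{r-1,t}$ and, for each $t$, Taylor-expand $e^{\tau\ad_{A_r}}P_{r-1,t}$ truncated at order $s-t-1$; the polynomial leftovers reassemble into a new $P_r(\tau)$ of degree at most $s-1$, while each integral remainder carries $m_r=s-t\geq 1$ and is precisely the $\gamma=r$ family in the target formula. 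For the residue part, premultiplying $G_{res,s}^{(r-1)}(\tau)$ by $e^{\tau\ad_{A_r}}$ simply extends the outer prefix from $e^{\tau\ad_{A_{r-1}}}\cdots e^{\tau\ad_{A_{\gamma+1}}}$ to $e^{\tau\ad_{A_r}}\cdots e^{\tau\ad_{A_{\gamma+1}}}$, reproducing the $\gamma<r$ families verbatim.

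For the spectral-norm bound, I would apply the triangle inequality inside the sum-integral defining $G_{res,s}(\tau)$, pull the operator norm through each factor, and invoke two standard super-operator estimates: (i) for a general operator $A$, $\|\ad_A\|\leq 2\|A\|$ gives $\|e^{\tau\ad_A}\|\leq e^{2|\tau|\|A\|}$; (ii) when $A$ is anti-Hermitian, $e^{\tau A}$ is unitary so $e^{\tau\ad_A}X=e^{\tau A}Xe^{-\tau A}$ is an isometry on operators and $\|e^{\tau\ad_A}\|=1$. The product of $\|e^{\tau\ad_{A_{\gamma'}}}\|$ over the outer factors $\gamma'>\gamma$, together with the inner $\|e^{\tau_2\ad_{A_\gamma}}\|$, produces either $e^{2|\tau|\sum_{\gamma'}\|A_{\gamma'}\|}$ or $1$ as claimed. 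Evaluating $\int_0^{|\tau|}(|\tau|-\tau_2)^{m_\gamma-1}/(m_\gamma-1)!\,d\tau_2 = |\tau|^{m_\gamma}/m_\gamma!$ and combining with $1/(m_{\gamma-1}!\cdots m_1!)$ yields the multinomial $s!/(m_1!\cdots m_\gamma!)$ after factoring out $|\tau|^s/s!$. Finally one checks that summing over $\gamma=1,\ldots,r$ with $m_\gamma\neq 0,\ m_1+\cdots+m_\gamma=s$ (and zero entries for $m_{\gamma+1},\ldots,m_r$) enumerates each multi-index $(m_1,\ldots,m_r)$ of total degree $s$ exactly once, where $\gamma$ encodes the largest index whose partial sum first reaches $s$; by definition of $\alpha_{com}^{(s)}$, the collected inner norms then equal $\alpha_{com}^{(s)}(A_r,\ldots,A_1;B)$.

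The main obstacle is the combinatorial bookkeeping in the inductive step: one must verify that the polynomial truncations $P_{r-1}\mapsto P_r$ consume exactly the polynomial leftovers, that the fresh $\gamma=r$ residue contributions carry the correct multinomial coefficients, and that the union of the inherited $\gamma<r$ families with the $\gamma=r$ family neither omits nor double-counts any multi-index with total degree $s$. A secondary care point is the sign of $\tau$: the integral $\int_0^\tau$ must be reinterpreted for $\tau<0$, which I would handle by first proving the identity and the bound for $\tau\geq 0$ and then extending via the substitution $\tau\mapsto-\tau,\ A_j\mapsto -A_j$, which preserves both sides (and the anti-Hermitian hypothesis, when relevant).
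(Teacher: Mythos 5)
Your proposal is correct, and since the paper does not prove this lemma itself (it is imported verbatim as Theorem~5 of Ref.~\cite{childs2021theory}), your route --- iterated Taylor expansion with integral remainder applied exponential by exponential, the bijection between residue indices $(\gamma,m_1,\dots,m_\gamma)$ with $m_\gamma\neq 0$ and multi-indices of total degree $s$ (via trailing zeros), and the estimates $\|e^{\tau\ad_A}\|\leq e^{2|\tau|\|A\|}$ in general versus norm preservation under unitary conjugation in the anti-Hermitian case --- is essentially the standard proof given there. The only point worth noting is cosmetic: the exponent in the stated bound should read $2|\tau|\sum_{\gamma=1}^{r}\|A_\gamma\|$ (the $\|A_r\|$ is a typo in the statement), which your argument in fact delivers.
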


We have the following proposition.
\begin{proposition}[Trotter-LCU formula by nested-commutator compensation for general Hamiltonians] \label{prop:NCGenH}
For $x>0$, $\tilde{V}_K(x)$ in \autoref{eq:tildeVGen} is a $(\mu^{(nc)}(x),\varepsilon^{(nc)}(x))$-LCU formula of $K$th-order Trotter remainder $V_K(x)=U(x)S_K(x)$ with
\begin{equation}
\begin{aligned}
\mu^{(nc)} &\leq \sqrt{1 + 4\kappa^2 \left(\sum_{s=K}^{2K} \alpha_{H,1}^{(s)} \frac{x^{s+1}}{(s+1)!} \right)^2}, \\
\varepsilon^{(nc)} &\leq 4\kappa^2 \alpha_H^{(K)} \sum_{s=K}^{2K} \alpha_H^{(s)} \frac{x^{s+K+2}}{s!(K+1)!(s+K+2)} \\
&\quad + 2\kappa \frac{x^{2K+2}}{(2K+2)!} \alpha^{(2K+1)}_H.
\end{aligned}
\end{equation}
Here, we define the nested-commutator norms to be
\begin{equation} \label{eq:alphaHalphaCom}
\begin{aligned}
&\alpha^{(s)}_H := \sum_{l=1}^L \alpha_{\mr{com}}^{(s)}( \overleftarrow{ \{ H_{\pi_{j'}(l')} \}} ; H_l), \\
&\alpha_{\mr{com}}^{(s)}(A_r,...,A_1;B) := \\
&\quad \sum_{m_1+...+m_r=s} \binom{s}{m_1,...,m_r} \|\ad_{A_r}^{m_r} ... \ad_{A_1}^{m_1} B\|,
\end{aligned}
\end{equation}
where $\overleftarrow{ \{ H_{\pi_{j'}(l')} \}}$ indicates a sequence of $\Upsilon=\kappa L$ summands with the lexicographical order given by the $K$th-order Trotter formula in \autoref{eq:SxGenOrder}. $\alpha^{(s)}_{H,1}$ is defined similarly by replacing the spectral norm to $1$-norm in \autoref{eq:alphaHalphaCom}.
\end{proposition}

\begin{proof}

From \autoref{eq:AppGenCsSimp} we can bound the $1$-norm and spectral norm of $C_s$ by
\begin{equation} \label{eq:GenCsbound}
\begin{aligned}
\|C_s\| &\leq \sum_{\gamma=1}^\Upsilon \alpha_{\mr{com}}^{(s)}(A_\Upsilon,...,A_{\gamma+1}; A_\gamma) \\
&\quad + \alpha_{\mr{com}}^{(s)}(A_\Upsilon,...,A_1;H), \\
\|C_s\|_1 &\leq \sum_{\gamma=1}^\Upsilon \alpha_{com,1}^{(s)}(A_\Upsilon,...,A_{\gamma+1}; A_\gamma) \\
&\quad + \alpha_{com,1}^{(s)}(A_\Upsilon,...,A_1;H).
\end{aligned}
\end{equation}

Now, we are going to bound $\|J_K(x)\|$ and $\|J_{K,res,2K}(x)\|$. From \autoref{eq:GenJx} and using Lemma~\ref{lem:tightNCbound} we have
\begin{equation} \label{eq:GenJxBound}
\begin{aligned}
J_K(x) &= - \sum_{\gamma=1}^\Upsilon \prod_{\gamma'=\gamma+1}^\Upsilon e^{\ad_{A_{\gamma'}}}(A_\gamma) + \prod_{\gamma'=1}^\Upsilon e^{\ad_{A_{\gamma'}}}(-iH)  \\
&= \sum_{\gamma=1}^\Upsilon G^{(\gamma)}_{res,K}(x) + G^{(0)}_{res,K}(x).
\end{aligned}
\end{equation}
In the second line, we use Lemma~\ref{lem:tightNCbound} to expand all the conjugate matrix exponentials to the following form:
\begin{equation}
\begin{aligned}
&\prod_{\gamma'=\gamma+1}^\Upsilon e^{\ad_{A_{\gamma'}}}(A_\gamma) = G^{(\gamma)}_0 + G^{(\gamma)}_1 x + .... \\
&\quad\quad + G^{(\gamma)}_{K-1} \frac{x^{K-1}}{(K-1)!} + G^{(\gamma)}_{res,K}, \quad \gamma=1,2...,\Upsilon, \\
&\prod_{\gamma'=1}^\Upsilon e^{\ad_{A_{\gamma'}}}(-iH) = G^{(0)}_0 + G^{(0)}_1 x + .... \\
&\quad\quad + G^{(0)}_{K-1} \frac{x^{K-1}}{(K-1)!} + G^{(0)}_{res,K}.
\end{aligned}
\end{equation}
The low-order terms $G_s^{(\gamma)}$ with $s=0,1,...,K-1$ in \autoref{eq:GenJxBound} cancel out due to the order condition in Proposition~\ref{prop:order}. Therefore, we have
\begin{equation} \label{eq:GenJxBound2}
\begin{aligned}
&\|J_K(x)\| \leq \sum_{\gamma=1}^\Upsilon \|G^{(\gamma)}_{res,K}(x)\| + \|G^{(0)}_{res,K}(x)\| \\
&\leq \frac{x^K}{K!} \Big( \sum_{\gamma=1}^\Upsilon \alpha_{\mr{com}}^{(K)}(A_\Upsilon,...,A_{\gamma+1}; A_\gamma)  + \alpha_{\mr{com}}^{(K)}(A_\Upsilon,...,A_1;H) \Big) \\
&\leq 2\frac{x^K}{K!} \sum_{\gamma=1}^\Upsilon \alpha_{\mr{com}}^{(K)}(A_\Upsilon,...,A_1; H_\gamma) \\
&= 2\kappa \frac{x^K}{K!} \sum_{l=1}^L \alpha_{\mr{com}}^{(K)}( \overleftarrow{ \{ a_{(j',l')} H_{\pi_{j'}(l')} \}} ; H_l) \\
&\leq 2\kappa \frac{x^K}{K!} \sum_{l=1}^L \alpha_{\mr{com}}^{(K)}( \overleftarrow{ \{ H_{\pi_{j'}(l')} \}} ; H_l) =: 2\kappa \frac{x^K}{K!} \alpha^{(K)}_H.
\end{aligned}
\end{equation}
Following the same way, we can bound $\|C_s\|$ and $\|C_s\|_1$ in \autoref{eq:GenCsbound} as
\begin{equation} \label{eq:GenCsbound2}
\begin{aligned}
\|C_s\| &\leq 2\kappa \alpha^{(s)}_H, \\
\|C_s\|_1 &\leq 2\kappa \alpha^{(s)}_{H,1}.
\end{aligned}
\end{equation}

Now, we are going to bound $\|J_{K,res,2K}(x)\|$. Similar to \autoref{eq:GenJxBound}, we expand all the conjugate matrix exponentials to $2K$th-order,
\begin{equation} \label{eq:GenJresBound}
\begin{aligned}
J_K(x) &= \sum_{\gamma=1}^\Upsilon \prod_{\gamma'=\gamma+1}^\Upsilon e^{\ad_{A_{\gamma'}}}(A_\gamma)  + \prod_{\gamma'=1}^\Upsilon e^{\ad_{A_{\gamma'}}}(-iH)  \\
&= \sum_{s=K}^{2K} C_s \frac{x^s}{s!} + \sum_{\gamma=1}^\Upsilon G^{(\gamma)}_{res,2K+1}(x) + G^{(0)}_{res,2K+1}(x),
\end{aligned}
\end{equation}
in the second line, we use the expansion of $J_K(x)$ in \autoref{eq:JKxExpand}. Based on \autoref{eq:GenJxBound2} we then have
\begin{equation} \label{eq:GenJresBound2}
\begin{aligned}
J_{K,res,2K}(x) = \sum_{\gamma=1}^\Upsilon & G^{(\gamma)}_{res,2K+1}(x) + G^{(0)}_{res,2K+1}(x) \\
\Rightarrow \|J_{K,res,2K}(x)\| &\leq \sum_{\gamma=1}^\Upsilon \|G^{(\gamma)}_{res,2K+1}(x)\| + \|G^{(0)}_{res,2K+1}(x)\| \\
&\leq 2\kappa \frac{x^{2K+1}}{(2K+1)!} \alpha^{(2K+1)}_H.
\end{aligned}
\end{equation}
We omit the derivation to the second line since it is the same as the one in \autoref{eq:GenJxBound2}.

Then we can bound $\|M_K(x)\|$ by
\begin{equation} \label{eq:GenMxBound}
\begin{aligned}
\|M_K(x)\| \leq \int_0^x d\tau \|J_K(\tau)\| \leq 2\kappa \frac{x^{K+1}}{(K+1)!} \alpha_H^{(K)}.
\end{aligned}
\end{equation}

Finally, by applying Proposition~\ref{prop:NCnormBound} and using \autoref{eq:GenCsbound2}, we can bound the $1$-norm $\mu^{(nc)}_K(x)$ as follows:
\begin{equation} \label{eq:GenMuBound}
\mu^{(nc)}_K(x) \leq \sqrt{1 + 4\kappa^2 \left(\sum_{s=K}^{2K} \alpha_{H,1}^{(s)} \frac{x^{s+1}}{(s+1)!} \right)^2}, 
\end{equation}
while the accuracy $\varepsilon^{(nc)}_K(x)$ can be bounded using \autoref{eq:GenCsbound}, \autoref{eq:GenJresBound2}, and \autoref{eq:GenMxBound},
\begin{widetext}
\begin{equation} \label{eq:GenEpsBound}
\begin{aligned}
& \varepsilon^{(nc)}_K(x) = \|F_{K,\mc{res}}(x)\| \leq \int_0^x d\tau \left( \|M_K(\tau)\| \|J_L(\tau)\| + \|J_{K,res,2K}(\tau)\|  \right) \\
&\leq \int_0^x d\tau \left( 4\kappa^2 \frac{\tau^{K+1}}{(K+1)!} \alpha_H^{(K)} \sum_{s=K}^{2K} \alpha_H^{(s)} \frac{\tau^s}{s!} + 2\kappa \frac{\tau^{2K+1}}{(2K+1)!} \alpha^{(2K+1)}_H \right) \\
&\leq 4\kappa^2 \alpha_H^{(K)} \sum_{s=K}^{2K} \alpha_H^{(s)} \frac{x^{s+K+2}}{s!(K+1)!(s+K+2)} + 2\kappa \frac{x^{2K+2}}{(2K+2)!} \alpha^{(2K+1)}_H.
\end{aligned}
\end{equation}
\end{widetext}
\end{proof}

From Proposition~\ref{prop:NCGenH} we can see that, to characterize the performance of the Trotter-LCU algorithm, we only need to estimate the values of $\alpha^{(s)}_H$ and $\alpha^{(s)}_{H,1}$ for a given Hamiltonian. We can further simplify the form of $\alpha_H^{(s)}$ by the following upper bound~\cite{childs2021theory},
\begin{equation}
\begin{aligned}
\alpha_H^{(s)} &\leq \kappa^s \sum_{l_{s+1}=1}^L ...\sum_{l_{2}=1}^L \| [H_{l_{s+1}},...[H_{l_2},H_l]]...] \| \\
&=: \kappa^s \tilde{\alpha}_{\mr{com}}(H),
\end{aligned}
\end{equation}
which is because the commutator terms in the left-hand side must be of the form on the right. Moreover, if we fix one term $\| [H_{l_{s+1}},...[H_{l_2},H_l]]...] \|$ on the right-hand side, we can find at most $\kappa^s$ times of this term on the left-hand side.

\section{SOME USEFUL FORMULAS IN THE PROOF} \label{sec:formulas}

\begin{lemma}[Tail bound for the Poisson distribution (Theorem~1 in \cite{canonne2016poisson})] \label{lem:PoissonTail}
Suppose $\hat{X}$ is a random variable with Poisson distribution so that $\Pr(\hat{X}=s) = \mr{Poi}(s;x) = e^{-x}\frac{x^s}{s!}$, where $x>0$ is the expectation value. Then, for any $\epsilon>0$, we have,
\begin{equation}
\Pr(\hat{X}\geq x+\epsilon) \leq e^{- \frac{\epsilon^2}{2x} h(\frac{\epsilon}{x})},
\end{equation}
and, for any $0<\epsilon<x$,
\begin{equation}
\Pr(\hat{X}\leq x-\epsilon) \leq e^{- \frac{\epsilon^2}{2x} h(-\frac{\epsilon}{x})}.
\end{equation}
Here, $h(u):= 2\frac{(1+u)\ln(1+u)-u}{u^2}$ for $u\geq -1$.
\end{lemma}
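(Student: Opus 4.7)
The plan is to use the standard Chernoff (exponential-moment) method, which is the natural tool for one-sided tails of sums of independent variables — and although we have a single $\hat X$, the Poisson moment generating function is explicit enough to handle directly. I would begin by computing
\begin{equation}
M(\theta) := \mathbb{E}\bigl[e^{\theta \hat X}\bigr] = \sum_{s=0}^\infty e^{\theta s} e^{-x}\frac{x^s}{s!} = e^{x(e^\theta-1)},
\end{equation}
which is valid for all $\theta\in\mathbb{R}$. Markov's inequality applied to $e^{\theta \hat X}$ (with $\theta>0$ for the upper tail) then gives $\Pr(\hat X\ge x+\epsilon) \le e^{-\theta(x+\epsilon)} M(\theta) = \exp\!\bigl(x(e^\theta-1)-\theta(x+\epsilon)\bigr)$.

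The next step is the optimization over $\theta>0$. Differentiating the exponent in $\theta$ gives $xe^\theta = x+\epsilon$, so the optimal choice is $\theta^* = \ln(1+\epsilon/x)$, which is positive as required. Substituting back and letting $u = \epsilon/x$, the exponent collapses to
\begin{equation}
xu - x(1+u)\ln(1+u) = -x\bigl[(1+u)\ln(1+u) - u\bigr].
\end{equation}
Recognizing the definition $h(u) = 2\bigl[(1+u)\ln(1+u)-u\bigr]/u^2$, one checks directly that $x[(1+u)\ln(1+u)-u] = \tfrac{\epsilon^2}{2x} h(\epsilon/x)$, which yields the claimed upper-tail bound.

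For the lower tail, I would repeat the same Chernoff argument but apply Markov's inequality to $e^{-\theta \hat X}$ with $\theta>0$, giving $\Pr(\hat X\le x-\epsilon) \le e^{\theta(x-\epsilon)} M(-\theta) = \exp\!\bigl(x(e^{-\theta}-1)+\theta(x-\epsilon)\bigr)$. Optimizing in $\theta$ yields $e^{-\theta^*} = 1-\epsilon/x$, which is well-defined and positive precisely because of the hypothesis $0<\epsilon<x$; this is where that assumption is used. Setting $v = -\epsilon/x\in(-1,0)$, the optimized exponent becomes $-x[(1+v)\ln(1+v)-v]$, and the same algebraic identity relating this quantity to $\tfrac{\epsilon^2}{2x} h(-\epsilon/x)$ delivers the lower-tail inequality.

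There is no real obstacle here — the proof is essentially a two-line Chernoff calculation. The only thing requiring a bit of care is verifying that $h$ is well-defined and positive on $(-1,\infty)$ (so that the right-hand sides are genuine decay factors, not trivially $\ge 1$); this follows from the fact that $(1+u)\ln(1+u)-u \ge 0$ with equality only at $u=0$, which is immediate by differentiating twice. Since the statement is a quoted citation from \cite{canonne2016poisson}, it would also be legitimate simply to refer to that source rather than reproduce the derivation.
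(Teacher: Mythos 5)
Your Chernoff derivation is correct: the Poisson moment generating function, the optimal choices $\theta^*=\ln(1+\epsilon/x)$ and $e^{-\theta^*}=1-\epsilon/x$ (the latter being exactly where $0<\epsilon<x$ is needed), and the algebra identifying $x\bigl[(1+u)\ln(1+u)-u\bigr]$ with $\tfrac{\epsilon^2}{2x}h(\pm\epsilon/x)$ all check out. The paper itself offers no proof of this lemma — it is imported verbatim from Theorem~1 of \cite{canonne2016poisson} — and that source's argument is precisely this Chernoff--Cram\'er computation, so your proposal reproduces the intended derivation.
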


From Lemma~\ref{lem:PoissonTail} we have the following corollaries.
\begin{corollary} \label{coro:ExpTailPower}
For $x>0$ and positive integer $k$ such that $x<k+1$, we have
\begin{equation}
\sum_{s=k+1}^\infty \frac{x^s}{s!} \leq \left(\frac{e x}{k+1}\right)^{k+1}.
\end{equation}
\end{corollary}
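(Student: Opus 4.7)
The plan is to recognize the target sum as the upper tail of a Poisson distribution and then apply Lemma~\ref{lem:PoissonTail} directly. Specifically, letting $\hat{X}$ be Poisson with mean $x$, one has
\begin{equation*}
\sum_{s=k+1}^{\infty} \frac{x^s}{s!} = e^x \, \Pr(\hat{X} \geq k+1),
\end{equation*}
so the problem reduces to bounding $\Pr(\hat{X} \geq k+1)$ with the tail inequality in Lemma~\ref{lem:PoissonTail}.

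Concretely, I would set $\epsilon := (k+1) - x$, which is positive by the hypothesis $x < k+1$, so that $\Pr(\hat{X} \geq k+1) = \Pr(\hat{X} \geq x + \epsilon)$. Plugging into the bound in Lemma~\ref{lem:PoissonTail} gives
\begin{equation*}
\Pr(\hat{X} \geq k+1) \leq \exp\!\left(-\tfrac{\epsilon^2}{2x}\, h(\epsilon/x)\right),
\end{equation*}
and the main computational step is to simplify the exponent. Using $u = \epsilon/x = (k+1-x)/x$, so that $1+u = (k+1)/x$, and the definition $h(u) = 2\bigl[(1+u)\ln(1+u) - u\bigr]/u^2$, the $u^2$ in the denominator of $h$ cancels the $\epsilon^2/(2x)$ prefactor, yielding
\begin{equation*}
\tfrac{\epsilon^2}{2x}\, h(\epsilon/x) = (k+1)\ln\tfrac{k+1}{x} - (k+1) + x.
\end{equation*}
Exponentiating and combining with the leading $e^x$ factor then gives $e^x \Pr(\hat{X}\geq k+1) \leq (x/(k+1))^{k+1} e^{k+1} = (ex/(k+1))^{k+1}$, which is the desired bound.

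The calculation is essentially routine; the only mild obstacle is keeping track of the algebraic simplification of $\frac{\epsilon^2}{2x} h(\epsilon/x)$, which collapses cleanly because the factor of $(1+u)\ln(1+u) - u$ in the numerator of $h$ combines with $\epsilon^2/(2x)$ into the logarithm-plus-linear form above. As a sanity check, one can verify that at the edge case $x = k+1$ the right-hand side of the claim becomes $e^{k+1}$, which exceeds the trivial tail bound $e^x - \sum_{s=0}^{k} x^s/s! \leq e^x = e^{k+1}$, confirming consistency. A shorter but looser alternative would be to use $s! \geq (k+1)!(k+1)^{s-k-1}$ for $s \geq k+1$ together with Stirling $(k+1)! \geq ((k+1)/e)^{k+1}$, summing the geometric tail; this produces an extra factor of $(k+1)/(k+1-x)$ that is not present in the stated bound, so the Poisson-tail route is the cleaner path to the exact constant.
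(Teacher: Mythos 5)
Your proof is correct and follows essentially the same route as the paper: interpret the sum as $e^x \Pr(\hat{X}\geq k+1)$ for a Poisson variable with mean $x$, apply Lemma~\ref{lem:PoissonTail} with $\epsilon = (k+1)-x$, and simplify the exponent to obtain $e^{-x}\left(\tfrac{ex}{k+1}\right)^{k+1}$. The algebraic simplification of $\tfrac{\epsilon^2}{2x}h(\epsilon/x)$ you carry out is exactly the computation implicit in the paper's proof, so there is nothing to add.
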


\begin{proof}
We set $\epsilon = (k+1)-x$. From Lemma~\ref{lem:PoissonTail} we have
\begin{equation}
\begin{aligned}
\Pr(\hat{X}\geq k+1) &\leq \exp\left( -\frac{(k+1-x)^2}{2x} h(\frac{k+1-x}{x}) \right) \\
&= e^{-x} \left( \frac{e x}{k+1} \right)^{k+1}.
\end{aligned}
\end{equation}
Therefore,
\begin{equation}
\begin{aligned}
&\quad \Pr(\hat{X}\geq k+1) = e^{-x} \sum_{s=k+1}^{\infty} \frac{x^s}{s!} \leq e^{-x} \left( \frac{e x}{k+1} \right)^{k+1} \\
&\Rightarrow \sum_{s=k+1}^{\infty} \frac{x^s}{s!} \leq \left( \frac{e x}{k+1} \right)^{k+1}.
\end{aligned}
\end{equation}

\end{proof}

\begin{corollary} \label{coro:ExpTailExp}
For $x>0$ and positive integer $k$ such that $x<k+1$, we have
\begin{equation}
e^x - \sum_{s=1}^{k} \frac{x^s}{s!} \leq e^{e x^{k+1}}.
\end{equation}
\end{corollary}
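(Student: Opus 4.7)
\textbf{Proof proposal for Corollary~\ref{coro:ExpTailExp}.}

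The plan is to reduce the statement to the tail bound in Corollary~\ref{coro:ExpTailPower} (just proved) together with the elementary inequality $1+y\le e^y$. First I would rewrite the left-hand side as a tail of the exponential series:
\begin{equation*}
e^x - \sum_{s=1}^{k}\frac{x^s}{s!} \;=\; 1 + \sum_{s=k+1}^{\infty}\frac{x^s}{s!}.
\end{equation*}
Applying Corollary~\ref{coro:ExpTailPower} (valid because $0<x<k+1$) gives
\begin{equation*}
1 + \sum_{s=k+1}^{\infty}\frac{x^s}{s!} \;\le\; 1 + \left(\frac{e x}{k+1}\right)^{k+1}.
\end{equation*}

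Next I would massage the right-hand side into the form $1 + e\, x^{k+1}$. This reduces to the clean elementary inequality
\begin{equation*}
\left(\frac{e x}{k+1}\right)^{k+1} \;\le\; e\, x^{k+1} \;\;\Longleftrightarrow\;\; e^{k} \;\le\; (k+1)^{k+1},
\end{equation*}
equivalently $(k+1)\ln(k+1)\ge k$, which is immediate from the standard estimate $\ln(1+u)\ge u/(1+u)$ applied at $u=k$. (Alternatively, one checks $k=1$ by hand and then uses that $\ln(k+1)\ge 1$ for $k\ge 2$.)

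Combining the two bounds yields $1 + \sum_{s\ge k+1} x^s/s! \le 1 + e\, x^{k+1}$, and finally the elementary inequality $1+y\le e^{y}$ applied at $y = e\, x^{k+1} \ge 0$ promotes this to $e^{e\, x^{k+1}}$, which is exactly the desired upper bound.

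There is no real obstacle here; the only point worth double-checking is the elementary inequality $e^{k}\le (k+1)^{k+1}$ for all positive integers $k$, and even that admits the one-line proof given above. Hence the corollary follows as a direct repackaging of Corollary~\ref{coro:ExpTailPower}.
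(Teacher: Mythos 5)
Your proposal is correct, and it follows the same skeleton as the paper's proof: rewrite the left-hand side as $1+\sum_{s\geq k+1}x^s/s!$, invoke Corollary~\ref{coro:ExpTailPower} for the tail, and finish with $1+y\leq e^y$. The one place you diverge is the intermediate constant comparison, and your version is actually the sounder one. The paper bounds $\left(\frac{ex}{k+1}\right)^{k+1}\leq \frac{(ex)^{k+1}}{(k+1)!}$ and then asserts $1+\frac{(ex)^{k+1}}{(k+1)!}\leq e^{ex^{k+1}}$; via $1+y\leq e^y$ this last step needs $e^{k}\leq (k+1)!$, which fails for $k=1,2$ (e.g.\ $e>2!$), so the paper's chain is too lossy at its middle link even though the corollary itself is true. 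You instead compare $\left(\frac{ex}{k+1}\right)^{k+1}\leq e\,x^{k+1}$ directly, which reduces to $e^{k}\leq (k+1)^{k+1}$, i.e.\ $(k+1)\ln(k+1)\geq k$, and your one-line justification via $\ln(1+u)\geq u/(1+u)$ at $u=k$ is valid for every positive integer $k$. So your argument is a correct (and slightly tighter) repackaging of the paper's route, and it closes the small gap left by the paper's intermediate factorial bound.
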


\begin{proof}
When $x<k+1$, from Corollary~\ref{coro:ExpTailPower} we have
\begin{equation}
\begin{aligned}
&\quad e^{-x}\sum_{s=k+1}^{\infty} \frac{x^s}{s!} \leq e^{-x}\left( \frac{e x}{k+1} \right)^{k+1} \\
&\Rightarrow 1 - e^{-x} - e^{-x} \sum_{s=1}^{k} \frac{x^s}{s!} \leq e^{-x} \left( \frac{e x}{k+1} \right)^{k+1} \\
&\Rightarrow e^x - \sum_{s=1}^k \frac{x^s}{s!} \leq 1 + \left(\frac{ex}{k+1}\right)^{k+1} \\
&\quad \leq 1 + \frac{(ex)^{k+1}}{(k+1)!} \leq e^{e x^{k+1}}.
\end{aligned}
\end{equation}

\end{proof}

\begin{lemma}[Proposition~9 in \cite{wan2021randomized}] \label{lem:ExpTailSolu}
For any $\beta>0$, $1>\epsilon>0$, we have
$\left( \frac{e\beta}{s} \right)^s \leq \epsilon$,
for all 
$s \geq f(\beta,\epsilon) := \frac{\ln(\frac{1}{\epsilon})}{W_0\left( \frac{1}{e\beta}\ln(\frac{1}{\epsilon}) \right)}$.
Here, $W_0(y)$ is the principle branch of the Lambert $W$ function. 
\end{lemma}

\begin{lemma}[Theorem~2.7 in \cite{hoorfar2008inequalities}] \label{lem:WfuncBound}
When $y\geq e$ we have
\begin{equation}
\begin{aligned}
 &\ln(y) - \ln\ln(y) + \frac{1}{2} \frac{\ln\ln(y)}{\ln(y)} \leq W_0(y) \\
 & \quad \leq \ln(y) - \ln\ln(y) + \frac{e}{e-1} \frac{\ln\ln(y)}{\ln(y)}.
\end{aligned}
\end{equation}
\end{lemma}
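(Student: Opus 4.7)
\textbf{Proof proposal for Lemma~\ref{lem:WfuncBound}.}

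The plan is to reduce everything to elementary inequalities about $\ln$ by exploiting the defining identity of the Lambert $W$ function. For $y\geq e$ we have $W_0(y)\geq 1$, and taking logarithms of $W_0(y)\,e^{W_0(y)}=y$ yields
\begin{equation}
g(W_0(y))=\ln y,\qquad g(w):=w+\ln w.
\end{equation}
Since $g$ is strictly increasing on $(0,\infty)$, to prove $a\leq W_0(y)\leq b$ for candidates $a,b>0$ it suffices to verify $g(a)\leq \ln y$ and $g(b)\geq \ln y$. Abbreviate $L_1=\ln y$ and $L_2=\ln L_1=\ln\ln y$, so that the claimed bounds become $a=L_1-L_2+\tfrac{L_2}{2L_1}$ and $b=L_1-L_2+\tfrac{e}{e-1}\tfrac{L_2}{L_1}$, and note that for $y\geq e$ one has $L_1\geq 1$, $L_2\geq 0$, and $t:=L_2/L_1\in[0,1/e]$ (the maximum of $\ln L_1/L_1$ occurs at $L_1=e$).

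For the lower bound, I would compute
\begin{equation}
g(a)-L_1 \;=\; \tfrac{L_2}{2L_1}+\ln\!\left(1-t+\tfrac{t}{2L_1}\right),
\end{equation}
set $u:=t-t/(2L_1)=t(1-1/(2L_1))\in[0,1)$, and apply the one-line tangent inequality $\ln(1-u)\leq -u$. This reduces the required inequality $g(a)\leq L_1$ to $\tfrac{L_2}{2L_1}\leq u$, i.e.\ $\tfrac{1}{2L_1}\leq 1-\tfrac{1}{2L_1}$, equivalent to $L_1\geq 1$, which holds throughout the domain $y\geq e$. So the lower bound is essentially immediate.

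The upper bound is the main obstacle: one needs a lower bound on a logarithm, which is structurally harder than an upper bound. My plan is to exploit concavity of $\ln(1-v)$ via a secant-line inequality. Writing
\begin{equation}
g(b)-L_1 \;=\; \tfrac{e}{e-1}\tfrac{L_2}{L_1}+\ln\!\left(1-v\right),\qquad v:=t\!\left(1-\tfrac{e}{(e-1)L_1}\right),
\end{equation}
and noting $v\leq t\leq 1/e<1-1/e$, concavity of $\ln(1-\cdot)$ gives the secant bound $\ln(1-v)\geq -\tfrac{e}{e-1}\,v$ on $[0,1-1/e]$ (equality at both endpoints $v=0$ and $v=1-1/e$). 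Plugging this in reduces $g(b)\geq L_1$ to the algebraic inequality $\tfrac{e}{e-1}\,t\,\tfrac{e}{(e-1)L_1}\geq 0$, which is trivially true. The delicate point is the case $v<0$ (occurring when $L_1<e/(e-1)$, i.e.\ on a narrow initial range), where $\ln(1-v)>0$ and the desired inequality is automatic. I expect the constant $e/(e-1)$ to be tight precisely because it is the secant slope of $\ln(1-v)$ between the endpoints $0$ and $1-1/e$; any smaller constant would violate the bound near the extremal point $L_1=e$, where $t$ attains its maximum $1/e$.

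The remaining work is just bookkeeping: verify that the secant bound applies on the full range $L_1\geq 1$, handle the $v<0$ subcase separately, and check the boundary $y=e$ by direct substitution ($L_2=0$, giving $W_0(e)=1$ exactly). The main obstacle, as stressed, is producing the sharp constant $e/(e-1)$ in the upper estimate; the lower estimate follows from the crudest tangent-line bound, but the upper estimate must carefully track the worst case $t=1/e$ at which the secant-line bound is saturated.
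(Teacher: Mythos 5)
Your proposal is correct, and it is worth noting that the paper contains no proof of Lemma~\ref{lem:WfuncBound} at all: the bound is simply imported as Theorem~2.7 of Ref.~\cite{hoorfar2008inequalities}, so your argument supplies a self-contained derivation rather than reproducing anything in the text. The reduction through the strictly increasing map $g(w)=w+\ln w$ with $g(W_0(y))=\ln y$ is sound (both candidates are positive, since $a,b\ge L_1(1-t)>0$ with $t=L_2/L_1\le 1/e$, so applying $g$ and its monotonicity is legitimate); the tangent bound $\ln(1-u)\le -u$ gives $g(a)-L_1\le \frac{t}{2}\bigl(\frac{1}{L_1}-1\bigr)\le 0$ for $L_1\ge 1$; and the chord bound $\ln(1-v)\ge -\frac{e}{e-1}v$, valid by concavity on $[0,\,1-1/e]$ and applicable because $0\le v\le t\le 1/e<1-1/e$, yields $g(b)-L_1\ge \frac{e}{e-1}\,t\,\frac{e}{(e-1)L_1}\ge 0$, with the subcase $v<0$ (i.e.\ $1\le L_1<\frac{e}{e-1}$) trivial exactly as you say, and equality at $y=e$ checked directly. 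The only blemish is a transcription slip in the lower-bound reduction: after dividing $\frac{L_2}{2L_1}\le u$ by $t>0$ the condition reads $\frac12\le 1-\frac{1}{2L_1}$, not $\frac{1}{2L_1}\le 1-\frac{1}{2L_1}$; both happen to be equivalent to $L_1\ge 1$, so nothing breaks, but the line should be fixed when you write this up. Compared with the cited reference, which obtains this estimate within a broader study of parametrized bounds for $W_0$, your route (monotonicity of $w+\ln w$ plus one tangent and one secant estimate for the logarithm) is shorter and elementary, and would be a reasonable way to make the appendix self-contained.
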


\section{ADDITIONAL NUMERICAL RESULTS} \label{sec:AppNumerics}

In this section, we provide more numerical results by comparing the gate costs of Trotter-LCU algorithms with other typical algorithms, especially the Trotter algorithm and ``post-Trotter'' algorithm with best performance, i.e., the fourth-order Trotter algorithm and the quantum signal processing (QSP) algorithm. 
We will mainly consider two different scenarios---generic $L$-sparse Hamiltonians and lattice Hamiltonians. In the former case, the previously known best result is given by the QSP algorithm~\cite{low2017optimal}; in the latter case, since we can take advantage of the spacial locality and commutator information, the fourth-order Trotter algorithm is known to have the best performance~\cite{childs2018toward,childs2019nearly}.


For the Trotter algorithms and Trotter-LCU algorithms, we first compile the circuit to $\mc{CNOT}$ gates, single-qubit Clifford gates and non-Clifford $Z$-axis rotation gate $R_z(\theta)=e^{i\theta Z}$. 
On the other hand, the circuit compilation for the QSP algorithm is more complicated: we need to decompose the state-preparation oracles and the select-$H$ gates. We follow the qROM design in Ref.~\cite{babbush2018encoding} based on a sawtooth structure. The detailed gate resource analysis can be found in a companion work~\cite{Sun2022high}.
Based on the qROM design, we decompose all the state preparation oracles and the select-$H$ gates to Toffoli gates, which can be further decomposed to $\mc{CNOT}$ gates, single-qubit Clifford gates and $T$ gates.

For a fair comparison between the gate costs of Trotter, Trotter-LCU and QSP algorithms, we need to further compile the $Z$-axis rotation gate $R_z(\theta)$ to $T$ gate. We follow the gate compilation work in Ref.~\cite{bocharov2015efficient}, where the expected $T$-gate number to compile $R_z(\theta)$ with random $\theta$ is about
\begin{equation}
c_T = 1.149 \log_2(1/\epsilon) + 9.2,
\end{equation}
where $\epsilon$ is the compilation accuracy. We set the accuracy $\epsilon=10^{-15}$ in the later resource estimation. In this case, $c_T \approx 66$.

\subsection{Generic $L$-sparse Hamiltonians}

For the generic $L$-sparse Hamiltonian, we choose the $2$-local Hamiltonian 
\begin{equation} \label{eq:2localH}
H = \sum_{i,j} J_{ij} X_i X_j + \sum_i Z_i, 
\end{equation}
with $J_{ij} = 1$ as an example, in which case we ignore the commutator information between different Hamiltonian summands. 
This simple model works as a representative of many generic Hamiltonians where the commutator information is not helpful or too complicated to count on, e.g., quantum chemistry Hamiltonian for the molecules.
Without the commutator information, the former best Hamiltonian simulation algorithm is QSP~\cite{low2017optimal}.

In \autoref{fig:t_CNOT_Power0} and \autoref{fig:t_T_Power0}, we estimate and compare the $\mc{CNOT}$ and $T$ gate counts for the PTSC algorithms, QSP and fourth-order Trotter algorithm with respect to the evolution time. 
We choose fourth-order Trotter algorithm with random permutation since it performs the best over all the Trotter algorithms. The gate counting method for fourth-order Trotter with random permutation is based on the analytical bounds in Ref.~\cite{childs2019fasterquantum}. For the QSP algorithm, we estimate the gate count based on the QROM construction in Ref.~\cite{babbush2018encoding}. 
Only the high-accuracy results $\varepsilon = 10^{-5}$ for our method and quantum signal processing are presented since our method and quantum signal processing has a logarithmic dependence and hence are less prone to the accuracy while the Trotter formulae has polynomial dependence on the accuracy.

\begin{figure}[htbp]
\centering
\includegraphics[width=0.95\columnwidth]{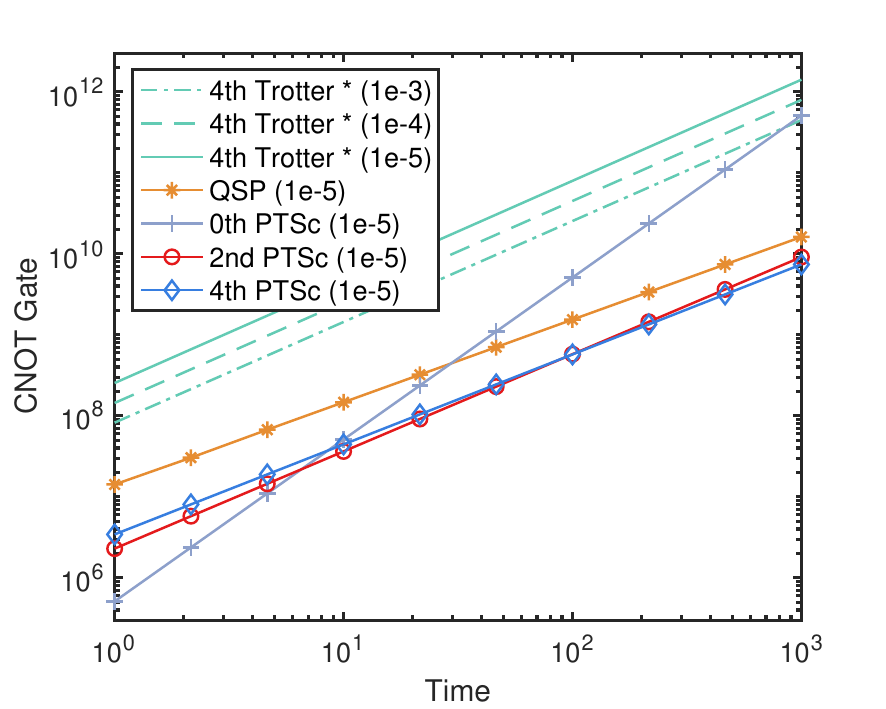} 
\caption{ 
CNOT-gate number estimation for simulating the generic $L$-sparse Hamiltonian with an increasing time. The system size is set as $n = 20$. The simulation is exemplified with the $2$-local Hamiltonian in \autoref{eq:2localH}.
}
\label{fig:t_CNOT_Power0}
\end{figure}

\begin{figure}[htbp]
\centering
\includegraphics[width=0.95\columnwidth]{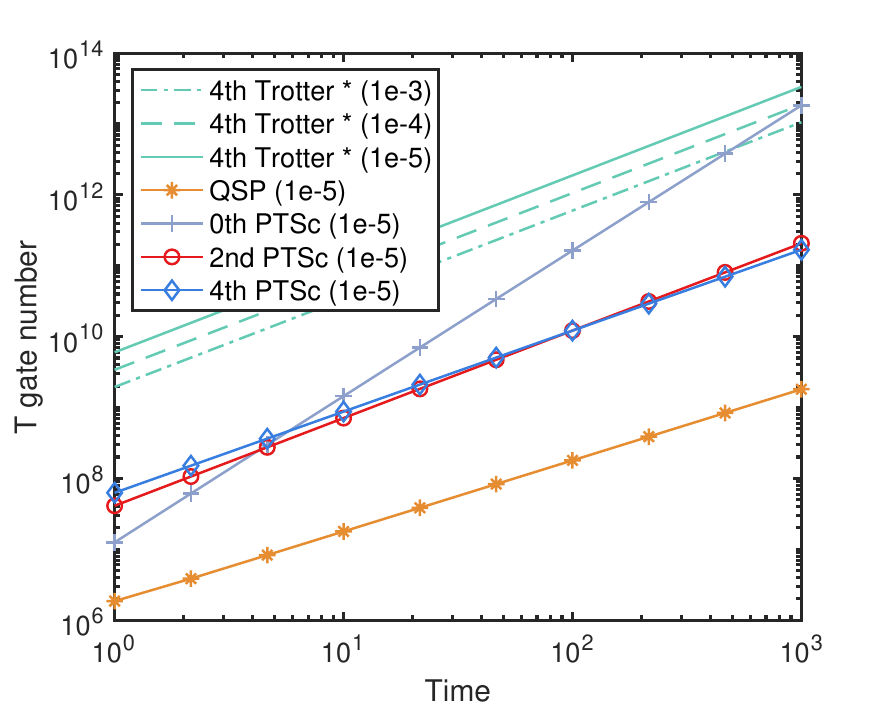} 
\caption{ 
$T$-gate number estimation for simulating the generic $L$-sparse Hamiltonian with an increasing time. The system size is set as $n = 20$. The simulation is exemplified with the $2$-local Hamiltonian in \autoref{eq:2localH}.
}
\label{fig:t_T_Power0}
\end{figure}

From \autoref{fig:t_CNOT_Power0} and \autoref{fig:t_T_Power0} we can see that, the PTSC algorithm owns a lower $\mc{CNOT}$ gate cost than the QSP algorithm because it does not need the qROM for classical data loading. On the other hand, the PTSC algorithms have a larger T-gate cost than QSP algorithm. This is mainly due to the compilation cost of arbitrary $Z$-axis rotation gate $R_z(\theta)$ to $T$ gates: for each $R_z(\theta)$ gate with a random phase $\theta$, we need $c_T=66$ $T$ gates on average to compile it to the accuracy of $\varepsilon=10^{-15}$. In the Trotter or Trotter-LCU algorithms, the non-Clifford gate cost mainly originates from the $R_z(\theta)$ gates: in each segment, there are roughly $\kappa_K L$ $R_z(\theta)$ gates, leading to about $\kappa_K c_T L$ compiled $T$ gates.
As a comparison, there are only few $R_z(\theta)$ gates in QSP algorithm used for the phase iteraction procedure to realize the Jacobi-Anger polynomials. The major non-Clifford gate cost lies in the compilation of state-preparation oracles and the select-$H$ gates, each of which can be realized by about $4L$ $T$ gates based on the qROM design in Ref.~\cite{babbush2018encoding}.

In \autoref{fig:n_CNOT_Power0} we compare the qubit number required to implement the QSP and the PTSC algorithms, based on the $2$-local Hamiltonian model. We can see a clear advantage of the spacial resource cost of PTSC to QSP algorithm.

\begin{figure}[htbp]
\centering
\includegraphics[width=0.95\columnwidth]{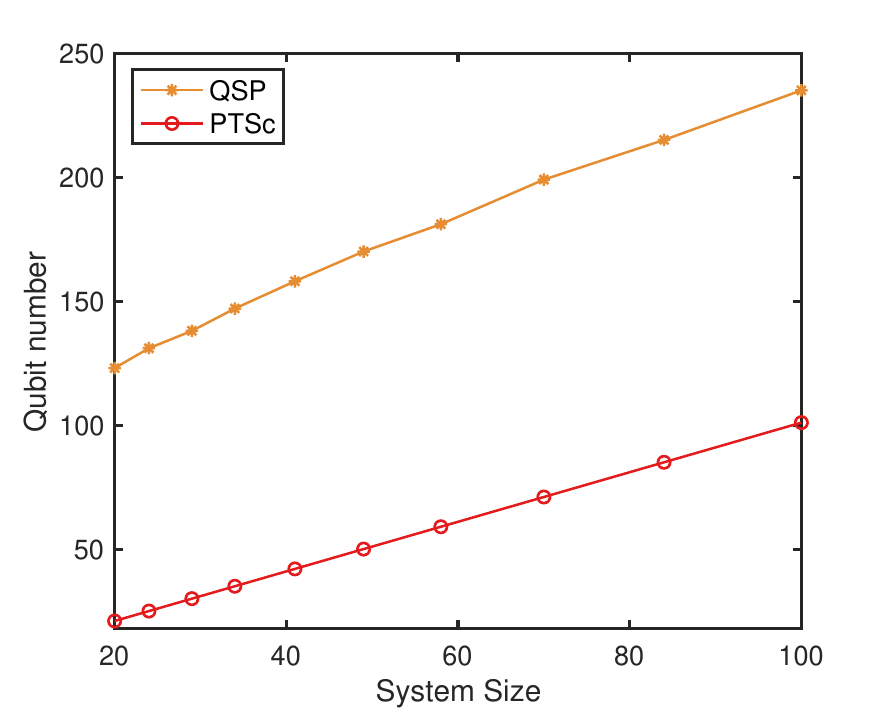}
\caption{ 
Qubit number required for simulating the generic $L$-sparse Hamiltonian with an increasing system size. The simulation is exemplified with the $2$-local Hamiltonian in \autoref{eq:2localH}.
}
\label{fig:n_CNOT_Power0}
\end{figure}

\subsection{Lattice Hamiltonians}

Now, we consider the case of lattice Hamiltonians. We consider the Heisenberg model, $H = J \sum_{i} \vec \sigma_i \vec \sigma_{i+1} + h \sum_{i} Z_i$, where $\vec \sigma_i:= (X_i, Y_i, Z_i)$ is the vector of Pauli operators on the $i$th qubit and $J=h=1$.

We compare the $\mc{CNOT}$ and $T$ gate number of the second-order NCC algorithm, QSP and fourth-order Trotter algorithm in \autoref{fig:NC_CNOT_Heis} and \autoref{fig:NC_T_Heis}, respectively. The fourth-order Trotter error analysis is based on the nested-commutator bound (Proposition~M.1 in Ref.~\cite{childs2021theory}), which is currently the tightest Trotter error bound. The performance of our second-order NCC algorithm is analyzed based on the detailed analysis in Section~\ref{sec:AppNested2nd}. We explicitly calculate the $1$-norm of the LCU formula and use the analytical bound for the accuracy analysis. Here, we do not introduce the empirical analysis for a fair comparison. Since the explicit evaluation of our fourth-order NCC algorithm is complicated, we mainly present the results for our second-order algorithm.

As addressed in \cite{childs2021theory,childs2019nearly}, fourth-order Trotter formula shows a near-optimal scaling with respect to the system size for the lattice model, which is clearly shown in \autoref{fig:NC_CNOT_Heis}.
From \autoref{fig:NC_CNOT_Heis} we can see that, our second-order NCC algorithm shows an advantage over the fourth-order Trotter algorithm. Furthermore, the $n$ and $t$ scaling of our second-order NCC algorithm is similar to the fourth-order Trotter algorithm, which is near optimal.

\begin{figure}[htbp]
\centering
\includegraphics[width=0.95\columnwidth]{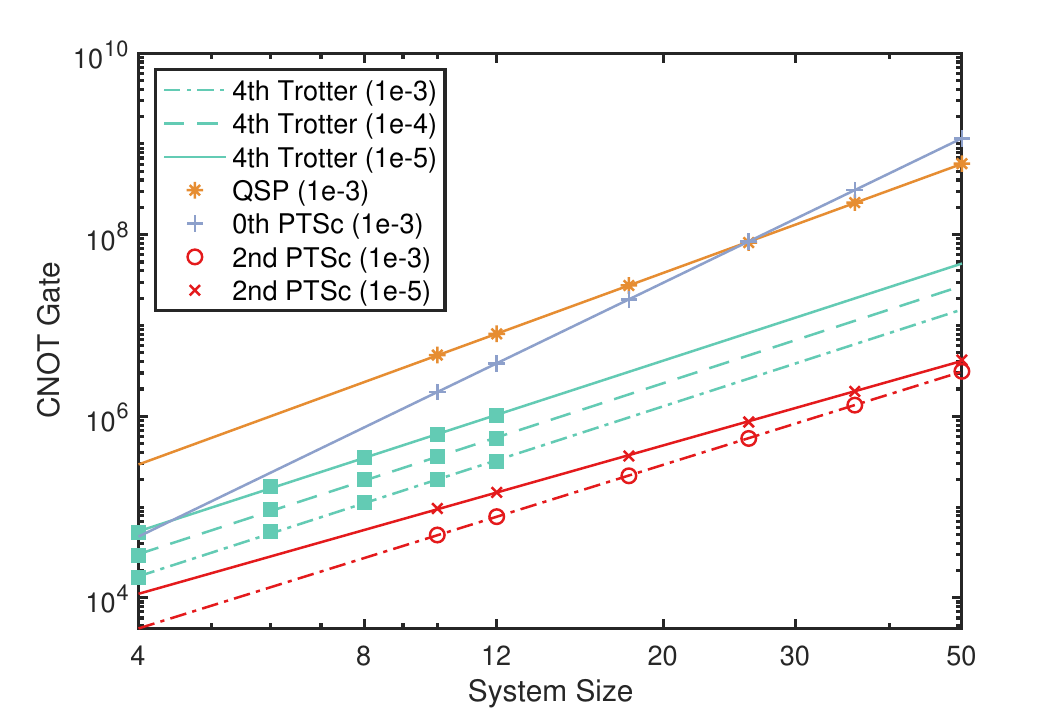}
\caption{  
CNOT gate-number estimation for simulating the Heisenberg Hamiltonian using the nested-commutator bound with an increasing system size $n$. The simulation time is set as $t = n$.
The fourth-order Trotter formula uses the nested-commutator bound shown in Proposition M.1 in \cite{childs2021theory}. Our result is based on second-order NCC algorithm with the analysis in Section~\ref{sec:AppNested2nd}.
}
\label{fig:NC_CNOT_Heis}
\end{figure}

\begin{figure}[htbp]
\centering
\includegraphics[width=0.95\columnwidth]{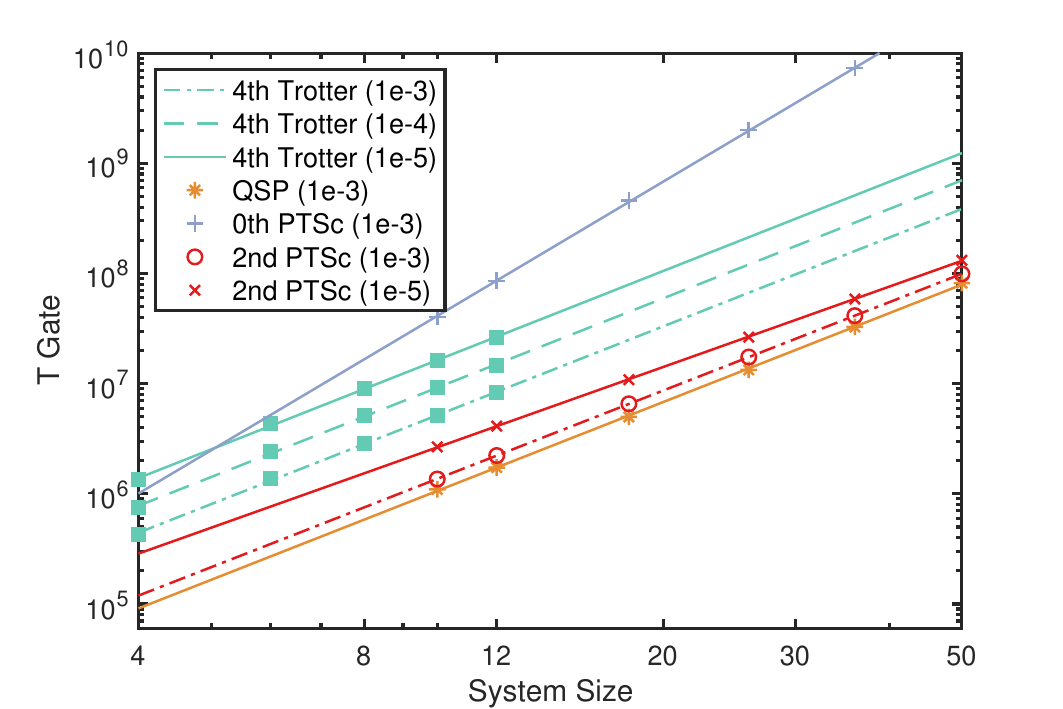}  
\caption{  
$T$ gate-number estimation for simulating the Heisenberg Hamiltonian using the nested-commutator bound with an increasing system size $n$.  The stage is set up the same as that in \autoref{fig:NC_CNOT_Heis}.
}
\label{fig:NC_T_Heis}
\end{figure}

\section{COHERENT IMPLEMENTATION} \label{sec:AppCoherent}

In the main text, we primarily focus on the random-sampling implementation of the Trotter-LCU algorithms due to its simplicity. When the block encoding of an LCU formula of $V$ is feasible on a fault-tolerant quantum computer, we can also consider the coherent implementation of the Trotter-LCU algorithms.

The gate complexity of the coherent implementation of the Trotter-LCU algorithm is determined by the segment number $\nu$, the Trotter order $K$, the number of elementary unitaries $\Gamma$ and the gate complexity of each elementary unitary in the LCU formula. The value $\Gamma$ is related to the specific compensation method we use, which is usually proportional to $L$ and the truncation order $s_c$. The gate complexity is
\begin{equation} \label{eq:gateNumCoherent}
\begin{aligned}
N_{K} = \mc{O}(\nu(\kappa_K L + \Gamma)),
\end{aligned}
\end{equation}
To estimate the segment number $\nu$, we first find a proper evolution time $x$ such that the $1$-norm $\mu_x\leq 2$. The segment number is then $\nu = t/x$.

For the PTSC formula, based on the $1$-norm bound in \autoref{thm:PTS}, we find that to ensure $\mu_x\leq 2$, it is sufficient to set $
x = \frac{1}{2 \lambda} (\frac{\ln 2}{e+c_k})^{\frac{1}{2K+2}} = \mc{O}(\frac{1}{\lambda}) $.
As a result, the number of segment $\nu=t/x = \mc{O}(\lambda t)$. From \autoref{eq:gateNumCoherent}, the overall gate complexity of the algorithm is 
\begin{equation}
N_{K}^{(p)} = \mc{O}(\nu(\kappa_K L + \Gamma)) = \mc{O}\left(\lambda t L \frac{\log(1/\varepsilon)}{\log\log(1/\varepsilon)} \right),
\end{equation}
which is the same as the case when no Trotter formula is applied~\cite{berry2015simulating}. 

For the NCC formula, based on the $1$-norm bound in \autoref{thm:NC}, we find that to ensure $\mu_x\leq 2$, we need to set $x = \frac{1}{\beta\Lambda_1}\left(\frac{2\ln2 (K!)^2}{(n\kappa)^2}\right)^{\frac{1}{2K+2}} \Rightarrow \nu = \mc{O}(n^{\frac{1}{K+1}} t)$. To achieve an overall simulation accuracy $\varepsilon$, we need
\begin{equation}
\nu \varepsilon_K^{(nc)}(x) \leq \varepsilon \Rightarrow \nu = \mc{O}( n^{\frac{2}{2K+1}} t^{1+\frac{1}{2K+1}} \varepsilon^{-\frac{1}{2K+1}} ).
\end{equation}
Therefore, it suffice to choose $\nu$ to be
\begin{equation}
\nu = \mc{O}(n^{\frac{2}{2K+1}}t^{1+\frac{1}{2K+1}}\varepsilon^{-\frac{1}{2K+1}}).
\end{equation}
In each segment, we need to implement $K$th-order Trotter formula and the $(K+1)$th- to $(2K+1)$th-order LCU formula. The number of elementary unitaries is $\mc{O}(n)$. Therefore, the overall gate complexity is
\begin{equation}
N_{K}^{(nc)} = \mc{O}(n^{1+\frac{2}{2K+1}}t^{1+\frac{1}{2K+1}}\varepsilon^{-\frac{1}{2K+1}}).
\end{equation}
Although it does not achieve the logarithmic accuracy dependence seen in standard post-Trotter algorithms~\cite{berry2015simulating,low2019Hamiltonian}, it has the unique advantage of a system-size dependence of $\mc{O}(n)$, which could be advantageous in large-scale coherent Hamiltonian simulations.

\end{appendix}

\bibliographystyle{apsrev}

\end{document}